\declaretheorem{theorem}
\declaretheorem[sibling=theorem]{fact}
\declaretheorem[sibling=theorem]{lemma}
\declaretheorem[sibling=theorem]{corollary}
\declaretheorem[style=definition]{remark}
\newtcolorbox{algorithm}[1]
{
	adjusted title = {#1},
	fonttitle = \bfseries,
  	beforeafter skip = 12pt,
}
\pgfplotsset{compat=1.15}
\newcommand{\Distribution}{\mathcal{D}}
\newcommand{\maximize}{\mathrm{maximize}}
\newcommand{\minimize}{\mathrm{minimize}}
\newcommand{\subjectto}{\mathrm{subject\ to}}
\newcommand{\dif}[1]{\,\mathrm{d}#1}
\newcommand{\CP}{\mathrm{CP}}
\newcommand{\OPT}{\mathrm{OPT}}
\newcommand{\ALG}{\mathrm{ALG}}
\newcommand{\E}{\operatorname{\mathbf{E}}}
\renewcommand{\Pr}{\operatorname{\mathbf{Pr}}}
\newcommand{\argmin}{\operatorname{\arg\min}}
\newcommand{\defeq}{\stackrel{\mathrm{\footnotesize def}}{=}}
\title{Optimal $4$-Approximation for the Correlated Pandora's Problem}
\author{
    Nikhil Bansal
    \thanks{University of Michigan. Email: bansaln@umich.edu. Supported in part by the NSF awards CCF-2327011 and CCF-2504995.}
    \and
    Zhiyi Huang
    \thanks{The University of Hong Kong. Email: zhiyi@cs.hku.hk, zhuzixua@connect.hku.hk.}
    \and
    Zixuan Zhu
    \footnotemark[2]
}
\begin{document}

\begin{titlepage}
    \thispagestyle{empty}
    \maketitle
    \begin{abstract}
        \thispagestyle{empty}
        The Correlated Pandora's Problem posed by Chawla et al.~(2020) generalizes the classical Pandora's Problem by allowing the numbers inside the Pandora's boxes to be correlated.
        It also generalizes the Min Sum Set Cover problem, and is related to the Uniform Decision Tree problem.
        This paper gives an optimal $4$-approximation for the Correlated Pandora's Problem, matching the lower bound of $4$ from Min Sum Set Cover.
    \end{abstract}

    \clearpage

    \thispagestyle{empty}
    \setcounter{tocdepth}{2}
    \tableofcontents
\end{titlepage}
\allowdisplaybreaks
\section{Introduction}
\label{sec:intro}

Information plays a crucial role in high-stakes decision-making, but is often costly to acquire.
Consider a technology company aiming to train its next-generation large language model (LLM) to achieve exceptional reasoning abilities.
The company faces key decisions: 
Should they use traditional dense transformers, the recently successful Mixture-of-Experts (MoE) approach, or a Hybrid solution? 
Should they adopt Supervised Fine-Tuning (SFT), Reinforcement Learning (RL), or both in post-training? 
These choices must be made before committing to the final training phase, which requires months of time and billions of dollars. 
While the company can run small-scale tests to estimate performance metrics, two questions emerge: 
Which architectural combinations should they test, and how do they determine when they have tested enough?

The Pandora's Problem by \citet{Weitzman/1979/original} is a classical model of decision-making with costly information acquisition.
It considers a set of Pandora's boxes (e.g., combinations of architectural decisions).
Inside each box is an unknown volume (e.g., the time of the training phase with this combination) drawn {\em independently} from some distribution.
Opening a box reveals the volume inside, but comes at a cost (e.g., the time required for a small-scale test).
At any time, the algorithm decides whether to open another box or to stop and take the minimum volume from an opened box.
We want to minimize the sum of the box-opening costs and the taken volume (e.g., to release the LLM as early as possible).

\medskip
\noindent {\bf Correlated Pandora's Problem.~}
Comparing this classical model with our motivating scenario, the assumption of independently drawn volumes stands out as a limitation. 
For example, if we find that the combination of MoE and SFT works poorly, then other combinations involving MoE or SFT are likely to underperform as well. 
\citet{ChawlaGTTZ:FOCS:2020} proposed the Correlated Pandora's Problem, where the vector of volumes of the boxes, referred to as the \emph{scenario}, is drawn from a possibly correlated distribution. We define this formally in Section \ref{sec:prel}.

Besides its practical relevance, the Correlated Pandora's Problem is also interesting for its connection with other optimization problems, including the Min Sum Set Cover (MSSC) problem \citep{FeigeLT:Algorithmica:2004} and the Uniform Decision Tree (UDT) problem \citep{ChakaravarthyPRAM:PODS:2007}.

\medskip
\noindent 
{\bf Connection to MSSC.~} We elaborate on the connection to MSSC, as this will play an important role.
In MSSC, we are given a universe $U$ of elements and a collection of subsets $S_1, S_2, \dots, S_n$ of $U$.
The union of the sets $S_i$ is $U$. 
We want to order these sets one at a time to cover all the elements, such that the average cover time of elements is minimized.\footnote{An element $e$ is covered at time $t$, if $e$ lies in the set at position $t$ in the ordering, and not in any earlier set.}

MSSC can be seen as a special case of the Correlated Pandora's Problem. 
Given any MSSC instance, let there be a box $i$ for each subset $S_i$ and a scenario for each element $e \in U$.
The cost of opening each box is $1$.
Further, box $i$'s volume in scenario $e$ is $0$ if $e \in S_i$ and $\infty$ otherwise.
Finally, we consider the uniform distribution over the scenarios.
Since the volumes are either zero or infinite, we take the first zero-volume box in each scenario $e$, and the number of boxes opened by then exactly equals the cover time of element $e$.

\subsection{Previous Approaches Using Combinatorial Arguments}
\cite{FeigeLT:Algorithmica:2004} studied MSSC and considered the natural Greedy algorithm, which at each step, picks a subset that covers the most uncovered elements.
They showed that Greedy gives a $4$-approximation, and this approximation ratio is the best possible for polynomial time algorithms, assuming $\mathrm{P} \neq \mathrm{NP}$.

We briefly describe their combinatorial amortized analysis, which has served as the foundation for the positive results on the Correlated Pandora's Problem before this paper. 
First, they consider the elements' cover time in the optimal solution in descending order, as shown in \Cref{fig:optimal-opening}.
Next, they define an appropriate price for each element as the amortized objective by Greedy---%
the elements' averaged covered time equals their averaged price.
Consider the prices in the reverse order of when the elements are covered by Greedy, as shown in \Cref{fig:alg}.%
\footnote{The orders of elements in \Cref{fig:optimal-opening,fig:alg} could be different.}
Using an elegant argument, they showed that the price of the $i$-th element in \Cref{fig:alg} is at most twice the cover time of the $\frac{i}{2}$-th element in \Cref{fig:optimal-opening}.
This yields a $4$-approximation.

\citet{ChawlaGTTZ:FOCS:2020} solved the Correlated Pandora's Problem in two steps. 
First, they order the boxes by reducing to MSSC, rounding the volumes to $0$ or $\infty$ with carefully chosen thresholds. 
Although they used a linear program (LP) similar to the ones in this paper to decide the thresholds, their order of boxes and guarantees are given by the Greedy algorithm of \citet{FeigeLT:Algorithmica:2004} and the above combinatorial argument.
Finally, they combine the order of boxes with a randomized Ski Rental stopping algorithm to get a $9.22$-approximation.

Recently, \citet{GergatsouliT:NeurIPS:2023} 
extended Weitzman's Rule to the Correlated Pandora's Problem and obtained the state-of-the-art approximation ratio of $4.43$.
Their analysis is a more sophisticated variant of \citet{FeigeLT:Algorithmica:2004}'s analysis.
Besides the scenarios' cover time in the optimal solution (\Cref{fig:optimal-opening}) and the algorithm's amortized objective (\Cref{fig:alg}), they also considered the volumes that the optimal solution takes in different scenarios in descending order (\Cref{fig:optimal-volume}).
They showed that the amortized objective of the $i$-th scenario in \Cref{fig:alg} is at most a weighted sum of the $\alpha i$-th scenario's cover time in \Cref{fig:optimal-opening} and the $\beta i$-th scenario's volume in \Cref{fig:optimal-volume}, for some appropriate weights and parameters $\alpha$ and $\beta$.

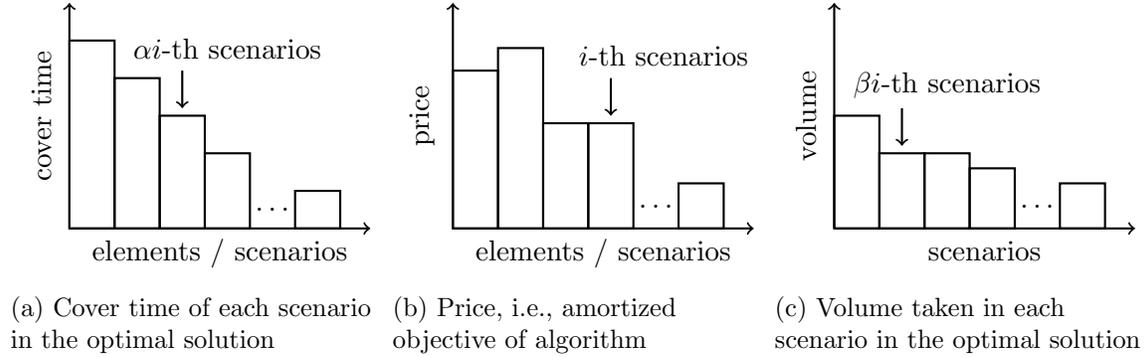
\begin{figure}
\centering
\begin{subfigure}{.3\textwidth}
\centering
\begin{tikzpicture}
	\draw[thick] (0,0) rectangle (0.6,2.5);
	\draw[thick] (0.6,0) rectangle (1.2,2);
	\draw[thick] (1.2,0) rectangle (1.8,1.5);
	\draw[thick] (1.8,0) rectangle (2.4,1);
	\node at (2.73,0.25) {\dots};
	\draw[thick] (3.0,0) rectangle (3.6,0.5);
	\draw[thick,->] (0,0) -- (0,3.0) node[midway,above=0.1cm,sloped] {cover time};
	\draw[thick,->] (0,0) -- (4,0) node[midway,below] {elements / scenarios};	
	\draw[thick,<-] (1.5,1.6) -- (1.5,2.1);
	\node at (2.1,2.4) {$\alpha i$-th scenarios};
\end{tikzpicture}
\caption{Cover time of each scenario\\ in the optimal solution}
\label{fig:optimal-opening}
\end{subfigure}	
\begin{subfigure}{.3\textwidth}
\centering
\begin{tikzpicture}
	\draw[thick] (0,0) rectangle (0.6,2.1);
	\draw[thick] (0.6,0) rectangle (1.2,2.4);
	\draw[thick] (1.2,0) rectangle (1.8,1.4);
	\draw[thick] (1.8,0) rectangle (2.4,1.4);
	\node at (2.73,0.3) {\dots};	
	\draw[thick] (3.0,0) rectangle (3.6,0.6);
	\draw[thick,->] (0,0) -- (0,3.0) node[midway,above=0.1cm,sloped] {price};
	\draw[thick,->] (0,0) -- (4,0) node[midway,below] {elements / scenarios};	
	\draw[thick,<-] (2.1,1.5) -- (2.1,2.0);
	\node at (2.8,2.3) {$i$-th scenarios};
\end{tikzpicture}
\caption{Price, i.e., amortized\\ objective of algorithm}
\label{fig:alg}
\end{subfigure}	
\begin{subfigure}{.3\textwidth}
\centering
\begin{tikzpicture}
	\draw[thick] (0,0) rectangle (0.6,1.5);
	\draw[thick] (0.6,0) rectangle (1.2,1);
	\draw[thick] (1.2,0) rectangle (1.8,1);
	\draw[thick] (1.8,0) rectangle (2.4,0.8);
	\node at (2.73,0.3) {\dots};	
	\draw[thick] (3.0,0) rectangle (3.6,0.6);
	\draw[thick,->] (0,0) -- (0,3.0) node[midway,above=0.1cm,sloped] {volume};
	\draw[thick,->] (0,0) -- (4,0) node[midway,below] {\vphantom{/}scenarios};	
	\draw[thick,<-] (0.9,1.1) -- (0.9,1.6);
	\node at (1.5,1.9) {$\beta i$-th scenarios};
\end{tikzpicture}
\caption{Volume taken in each\\ scenario in the optimal solution}
\label{fig:optimal-volume}
\end{subfigure}	
\caption{Illustration of combinatorial amortization analysis}
\label{fig:combinatorial}
\end{figure}

\subsection{Our Results and Techniques}

This paper gives an optimal $4$-approximation for the Correlated Pandora's Problem.
Our results show that the Correlated Pandora's Problem is as easy as the MSSC special case, despite having arbitrary costs and the additional challenge of deciding online when to stop in the presence of different volumes.

\medskip
\noindent {\bf LP-based Algebraic Arguments.}
We deviate from the combinatorial arguments in previous work. 
Instead, we build on the LP and randomized rounding approach by \citet{BansalBFT:SODA:2021}, who gave an alternative algorithm for MSSC that also achieves the optimal $4$-approximation.
Conceptually, the LP-based arguments appear to generalize better than the combinatorial ones, for the advantages we demonstrate below. 

Algorithms like Greedy and Weitzman's Rule pick a box in each round deterministically.
While the box may be good in some or even most scenarios, it can be bad in others.
Therefore, the combinatorial arguments must amortize across different scenarios by defining the amortized objective of each scenario and using different orders in \Cref{fig:optimal-opening,fig:alg,fig:optimal-volume} before making a comparison.
By contrast, the LP-based randomized rounding approach gives a stronger guarantee of being competitive in \emph{every scenario} against the LP benchmark.
In other words, the rounding algorithm's randomness has already amortized across the scenarios, taking care of each of them in expectation. 

Freed from the burden of amortizing across scenarios, we can focus on the stopping decision and amortization within each scenario.
To compare the algorithm's objective with the LP benchmark, we bound it by a linear expression of the LP variables.
The actual objective is non-linear.
The series of manipulations and inequalities that linearize it can be seen as an amortization across the variables for the given scenario.
For example, one of these steps sums up the expression for box-opening costs and a part of the expression for the taken volume, and then applies Jensen's inequality. 
It would be difficult to replicate this kind of algebraic manipulation in a combinatorial argument.

\medskip
\noindent {\bf Poisson Rounding and Poisson Time Horizon.~}
Another ingredient of our approach is the Poisson Rounding algorithm and the concept of Poisson time horizon.
We define a Poisson process of box arrivals and design the arrival rates based on the LP variables.
The expression of the arrival rates is inspired by the $\alpha$-Point Rounding for MSSC by \citet{BansalBFT:SODA:2021}.
Then, we open the boxes in the order of their first arrivals in the Poisson time horizon. %

By replacing $\alpha$-Point Rounding with Poisson Rounding, we ensure that many events regarding the rounding and stopping decisions are independent.
Importantly, it allows us to design the stopping algorithm and analyze the objective \emph{in the Poisson time horizon}. 
Consider the event that the algorithm opens the boxes that arrive before time $\tau$ in the Poisson time horizon, and then takes box $i$.
While the actual objective depends on the opening costs of the boxes that arrive before $\tau$, the costs for opening these boxes average out over their Poisson arrivals.

\medskip
\noindent {\bf Balanced Stopping.~}
Finally, we design a Balanced Stopping algorithm, which stops when the Poisson time is at least half the minimum volume from an opened box.
Intuitively, it balances the ``exploration cost'' of opening other boxes and the ``exploitation cost'' of the taken box. 
We derive the algorithm from first principles and based on the design of Poisson Rounding.
Informally, we argue that if an opened box $i$ and the current time $\tau$ satisfy the criteria of Balanced Stopping, then one should take box $i$ even if it was not the correct box the benchmark takes, because waiting for the correct box's arrival in Poisson Rounding would be costlier than taking box $i$ now.
See \Cref{sec:stopping-dicusssion} for a brief discussion and \Cref{sec:general-cost} for details.

We have already discussed the previous results most related to our work. 
\Cref{tab:comparison} summarizes the comparison of our algorithm with the existing ones. 
We defer the discussion on further related works to \Cref{sec:related}, and refer readers to the survey by \citet{BeyhaghiC:SIGecom:2024}.

\begin{table}[th]
\renewcommand{\thefootnote}{\fnsymbol{footnote}}
\renewcommand{\arraystretch}{1.2}
\centering
\caption{Comparison of algorithms and results}
\label{tab:comparison}
\begin{tabular}{ll@{\hskip 0.3in}ll}
	\toprule
	\textbf{Algorithm} & \textbf{Order of Boxes} & \textbf{Stopping Rule} & \textbf{Ratio} \\
	\midrule
	\textbf{This work}\,\footnotemark[1] & Poisson rounding & Balanced stopping & $4$ \\
	\citet{BansalBFT:SODA:2021}\,\footnotemark[1] & $\alpha$-Point rounding & \multirow{2}{*}{Stop at $0$ volume} & \multirow{2}{*}{$4$ (MSSC)} \\
	\citet{FeigeLT:Algorithmica:2004}\,\footnotemark[2] & Greedy \\	
	\citet{GergatsouliT:NeurIPS:2023}\,\footnotemark[2] & \multicolumn{2}{l}{Weitzman's rule for ordering and stoppping} & $4.43$ \\
	\citet{ChawlaGTTZ:FOCS:2020}\,\footnotemark[2] & Greedy\,\footnotemark[3] & Ski Rental stopping & $9.22$ \\
	\bottomrule
	\multicolumn{4}{l}{\small\footnotemark[1] LP-based algebraic argument} \\[-.5ex]
	\multicolumn{4}{l}{\small\footnotemark[2] Combinatorial argument} \\[-.5ex]
	\multicolumn{4}{l}{\small\footnotemark[3] Greedy for an MSSC instance obtained by rounding box-opening costs to either $0$ or $\infty$} 
\end{tabular}
\end{table}

\section{Preliminaries}
\label{sec:prelim}

We write $[n]$ for $\{1, 2, \dots, n\}$ with the convention that $[0]$ is the empty set.
Let $x_+$ denote the function $\max \{ x, 0 \}$.
\subsection{Correlated Pandora's Problem}
\label{sec:prel}
Consider a set of boxes $[n]$. 
Each box $i$ is associated with a volume $v_i \ge 0$ and a cost $c_i \ge 0$ for opening it. 
The costs are deterministic. 
The volume vector $v \in [0, \infty)^n$, which we refer to as the \emph{scenario}, is drawn from a distribution $\Distribution$.
If $\Distribution$ is a product distribution, this corresponds to the classical Pandora's Problem by \citet{Weitzman/1979/original}.
We consider the Correlated Pandora's Problem where $\Distribution$ is a general distribution.

Both the costs and the distribution $\Distribution$ are known to the algorithm, but the realization of $v_i$ is only revealed when the algorithm opens box $i$.
In each round when at least one box remains unopened, the online algorithm either stops and takes an opened box with the minimum volume, or continues to open another box. 
If all boxes have been opened, the algorithm takes the minimum volume.
We want to minimize the total cost used to open boxes plus the volume of the taken box. 

We remark that the original model by \citet{Weitzman/1979/original} considered maximizing the taken \emph{value} $v_i$ minus the box-opening costs.
While the two versions are essentially the same for exact algorithms such as Weitzman's Rule, the maximization version is much harder for approximation algorithms, as its objective is the difference between two positive components. 
\citet{ChawlaGTTZ:FOCS:2020} showed that the maximization version of Correlated Pandora's Problem has no bounded approximation.

\medskip
\noindent
{\bf Interpretation as Time Cost.~}
It is helpful to interpret the costs and volumes as time usage by the algorithm, such as in the motivating example of LLM training. 
Hence, we sometimes say that opening box $i$ takes time $c_i$, and spending time $v_i$ to take box $i$.
We will use the terminologies in the original definition and this interpretation interchangeably.

For each box $i$, let $t_i$ be the moment when the algorithm starts to open box $i$.
Then, we say that box $i$ is \emph{closed} from time $0$ to time $t_i$, \emph{being opened} from time $t_i$ to time $t_i + c_i$, and \emph{opened} after time $t_i$.
We define these three time intervals to be left-closed and right-open. 
Finally, we refer to $t_i$ as the \textit{start time} and $t_i+c_i$ as the \textit{completion time} of box $i$.
See \Cref{fig:status of a box} for an illustration.

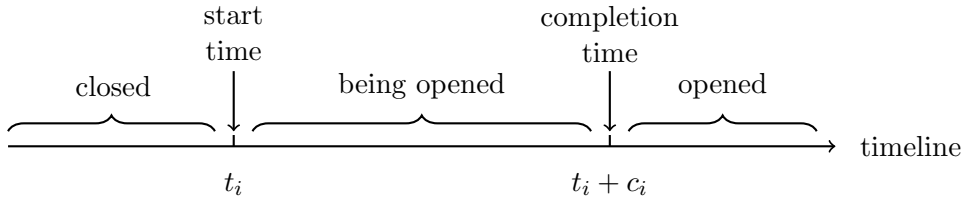
\begin{figure}[ht]
    \centering
    \begin{tikzpicture}[scale = 1]
        \draw [thick,->] (2,0) -- (13,0);
        \draw [thick,->] (5,1) -- (5,0.2);
        \draw [thick,->] (10,1) -- (10,0.2);
        \draw [thick,-] (5,0.15) -- (5,0);
        \draw [thick,-] (10,0.15) -- (10,0);
        \draw [thick,decorate,decoration={brace,amplitude=6pt,raise=0pt}] (2,0.2) -- (4.75,0.2);
        \draw [thick,decorate,decoration={brace,amplitude=6pt,raise=0pt}] (5.25,0.2) -- (9.75,0.2);
        \draw [thick,decorate,decoration={brace,amplitude=6pt,raise=0pt}] (10.25,0.2) -- (12.75,0.2);
        \node[align=center] at (5,-0.5) {$t_i$};
        \node[align=center] at (10,-0.5) {$t_i + c_i$};
        \node[align=center] at (3.4,0.8) {closed};
        \node[align=center] at (7.5,0.8) {being opened};
        \node[align=center] at (11.5,0.8) {opened};
        \node[align=center] at (5,1.5) {start \\ time};
        \node[align=center] at (10,1.5) {completion \\ time};
        \node[align=center] at (14,0) {timeline};
    \end{tikzpicture}
\caption{Timeline of box $i$}
\label{fig:status of a box} 
\end{figure}
\subsection{Partially Adaptive Algorithms}
The literature has considered three families of online algorithms: non-adaptive algorithms, fully adaptive algorithms, and partially adaptive algorithms.

A \emph{non-adaptive} algorithm opens a fixed subset of boxes in all cases, and then takes the opened box with the minimum volume.
Such algorithms are very restrictive. 

A \emph{fully adaptive} algorithm is a general online algorithm that adaptively decides whether to stop, and if not, which box to open next, based on all revealed information.
However, finding a good fully adaptive algorithm is difficult.
Computationally, \citet{DBLP:conf/approx/0001GMT23} showed that finding a constant approximation is equivalent to the open question of whether polynomial-time constant-approximation algorithms exist for the Uniform Decision Tree problem \citep[e.g.,][]{ChakaravarthyPRAM:PODS:2007,LiLM:SODA:2020}.
From a learning perspective, \citet{ChawlaGTTZ:FOCS:2020} pointed out that a fully adaptive algorithm may use the volume of the first box to decode which subsequent box has zero volume and to open that box directly, yet such correlation is impossible to learn from samples.

Following the works by \citet{ChawlaGTTZ:FOCS:2020} and \citet{GergatsouliT:NeurIPS:2023}, we will focus on the  \emph{partially adaptive} algorithms throughout this paper.
Such an algorithm fixes an order of the boxes at the beginning, and opens them one by one.
After opening each box, it adaptively decides whether to stop or open the next box.
In other words, a partially adaptive algorithm opens boxes in a non-adaptive order, but makes adaptive stopping decisions.

We remark that partially adaptive and fully adaptive algorithms are equivalent in MSSC, as the algorithm takes the first zero-volume box.
When the algorithm needs to choose a box to open, all revealed volumes must be infinite, offering no basis for an adaptive choice of the next box.

A polynomial-time algorithm is an $F$-approximation if, for any instance of the Correlated Pandora's Problem, its expected objective is at most $F$ times that of the optimal (possibly exponential time) partially adaptive algorithm. 

\section{Unit-Cost Case with Clairvoyant Stopping}
\label{sec:unit-cost}
As an overview, this section focuses on the unit-cost case, i.e., when $c_i = 1$ for all boxes $i \in [n]$, and assumes that the strategy can stop clairvoyantly at the optimal time given the order of boxes. 
Studying this special case already allows us to introduce most concepts and notations we will use later, as well as some of our new ideas that lead to the optimal $4$-approximation.
To simplify exposition, we will also assume in this section that the volumes $v_i$ are \emph{positive even integers}, to avoid floor and ceiling operations.
The final analysis in \Cref{sec:general-cost} will cover all possible values of costs and volumes.

\subsection{Linear Program Relaxation}
\label{subsec:unit-cost-lp}

By the unit-cost assumption, we consider discrete time steps, where time step $t$ corresponds to interval $[t-1,t)$.
Consider any partially adaptive algorithm.
Let $x_i(t)$ be the indicator that box $i$ is being opened during time step $t$.
For each scenario $v$, let $z_i(t \,|\, v)$ be the indicator that box $i$ is being opened during time step $t$, and is eventually taken by the algorithm. 
Relaxing the indicators to be fractions between $0$ and $1$, we get the following linear programming (LP) relaxation, which we call the \emph{Unit-Cost LP}:
\begin{align}
	\minimize \quad & \E_{v \sim \Distribution} \, \sum_{t \in [n]} \sum_{i \in [n]} \, (t + v_i) \cdot z_i(t \,|\, v)  \nonumber \tag{\textsc{Unit-Cost LP}} \\
	\subjectto \quad & \sum_{i \in [n]} x_i(t) \le 1 && \forall t \in [n]  \label{eq:unit-cost-1}\\
	& \sum_{t \in [n]} \sum_{i \in [n]} z_i(t \,|\, v) = 1 && \forall v \label{eq:unit-cost-2}\\[.5ex]
	& 0 \le z_i(t \,|\, v) \le x_i(t) && \forall i \in [n], \forall t \in [n], \forall v  \label{eq:unit-cost-3}
\end{align}

The first constraint \eqref{eq:unit-cost-1} says that the algorithm can open at most one box at each step.
The second constraint \eqref{eq:unit-cost-2} says that the algorithm must take a box in every scenario $v$.  
The third constraint \eqref{eq:unit-cost-3}  says that the event of $z_i(t \,|\, v)$ is a subset of the event of $x_i(t)$.
Finally, we only consider time steps from $1$ to $n$ by the unit-cost assumption. 
This is almost the same as the LP used by \citet{ChawlaGTTZ:FOCS:2020}, except that we omit a constraint saying that each box can be opened at most once (i.e., $\sum_t x_i(t) \leq 1$) as it is redundant for our analysis.

While this LP is conceptually easier to work with, it has uncountably many variables $z_i(t \,|\, v)$ when the support of scenarios is continuous.
We will next introduce an {\em equivalent} convex program with only $x_i(t)$ as its variables, and show how to solve it efficiently.

\medskip
\noindent {\bf Optimal $z$ Variables.~}
Observe that for any fixed scenario $v$, the best choice of $z_i(t \,|\, v)$ is entirely determined by the $x_i(t)$ variables.
In particular, given $x_i(t)$, the optimal $z_i(t \,|\, v)$ is obtained by sorting box-time pairs $(i, t)$ by the ascending order of $t + v_i$, and setting each $z_i(t \,|\, v) \le x_i(t)$ as large as possible until the second constraint is satisfied, i.e., until the sum of $z_i(t \,|\, v)$ equals $1$.

For any fixed scenario $v$, the process above reduces to finding the smallest positive integer threshold $t(v)$ such that:
\[
	\sum_{i \in [n]} \sum_{t \le t(v) - v_i} x_i(t) \ge 1
	~.
\]

This threshold $t(v)$ can also be viewed as the largest value of $t + v_i$ for which some $z_i(t \,|\, v)$ is non-zero. 
More precisely, we have that $z_i(t \,|\, v) = x_i(t)$ when $t + v_i < t(v)$, $0$ when $t + v_i > t(v)$, and some value in-between when $t + v_i = t(v)$.
\Cref{fig:definition-t-z} presents an illustration.

\begin{figure}[h]
    \centering
    \begin{tikzpicture}
        \draw[fill=blue!25] (0,0) rectangle (2,-0.5);
        \draw[thick,<->] (0,-.25) -- (2,-.25) node[pos=.5,fill=blue!25] {$v_1$};
        \draw[dotted,thick] (6,0) -- (6,-0.5);
        \draw[thick,<->] (2,-.25) -- (6,-.25) node[pos=.5,fill=white] {$t$};
        \draw[fill=blue!25] (0,-0.7) rectangle (3,-1.2);
        \draw[thick,<->] (0,-0.95) -- (3,-0.95) node[pos=.5,fill=blue!25] {$v_2$};
        \draw[fill=blue!25] (0,-1.4) rectangle (5,-1.9);
        \draw[thick,<->] (0,-1.65) -- (5,-1.65) node[pos=.5,fill=blue!25] {$v_3$};
        \draw[fill=blue!25] (0,-3.1) rectangle (4,-3.6);
        \draw[thick,<->] (0,-3.35) -- (4,-3.35) node[pos=.5,fill=blue!25] {$v_n$};
        \draw[thick] (0,0.4) -- (0,-4);
        \draw (0,0) -- (13,0);
        \node at (-1,-.25) {Box $1$};
        \draw (0,-0.5) -- (13,-0.5);
        \draw (0,-0.7) -- (13,-0.7);
        \node at (-1,-0.95) {Box $2$};
        \draw (0,-1.2) -- (13,-1.2);
        \draw (0,-1.4) -- (13,-1.4);
        \node at (-1,-1.65) {Box $3$};
        \draw (0,-1.9) -- (13,-1.9);
        \node at (-1,-2.5) {\dots};
        \node at (5,-2.5) {\dots};
        \draw (0,-3.1) -- (13,-3.1);
        \node at (-1,-3.35) {Box $n$};
        \draw (0,-3.6) -- (13,-3.6);
        \draw[dashed] (9.1,0.2) -- (9.1,-3.85);
        \node at (8.6,-4.2) {$t(v)-1$};        
        \draw[dashed] (9.8,0.2) -- (9.8,-3.85);
        \node at (10, -4.2) {$t(v)$};        
        \draw[thick,->] (6,0.6) -- (6,0.1);
        \node at (6,.9) {$z_1(t) = x_1(t)$};
        \draw[thick,->] (9.45,0.6) -- (9.45,0.1);
        \node at (9.45,.9) {$0 \le z_1(t) \le x_1(t)$};        
        \draw[thick,->] (12,0.6) -- (12,0.1);
        \node at (12.5,.9) {$z_1(t) = 0$};        
    \end{tikzpicture}
    \caption{Illustration of the threshold $t(v)$ and the optimal $z_i(t \,|\, v)$ for scenario $v$ given $x_i(t)$}
    \label{fig:definition-t-z}
\end{figure}
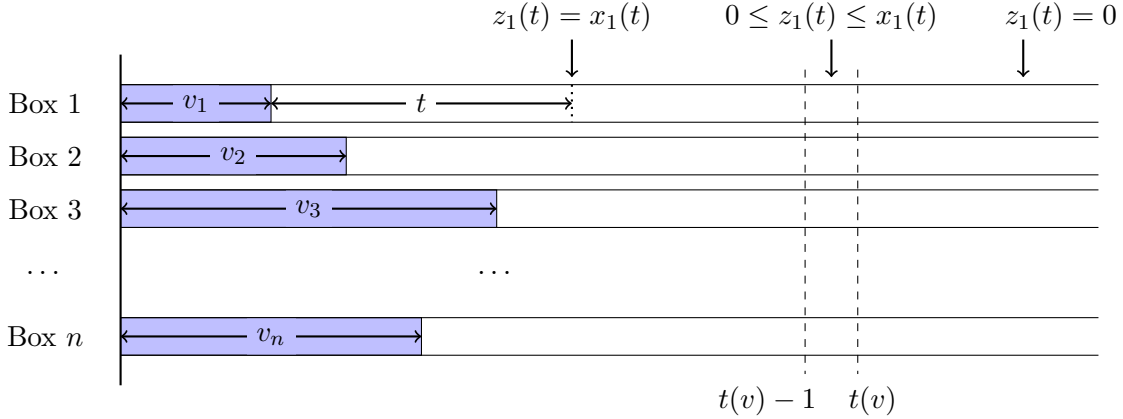

\medskip
\noindent {\bf Probabilistic Interpretation.}
Recall that for any scenario $v$, the variables $z_i(t \,|\, v)$ sum to $1$.
Treating $z_i(t \,|\, v)$ as the probability mass of a distribution over the box-time pairs, the LP objective for scenario $v$ equals the expectation of integer random variable $t + v_i$, which equals the sum of its complementary CDF:
\[
	\sum_{t \in [n]} \sum_{i \in [n]} (t + v_i) \cdot z_i(t \,|\, v) ~=~ \sum_{t = 1}^{t(v)} \bigg( 1 - \sum_{i \in [n]} \sum_{t' < t - v_i} z_i(t' \,|\, v) \bigg)
	~.
\]

By the relation between $z_i(t \,|\, v)$ and $x_i(t)$, this equals $\sum_{t = 1}^\infty \big( 1 - \sum_{i \in [n]} \sum_{t' < t - v_i} x_i(t') \big)_+$; recall that $x_+ = \max \{x, 0\}$. 
Putting together, we have the following \emph{Unit-Cost Convex Program}:%
\footnote{Even though the $\max(\cdot,0)$ in the objective can be cast as an LP, this would require introducing a variable for each scenario. To avoid this, and to distinguish from $\textsc{Unit-cost LP}$, we refer to this as a convex program.}
\begin{align*}
	\minimize \quad & \E_{v \sim \Distribution} \, \sum_{t = 1}^\infty \bigg( 1 - \sum_{i \in [n]} \sum_{t'< t - v_i} x_i(t') \bigg)_+ \tag{\textsc{Unit-Cost CP}} \\
	\subjectto \quad & \sum_{i \in [n]} x_i(t) \le 1 && \forall t \in [n] \\
	& x_i(t) \ge 0 && \forall i \in [n], \forall t \in [n]
\end{align*}

Let $\CP_{\textsc{Unit}}$ denote the optimal objective value of this convex program.
Let $\OPT$ denote the optimal expected cost achievable by partially adaptive policies.
By the discussion above, we have:

\begin{lemma}
    \label{lem:unit-cost-lp-relaxation}
    For any unit-cost instance of Correlated Pandora's Problem, we have $\CP_{\textsc{Unit}} \le \OPT$.
\end{lemma}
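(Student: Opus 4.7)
The plan is to prove $\CP_{\textsc{Unit}} \le \OPT$ in two moves: exhibit a feasible Unit-Cost LP solution whose objective equals $\OPT$, and then use the LP-to-CP reduction that was already carried out in the preceding paragraphs (the CP objective is exactly what one obtains from the LP by optimally choosing $z$ given $x$).

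Let $\ALG^*$ be an optimal partially adaptive algorithm with expected cost $\OPT$. Because $\ALG^*$ is partially adaptive, it first draws a (possibly random) permutation $\sigma$ of $[n]$ independent of the scenario $v$, and only its stopping time depends on the revealed volumes. Define
\[
    x_i(t) \defeq \Pr\bigl[\sigma(i) = t\bigr] \qquad \text{and} \qquad z_i(t \,|\, v) \defeq \Pr\bigl[\sigma(i) = t \text{ and } \ALG^* \text{ ultimately takes box } i \,\big|\, v\bigr],
\]
with probabilities over $\ALG^*$'s internal randomness. The three constraints of the Unit-Cost LP are then immediate: $\sum_i x_i(t) = 1$ since exactly one box sits in position $t$ under $\sigma$; $\sum_{t,i} z_i(t \,|\, v) = 1$ since $\ALG^*$ takes exactly one box in every scenario; and $z_i(t \,|\, v) \le \Pr[\sigma(i) = t \mid v] = x_i(t)$, where the last equality uses the independence of $\sigma$ from $v$.

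To match objectives, observe that whenever $\sigma(i) = t$ and $\ALG^*$ takes box $i$ in scenario $v$, it has opened exactly the $t$ unit-cost boxes in positions $1, \dots, t$ and then pays $v_i$, for a total cost of $t + v_i$. Summing over $i$ and $t$ and averaging over $v$ gives
\[
    \E_{v \sim \Distribution} \sum_{t, i} (t + v_i) \cdot z_i(t \,|\, v) \;=\; \OPT,
\]
so $(x, z)$ is a feasible LP solution of value $\OPT$. Combined with the LP-to-CP reduction ($\CP_{\textsc{Unit}}$ equals the Unit-Cost LP optimum, by the threshold-$t(v)$ and complementary-CDF computation already done), this yields $\CP_{\textsc{Unit}} \le \OPT$.

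The main (mild) obstacle is the choice of what $x_i(t)$ encodes. If one naively sets $x_i(t)$ to be the probability that $\ALG^*$ is \emph{actively} opening box $i$ at time $t$, which further requires not having stopped by then, the constraint $z_i(t \,|\, v) \le x_i(t)$ can fail, because the probability of having stopped at time $t$ is itself a function of $v$, and the conditional $\Pr[\text{opening } i \text{ at } t \mid v]$ may exceed the marginal $x_i(t)$. Interpreting $x_i(t)$ as the order-position probability sidesteps this, and it remains harmless for the objective because, conditional on $\ALG^*$ taking box $i$ in position $t$, the total cost is already exactly $t + v_i$ regardless of when stopping would otherwise have occurred.
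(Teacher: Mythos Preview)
Your approach matches the paper's: the lemma there is stated as an immediate consequence of the preceding discussion that the LP is a relaxation, and you have formalized exactly that (with the cleaner choice of letting $x_i(t)$ encode the order-position probability rather than the ``actually opening'' probability, as you correctly note in your last paragraph).

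One small correction: when $\sigma(i)=t$ and $\ALG^*$ ultimately takes box $i$, the algorithm need not have opened \emph{exactly} $t$ boxes --- a partially adaptive algorithm may continue opening boxes in positions $t+1,t+2,\dots$ before deciding to stop and then take the earlier box $i$ (stopping is adaptive, so it cannot know in advance that those extra opens will be wasted). Hence the realized cost is at least $t+v_i$, and your feasible LP solution has value $\le \OPT$ rather than $=\OPT$. That is still the direction you need, so the conclusion stands.
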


If the distribution over scenarios has a polynomial-size support and is given explicitly, then \textsc{Unit-Cost CP} is of polynomial size and can be solved efficiently.
Moreover, it can be solved almost optimally in general. 
Observe that (i) the variables and the polytope of the feasible region are independent of the scenarios and their distribution, and (ii) the objective is linearly separable for different scenarios.
These two observations allow us to solve the convex program approximately using stochastic gradient descent in polynomial time and using a polynomial number of scenarios sampled from distribution $\Distribution$.

\begin{lemma}
	\label{lem:unit-cost-lp-solvability}
	For any $ \varepsilon>0$, we can compute with high probability a feasible solution to \textsc{Unit-Cost CP} with objective at most $(1 + \varepsilon)\,\CP_{\textsc{Unit}}$, with time and sample complexity polynomial in $n$ and $\frac{1}{\varepsilon}$.
\end{lemma}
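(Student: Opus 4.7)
The plan is to solve \textsc{Unit-Cost CP} by sample average approximation (SAA). I would first observe that \textsc{Unit-Cost CP} is a convex program: the $n^2$ variables $x_i(t)$ live in the polytope $\mathcal{P} = \{x \ge 0 : \sum_{i} x_i(t) \le 1 \text{ for all } t \in [n]\}$ of $\ell_2$-diameter $O(\sqrt{n})$, and the objective $F(x) = \E_{v \sim \Distribution}[f(x;v)]$ is convex in $x$ because each summand $\bigl(1 - \sum_i \sum_{t' < t - v_i} x_i(t')\bigr)_+$ is the composition of the convex function $(\cdot)_+$ with an affine function of $x$. Crucially, for any fixed scenario $v$, the integrand defining $f(x;v)$ is piecewise constant in $t$ with breakpoints only at $t = t' + v_i + 1$ for $i, t' \in [n]$, so $f(x;v)$ and a subgradient $\partial_x f(x;v)$ can be evaluated exactly in $\mathrm{poly}(n)$ arithmetic operations for any given $x$ and $v$.

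Given this, I would draw $m = \mathrm{poly}(n / \varepsilon)$ independent scenarios $v^{(1)}, \dots, v^{(m)} \sim \Distribution$ and minimize the empirical average $\hat F(x) = \frac{1}{m} \sum_k f(x; v^{(k)})$ over $\mathcal{P}$. Since each $f(\cdot; v^{(k)})$ is piecewise-linear convex with $O(n^2)$ pieces, this minimization reduces to a polynomial-size linear program: introduce auxiliary nonnegative variables $y_{k,t} \ge 1 - \sum_i \sum_{t' < t - v^{(k)}_i} x_i(t')$ for each sample $k$ and each relevant breakpoint $t$, and minimize $\frac{1}{m} \sum_{k,t} y_{k,t}$ subject to the polytope constraints. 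Solving it exactly takes $\mathrm{poly}(n, m)$ time. A standard uniform-convergence argument (covering numbers of $\mathcal{P}$ combined with a concentration inequality) then gives, with high probability, $\sup_{x \in \mathcal{P}} |\hat F(x) - F(x)| \le \varepsilon \cdot M$, where $M$ is an upper bound on $f(x;v)$; consequently the LP optimizer $\hat x$ satisfies $F(\hat x) \le F(x^\star) + 2 \varepsilon M = \CP_{\textsc{Unit}} + 2 \varepsilon M$.

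The main obstacle is controlling the range $M$: a priori $f(x;v)$ can scale with $\max_i v_i$, which is not bounded by a polynomial in $n$ and $1/\varepsilon$. I would handle this by first computing a polynomial-factor upper estimate $\tilde M$ on $\CP_{\textsc{Unit}}$ from the value of a trivial feasible schedule on a small initial sample, and then using the structural inequality $f(x;v) \ge \min_i v_i$ (valid for every feasible $x$ because the integrand equals $1$ for all $t \le \min_i v_i + 1$) to argue via a Markov-type bound that scenarios with $\max_i v_i \gg \tilde M/\varepsilon$ carry total probability mass at most $\varepsilon / \mathrm{poly}(n)$ under $\Distribution$. Truncating volumes at this threshold before running SAA, or equivalently discarding the contribution of such rare heavy-tailed scenarios in the concentration step, makes $M = \mathrm{poly}(n/\varepsilon) \cdot \tilde M$ and turns the additive error $\varepsilon M$ into the desired $(1+\varepsilon)$ multiplicative guarantee after a final rescaling of $\varepsilon$.
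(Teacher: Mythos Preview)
Your strategy differs from the paper's: you propose sample-average approximation followed by solving a finite LP, whereas the paper only sketches a direct stochastic-gradient-descent argument on the population objective, citing standard convergence results. Both are reasonable plans at this level of detail, and the paper's sketch does not spell out how SGD copes with the unbounded per-scenario loss either, so your attempt to address that issue explicitly is in the right spirit.

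That said, your truncation step has a genuine gap. From the lower bound $f(x;v) \ge \min_i v_i$ and Markov's inequality you obtain only $\Pr[\min_i v_i > K] \le \CP_{\textsc{Unit}}/K$; this says nothing about $\Pr[\max_i v_i > K]$. A two-box example makes the failure concrete: take $v_1 \equiv 0$ and let $v_2$ be arbitrarily heavy-tailed. Then $\min_i v_i = 0$ always, your Markov bound is vacuous, yet $\max_i v_i$ is uncontrolled, and for the feasible $x$ that puts all its mass on box~$2$ one has $f(x;v) \approx v_2$, so the uniform-convergence step over $\mathcal{P}$ cannot go through. There is a second, related gap: even if the bad scenarios did have small mass, truncating volumes can only \emph{decrease} each $f(x;v)$, so your empirical minimizer $\hat x$ is certified only against the truncated objective $\tilde F$, and you still owe an argument bounding $F(\hat x)-\tilde F(\hat x)$ for this data-dependent point.

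One clean way to close both gaps is to restrict the feasible region so that every box is eventually opened: double the time horizon to $[2n]$ and add the constraints $\sum_{t\in[2n]} x_i(t)\ge 1$ for each $i$. Any optimal $x^\star$ on $[n]$ can be extended to satisfy these constraints without increasing the objective, so the restricted optimum is still at most $\CP_{\textsc{Unit}}$. On this polytope one has the pointwise bound $f(x;v)\le 2n+\min_i v_i$ for every feasible $x$ and every $v$, because once $t>2n+\min_i v_i$ the box achieving the minimum already contributes mass~$1$ to the inner sum. Now the only unbounded quantity is $\min_i v_i$, whose tail \emph{is} controlled by your Markov argument, and a standard median-of-means or truncated-mean estimator turns this into the multiplicative guarantee you want.
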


We view \textsc{Unit-Cost CP} as an efficient way to compute an (approximately) optimal solution $x_i(t)$, and implicitly $z_i(t \,|\, v)$. 
Although our algorithm does not use $z_i(t \,|\, v)$, our analysis will use these variables.

\subsection{(Discrete-Time) Poisson Rounding}
\label{subsec:unit-cost-rounding}

Recall that $x_i(t)$ is the probability that box $i$ is being opened at time step $t$.
We further let:
\[
	\bar{x}_i(t) ~\defeq~ \frac{1}{t} \sum_{t' = 1}^{\min\{t, n\}} x_i(t') 
\]
denote the ``average probability'' that box $i$ is being opened during the first $t$ time steps.
Since $\sum_{i \in [n]} x_i(t) \le 1$ at each time $t$, we also have that $\sum_{i \in [n]} \bar{x}_i(t) \le 1$ for each $t$.
Hence, for any time step $t$, we may treat $\bar{x}_i(t)$ as specifying a distribution over the boxes $i$. 
Consider the following rounding algorithm.

\begin{algorithm}{(Discrete-Time) Poisson Rounding}

\smallskip
For time steps $\tau \ge 1$:
\begin{enumerate}
    \item Sample a box, where box $i \in [n]$ is sampled with probability $\bar{x}_i(\lceil \frac{\tau}{2} \rceil)$.%
	\footnote{We sample a dummy box with probability $1 - \sum_i \bar{x}_i(\lceil \frac{\tau}{2} \rceil)$, which represents not opening any box.}
    \item Open the sampled box.%
    \footnote{Note that we allow opening a box multiple times. 
    The analysis holds even with such redundant costs.\label{fn:multiple-open}}
\end{enumerate}
\end{algorithm}

We will consider the time steps of Poisson Rounding as happening in a different time horizon, in contrast to the real time horizon.
We will refer to the time horizon of Poisson Rounding as the \emph{Poisson time horizon} and use Greek letters such as $\tau$ for time steps there, and the original one as the \emph{real time horizon} and use English letters such as $t$ for those time steps.
While this treatment is superfluous in the unit-cost case as these two time horizons are identical, it will be crucial in the general case, so we hope to foreshadow the relevant concepts. 

\begin{lemma}
	\label{lem:unit-cost-two-horizons}
	We spend at most $\tau$ steps in the real time horizon to open the boxes sampled in the first $\tau$ steps in the Poisson time horizon.
\end{lemma}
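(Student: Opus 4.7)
The plan is to use the per-step structure of Discrete-Time Poisson Rounding directly. In each Poisson step $\tau' \in \{1, 2, \dots, \tau\}$, the algorithm draws exactly one element: either a real box $i \in [n]$ with probability $\bar{x}_i(\lceil \tau'/2 \rceil)$, or a dummy with the remaining probability $1 - \sum_i \bar{x}_i(\lceil \tau'/2 \rceil)$, which represents not opening anything. Since $\sum_i \bar{x}_i(\cdot) \le 1$ by \eqref{eq:unit-cost-1}, this is a valid probability distribution and at most one real box is sampled per Poisson step.

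I would then invoke the unit-cost assumption. Opening any real box takes exactly one step in the real time horizon; dummies take zero. Let $K$ be the number of Poisson steps (out of the first $\tau$) in which a real box was sampled; then $K \le \tau$. The total real time spent opening everything that was sampled is exactly $K$ (counting each re-opening separately, as permitted by footnote~\ref{fn:multiple-open} of the algorithm), which is at most $\tau$, establishing the claim.

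There is essentially no obstacle: in the unit-cost setting, the Poisson and real time horizons advance in lockstep, so the correspondence reduces to a one-line counting argument. I would flag that this lemma is really a warm-up foreshadowing the analogous statement in \Cref{sec:general-cost}, where heterogeneous costs $c_i$ decouple the two horizons and the bound on real time given Poisson time will require genuine probabilistic reasoning (e.g., using that a Poisson arrival at rate tied to $\bar{x}_i / c_i$ converts Poisson steps to expected real-time work in a controlled way). Here the version needed is the trivial deterministic one.
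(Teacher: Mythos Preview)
Your proposal is correct and matches the paper's (implicit) approach: the paper states this lemma without proof, having just remarked that the Poisson and real time horizons are identical in the unit-cost case, so the trivial per-step counting you give is precisely the intended justification. Your closing paragraph anticipating \Cref{lem:general-two-horizons} is also on target.
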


For any box $i \in [n]$, we define its \emph{arrival time} $\alpha_i$ as the earliest time step when box $i$ is sampled in the Poisson time horizon.
Clearly, by \Cref{lem:unit-cost-two-horizons}, we get that: 

\begin{lemma}
	\label{lem:unit-cost-completion-time}
	For any box $i \in [n]$, it is opened by time step $\alpha_i$ in the real time horizon.
\end{lemma}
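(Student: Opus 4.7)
The plan is to deduce this directly from \Cref{lem:unit-cost-two-horizons}. By the definition of the arrival time, box $i$ is among the boxes that Poisson Rounding samples during Poisson time steps $1, 2, \dots, \alpha_i$ (indeed, it is the box sampled at Poisson step $\alpha_i$ for the first time; it may or may not also have appeared earlier via the multi-open footnote). Since the algorithm opens the sampled box immediately in the real time horizon at each Poisson step, it suffices to show that the real-time interval used to carry out the first $\alpha_i$ Poisson samplings contains the opening of box $i$ in full.

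I would invoke \Cref{lem:unit-cost-two-horizons} with $\tau = \alpha_i$, which states that the total real time used to execute the opens corresponding to the first $\alpha_i$ Poisson steps is at most $\alpha_i$. Because opening box $i$ is one of those opens and each unit-cost open takes exactly one real time step, the open of box $i$ must complete no later than real time step $\alpha_i$. Hence box $i$ is in the opened state from real time $\alpha_i$ onward, which is exactly the claim.

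There is no real obstacle here: the statement is essentially a book-keeping corollary of the previous lemma together with the unit-cost assumption that one Poisson sampling step corresponds to one real time step. The only point to be a little careful about is that $\alpha_i$ is defined as the \emph{earliest} Poisson step at which box $i$ is sampled, so the argument does not need to worry about redundant later samplings of the same box; it only needs that box $i$ is certainly opened by the time its first sampling has been processed in the real time horizon.
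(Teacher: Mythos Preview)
Your proposal is correct and matches the paper's approach exactly: the paper simply states that the lemma follows ``clearly, by \Cref{lem:unit-cost-two-horizons}'' without further elaboration, and your write-up spells out precisely this deduction.
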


In the next subsection, we will design criteria for stopping and taking the best opened box.
To do so, we will need to bound the probability that the boxes arrive later than some respective thresholds, and as a result, the algorithm pays a high cost.
The next lemma serves this purpose.

\begin{lemma}
	\label{lem:unit-cost-poisson-bound}
	For any positive integer thresholds $(\theta_i)_{i \in [n]}$, we have:
	\[
		\Pr \big[\, \forall i \in [n], \alpha_i > 2 \theta_i \,\big] 
        ~\le~ 
        \exp \Big( - 2\sum_{i \in [n]} \sum_{\tau \in [\theta_i]} \bar{x}_i(\tau) \Big)
		~.
	\]		
\end{lemma}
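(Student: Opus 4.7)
The plan is to turn the joint event into a product of per-step events, use independence of the Poisson samples across time steps, and then apply the standard $1-y \le e^{-y}$ inequality followed by a change of summation order. The factor of $2$ in the exponent should appear naturally from the fact that $\bar{x}_i(\lceil \tau/2 \rceil)$ takes the same value on two consecutive Poisson steps.

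First, I would fix the set $A_\tau \defeq \{i \in [n] : 2\theta_i \ge \tau\}$ of boxes whose threshold has not yet been crossed at Poisson step $\tau$. The event $\{\forall i \in [n], \alpha_i > 2\theta_i\}$ is exactly the event that, at every Poisson step $\tau \ge 1$, the box sampled at that step does not lie in $A_\tau$. Since the samples in different Poisson steps are drawn independently, this probability factors as
\begin{equation*}
\Pr\big[\,\forall i \in [n],\, \alpha_i > 2\theta_i\,\big] ~=~ \prod_{\tau \ge 1}\bigg(1 - \sum_{i \in A_\tau} \bar{x}_i\big(\lceil \tfrac{\tau}{2}\rceil\big)\bigg).
\end{equation*}
Only finitely many factors are strictly less than $1$, since $A_\tau = \emptyset$ once $\tau > 2\max_i \theta_i$, so convergence is not an issue.

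Next, I would apply the inequality $1 - y \le e^{-y}$ to each factor and swap the order of summation:
\begin{equation*}
\prod_{\tau \ge 1}\bigg(1 - \sum_{i \in A_\tau} \bar{x}_i\big(\lceil \tfrac{\tau}{2}\rceil\big)\bigg) ~\le~ \exp\bigg(-\sum_{\tau \ge 1} \sum_{i \in A_\tau} \bar{x}_i\big(\lceil \tfrac{\tau}{2}\rceil\big)\bigg) ~=~ \exp\bigg(-\sum_{i \in [n]} \sum_{\tau = 1}^{2\theta_i} \bar{x}_i\big(\lceil \tfrac{\tau}{2}\rceil\big)\bigg),
\end{equation*}
using that $i \in A_\tau$ iff $\tau \le 2\theta_i$. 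Finally, for each fixed $i$, I would pair up the Poisson steps $\tau = 2\sigma - 1$ and $\tau = 2\sigma$, both of which contribute $\bar{x}_i(\sigma)$, to rewrite the inner sum as $2 \sum_{\sigma = 1}^{\theta_i} \bar{x}_i(\sigma)$ and conclude.

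There is no serious obstacle here; the only thing to be careful about is the pairing in the last step, which uses crucially that the thresholds $\theta_i$ are positive integers and that the inner upper limit is $2\theta_i$ (an even number), so the ceiling map $\tau \mapsto \lceil \tau/2 \rceil$ is exactly two-to-one on $\{1, 2, \dots, 2\theta_i\}$ with image $\{1, 2, \dots, \theta_i\}$. This is precisely where the factor of $2$ in the exponent comes from, and it is the reason Poisson Rounding uses $\bar{x}_i(\lceil \tau/2 \rceil)$ rather than $\bar{x}_i(\tau)$.
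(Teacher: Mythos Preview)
Your proposal is correct and follows essentially the same approach as the paper: both factor the event over independent Poisson steps and apply $1-y \le e^{-y}$. The only cosmetic difference is that the paper pairs the steps $2\sigma-1$ and $2\sigma$ first (writing the probability as $\prod_{\tau \ge 1}\big(1 - \sum_{i:\tau\le\theta_i}\bar{x}_i(\tau)\big)^2$) and then applies the inequality, whereas you apply the inequality first and do the pairing in the exponent; the two orderings are interchangeable.
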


\begin{proof}
    By definition, this probability equals  $\prod_{\tau = 1}^\infty \big( 1 - \sum_{i : \tau \le \theta_i} \bar{x}_i(\tau) \big)^2$, which is upper bounded by the right-hand-side because $1-x \le e^{-x}$.
\end{proof}

\subsection{Clairvoyant Stopping}
\label{subsec:unit-cost-clairvoyant}

In this subsection, we ignore the online aspect of when to stop, and consider the following {\em Clairvoyant Stopping}, which picks the best box as soon as it is opened. 

\begin{algorithm}{Clairvoyant Stopping with Poisson Rounding}
\begin{enumerate}
    \item Sample arrival time $\alpha_i$ of boxes $i \in [n]$ using  (Discrete-Time) Poisson Clock Rounding.
    \item Find box $i^* = \arg\min_{i \in [n]} ~ \alpha_i + v_i$.
    \item Open boxes in ascending order of $\alpha_i$ and take box $i^*$ as soon as it is opened.
\end{enumerate}
\end{algorithm}

\medskip

\begin{theorem}
	\label{thm:unit-cost-clairvoyant}
	Clairvoyant Stopping with Poisson Rounding is $4$-competitive.
\end{theorem}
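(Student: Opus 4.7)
I bound the algorithm's cost in scenario $v$ by the random variable $X := \min_i (\alpha_i + v_i)$. By the choice $i^\ast = \arg\min_i (\alpha_i + v_i)$ and \Cref{lem:unit-cost-completion-time}, box $i^\ast$ is opened by real-time step $\alpha_{i^\ast}$, so $\ALG(v) \le \alpha_{i^\ast} + v_{i^\ast} = X$. It therefore suffices to show $\E[X \mid v] \le 4 \, L(v)$ for every $v$, where $L(v) := \sum_{t \ge 1} (1 - G(t))_+$ with $G(t) := \sum_i \sum_{t' < t - v_i} x_i(t')$ is the per-scenario objective of \textsc{Unit-Cost CP}; averaging over $v \sim \Distribution$ and invoking \Cref{lem:unit-cost-lp-relaxation} then yields the theorem.

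\noindent\textbf{From $X$ to a Poisson tail sum.} Using $\E[X \mid v] = \sum_{s \ge 0} \Pr[X > s \mid v]$ together with $\{X > s\} = \{\forall i : \alpha_i > s - v_i\}$, I pair consecutive values $s \in \{2r, 2r+1\}$. Since each $v_i$ is an even positive integer, \Cref{lem:unit-cost-poisson-bound} applies with the positive integer threshold $\theta_i = r - v_i/2$ on $\{i : v_i < 2r\}$ (for $v_i \ge 2r$ the condition $\alpha_i > 2r - v_i$ is vacuous), giving
\[
\E[X \mid v] \;\le\; 2 \sum_{r \ge 0} \exp(-2 A_r), \qquad A_r \;:=\; \sum_i \sum_{k=1}^{(r - v_i/2)_+} \bar{x}_i(k).
\]
The prefactor $2$ comes from the pairing $(2r, 2r+1)$ and the monotonicity of $s \mapsto \Pr[X > s \mid v]$.

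\noindent\textbf{The main algebraic inequality.} It remains to prove the scenario-wise inequality $\sum_{r \ge 0} e^{-2 A_r} \le 2 L(v)$, which combined with the previous bound gives $\E[X \mid v] \le 4 L(v)$. I split at $r^\ast := \min\{r : A_r > 0\}$. The ``initial'' segment $r \in \{0, \dots, r^\ast - 1\}$ contributes $r^\ast$ (each term equals $1$); a direct computation shows $r^\ast = \min_{(i, t') : x_i(t') > 0}(t' + v_i/2)$, while $L(v) \ge |\{t \ge 1 : G(t) = 0\}| = \min_{(i, t') : x_i(t') > 0}(t' + v_i)$, so the inequality $v_i/2 \le v_i$ gives $r^\ast \le L(v)$. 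For the ``tail'' segment I aim to show $\sum_{r \ge r^\ast} e^{-2 A_r} \le L(v)$ by amortizing the fast-decaying Poisson survival probabilities against the fractional LP mass $(1 - G(t))_+$, matching the Poisson index $r$ to the LP index $\approx 2r$. The main tools are the identity $\tau \bar{x}_i(\tau) = \sum_{t' \le \tau} x_i(t')$, the monotonicity of $\tau \mapsto \tau \bar{x}_i(\tau)$, and convexity of $e^{-x}$.

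\noindent\textbf{Main obstacle.} The tail comparison is the only non-routine step. A pointwise inequality between the ``averaged'' quantity $\sum_{k \le K} \bar{x}_i(k)$ appearing in $A_r$ and the ``marginal'' quantity $K \bar{x}_i(K)$ appearing in $G(t)$ fails in general---they can differ by a logarithmic factor when $x_i$ is concentrated at small $t'$---so the bound must be obtained integrally rather than termwise. The factor $2$ inside the Poisson rate $\bar{x}_i(\lceil \tau/2 \rceil)$ is precisely what creates the factor-of-two gap between the Poisson and LP time axes needed for the amortization; the final constant $4$ then decomposes as $2 \times 2$, with one $2$ from the pairing $(2r, 2r+1)$ in the tail sum of $X$ and the other from the initial-versus-tail split of $\sum_r e^{-2 A_r}$.
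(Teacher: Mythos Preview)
Your reduction to $\E[X \mid v] \le 2\sum_{r \ge 0} e^{-2A_r}$ is correct and matches the paper's route exactly (up to indexing). The gap is that you do not prove the tail inequality $\sum_{r \ge r^\ast} e^{-2A_r} \le L(v)$: your ``Main obstacle'' paragraph diagnoses the difficulty---no pointwise comparison between $\sum_{k \le K}\bar x_i(k)$ and $K\bar x_i(K)$ is available---and announces an intent (``amortize integrally'', ``match $r$ to $\approx 2r$''), but supplies no argument. This is the entire content of the theorem; everything before it is bookkeeping.

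The paper closes exactly this gap with an idea absent from your sketch: it abandons the $x$-variables and passes to the scenario-dependent $z_i(t)=z_i(t\mid v)$, which satisfy $z_i(t)\le x_i(t)$ and, crucially, $\sum_{i,t} z_i(t)=1$. After the telescoping $\sum_{\tau'=\tau}^{K-1} 1/\tau' \ge \ln(K/\tau)$ turns the exponent into a log, one replaces $x$ by $z$ (only shrinking the exponent) and applies Jensen's inequality against the probability measure $z$ to linearize:
\[
\exp\!\Bigl(-2\sum_{i,t} z_i(t)\,\ln\tfrac{\max\{\theta-v_i/2,\,t\}}{t}\Bigr)
\;\le\;
\sum_{i,t} z_i(t)\,\Bigl(\tfrac{t}{\max\{\theta-v_i/2,\,t\}}\Bigr)^{2}.
\]
Summing over $\theta$ reduces the claim to showing that the coefficient of each $z_i(t)$ is at most $4(t+v_i)$, which is a two-line computation using $\sum_{k>t}1/k^2\le 1/t$. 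Your approach of comparing against $L(v)=\sum_t(1-G(t))_+$ keeps everything in $x$, where there is no unit-mass structure, so Jensen is unavailable; this is precisely why your termwise bound fails by a log factor and why ``integrally'' remains a hope rather than a proof. The switch to $z$ and the Jensen linearization are the missing idea.
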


\begin{proof}
	It suffices to show that for any scenario $v$, the expected objective of the algorithm is at most $4$ times the LP objective for scenario $v$. Fix a scenario $v$. We will omit the dependence on $v$ and write $z_i(t)$ for $z_i(t \,|\, v)$ for brevity.
	The LP objective is then:
	\begin{equation}
		\label{eqn:unit-cost-lp-objective}	
		\sum_{i \in [n]} \sum_{t \in [n]} \big( t + v_i \big) \cdot z_i(t)
		~.
	\end{equation}
	
    By definition, the algorithm's objective is $\ALG = \min_{i \in [n]} \alpha_i + v_i$.
    Recall that we have assumed the volumes $v_i$ to be even in this section.
	For any integer threshold $\theta$, the probability that the algorithm pays at least $2\theta$ is:
	\begin{equation}
    \label{eq:ov-alg-cost}
		\Pr \big[\, \ALG \ge 2\theta \,\big] 
		~=~ 
		\Pr \Big[\, \forall i, \alpha_i \ge \big(2\theta - v_i\big)_+ \,\Big]		
        \underset{(\text{Lem.}\, \ref{lem:unit-cost-poisson-bound})}{\leq}
		\exp \bigg( - 2\sum_{i \in [n]} \sum_{\tau < (\theta - \frac{v_i}{2})_+} \bar{x}_i(\tau) \bigg)
		~.
	\end{equation}
    
    As the algorithm's expected objective can be written as $\sum_{\theta = 1}^\infty \Pr \big[ \ALG \ge \theta \big]$, and the LP objective \eqref{eqn:unit-cost-lp-objective} is linear in $z_i(t)$, we will relax and linearize the right-hand-side of \eqref{eq:ov-alg-cost} for easy comparison.
	First, we rewrite the right-hand-side by expressing $\bar{x}_i(\tau)$ in terms of $x_i(\tau)$ as follows:
	\[
		 \exp \bigg( - 2 \sum_{i \in [n]} \sum_{\tau \in [n]}  x_i(\tau) \sum_{\tau \le \tau' < (\theta - \frac{v_i}{2})_+} \frac{1}{\tau'} \bigg)
		 ~.
	\]
    
	Using that  $\frac{1}{\tau'} \ge \ln \frac{\tau'+1}{\tau'}$, the inner sum telescopes, and gives that the above is at most:
	\[
		\exp \bigg( - 2 \sum_{i \in [n]} \sum_{\tau \in [n]} x_i(\tau) \cdot \ln \frac{\max \{ (\theta - \frac{v_i}{2})_+, \tau \}}{\tau} \bigg)
		~,
	\]
	where the maximization operation ensures that the coefficient for $x_i(\tau)$ is zero for $\tau \ge (\theta - \frac{v_i}{2})_+$. 
	
	By the LP constraint $z_i(\tau) \le x_i(\tau)$, we further relax the above by replacing $x_i(\tau)$ with $z_i(\tau)$:
	\[
		\exp \bigg( - 2 \sum_{i \in [n]} \sum_{\tau \in [n]} z_i(\tau) \cdot \ln \frac{\max \{ (\theta - \frac{v_i}{2})_+, \tau \}}{\tau} \bigg)
		~.
	\]	

	Now, by the LP constraint $\sum_{i \in [n]} \sum_{t \in [n]} z_i(t) = 1$ and Jensen's inequality, this is at most:
	\[
		\sum_{i \in [n]} \sum_{\tau \in [n]} z_i(\tau) \cdot \Big( \frac{\tau}{\max \{(\theta - \frac{v_i}{2})_+, \tau\}} \Big)^2
		~.
	\]        
	
	Putting it all together, the expected objective of the algorithm is:
	\[
		2 \sum_{\theta = 1}^\infty \Pr \big[ \ALG \ge 2\theta \big] ~\le~ 2 \sum_{\theta = 1}^\infty \sum_{i \in [n]} \sum_{t \in [n]} z_i(t) \cdot \Big( \frac{t}{\max \{(\theta - \frac{v_i}{2})_+, t\}} \Big)^2
		~.
	\]
	
	For any box $i \in [n]$ and any time $t \in [n]$, the coefficient of $z_i(t)$ on the right-hand-side is:
    \[
        2 \sum_{\theta = 1}^\infty \Big( \frac{t}{\max \{(\theta - \frac{v_i}{2})_+, t\}} \Big)^2
        = 2 \, \Bigg( t + \frac{v_i}{2}  + \sum_{\theta > t + \frac{v_i}{2}} \Big( \frac{t}{\theta - \frac{v_i}{2}} \Big)^2 \Bigg)
        = 2t + v_i + 2t^2 \sum_{i = t+1}^\infty \frac{1}{i^2}
        ~.
    \]
    
    Here the first step follows by noting that the term on the left is $1$ for $\theta \leq t+v_i/2$.

	By the well-known relaxation that  $\sum_{i=t+1}^\infty \frac{1}{i^2} \le \sum_{i=t+1}^\infty \frac{1}{i(i-1)} = \frac{1}{t}$, the above coefficient is at most $4t + v_i$, which is at most $4$ times the LP coefficient $t + v_i$ of $z_i(t)$ in \Cref{eqn:unit-cost-lp-objective}.
\end{proof}

\begin{remark} 
	\label{remark:unit-cost-clairvoyant}
    Notice that	Clairvoyant Stopping with Poisson Rounding remains $4$-competitive even if we get up to $k \le 4$ times the actual volumes, i.e., $k v_i$ instead of $v_i$ from any box $i$, where we redefine $i^* = \argmin_{i \in [n]} \alpha_i + kv_i$ accordingly.
	In that case, the coefficient of $z_i(t \,|\, v)$ at the end of the proof increases from $4t + v_i$ to  $4t + kv_i$, still at most $4$ times the LP coefficient $t + v_i$.
\end{remark}

\subsection{Brief Discussion on Stopping Algorithms}
\label{sec:stopping-dicusssion}

We conclude the section with a discussion on when to stop and take the best opened box. 
This decision is trivial in the special case of MSSC, where the volumes are either zero or infinite.
In this sense, the design of the stopping algorithm is the main obstacle to extending the $4$-approximation ratio to the more general Correlated Pandora's Problem.
Below we discuss three approaches.
All of them are related to the Ski Rental problem from some perspectives, but finding the final solution requires examining the Correlated Pandora's Problem and our Poisson Rounding algorithm from first principles.

\medskip
\noindent 
{\bf First Attempt: Generalized Ski Rental.~}
The Ski Rental problem proceeds in $T$ steps, but $T$ is unknown to the algorithm.
At each step until the algorithm buys a pair of skis, it needs to decide whether to \emph{rent} at a cost of $1$ or to \emph{buy} at a cost of $B$.
We want to minimize the total cost. 

\citet{ChawlaGTTZ:FOCS:2020} observed that the stopping problem of Correlated Pandora's Problem is related to Ski Rental.
Consider a sequence of boxes such that opening the first box costs $0$, and opening each of the other boxes costs $1$.
For any $T$, consider a corresponding scenario $T$, in which the first box's volume is $B$, the next $T-1$ boxes' volumes are infinite, and the remaining boxes' volumes are $0$.
After the first box is opened for free, opening each of the next $T$ boxes corresponds to renting for another day, while stopping and taking volume $B$ corresponds to buying.
 
By generalizing the optimal $\frac{e}{e-1}$-competitive Ski Rental algorithm, \citet{ChawlaGTTZ:FOCS:2020} showed an approximation ratio of $(3+2\sqrt{2}) \cdot \frac{e}{e-1}$, where $3+2\sqrt{2}$ is the ratio of clairvoyant stopping for their order of boxes.
Although we have improved the $3+2\sqrt{2}$ ratio to $4$ using Poisson Rounding, this approach at best gives a $\frac{4e}{e-1}$-approximation.

\medskip
\noindent 
{\bf Second Attempt: Copycat Clairvoyant Stopping.~}
Given the success of Clairvoyant Stopping, and the slack in the guarantee with respect to volumes (\Cref{remark:unit-cost-clairvoyant}), our next attempt is to copy the clairvoyant decision at a cost proportional to its volume, via a ski-rental type of tradeoff.
More precisely, we propose the following \emph{Delayed Activation} algorithm:

\begin{enumerate}
	\item When a box $i$ is opened at time $\alpha_i$, mark $\alpha_i + v_i$ as the time when box $i$ can be taken. 
	\item Stop at the earliest marked time and take the corresponding box $i' = \argmin_i\, \alpha_i + v_i$.
\end{enumerate}

Delayed Activation spends $\alpha_{i'} + v_{i'}$ time to open boxes, before it takes box $i'$ with volume $v_{i'}$, yielding an objective of $\alpha_{i'} + 2v_{i'}$. Does such a delayed activation imply a $4$-approximation using \Cref{remark:unit-cost-clairvoyant} with $k = 2$?
Unfortunately and perhaps surprisingly, the answer is negative --- Delayed Activation gets objective $\alpha_{i'} + 2v_{i'}$ for the box $i' = \argmin_i \alpha_i + v_i$, but \Cref{remark:unit-cost-clairvoyant} with $k = 2$ holds for box $i^* = \argmin_i \alpha_i + 2v_i$ instead.
In the worst case, $\alpha_{i'} + 2v_{i'}$ could be much larger than $\alpha_{i^*} + 2v_{i^*}$.
For instance, consider $\alpha_{i'} = 0$, $v_{i'} = 1-\varepsilon$, $\alpha_{i^*} = 1$, and $v_{i^*} = 0$.
Delayed Activation would stop at time $1 - \varepsilon$, right before box $i^*$ arrives, and pay almost twice in the objective because $\alpha_{i'} + 2v_{i'} \approx 2$ while $\alpha_{i^*} + 2v_{i^*} = 1$.

Although the above direct application of \Cref{remark:unit-cost-clairvoyant} fails, we find an alternative factor-revealing LP approach to utilize the remark, and obtain the following result.\footnote{See Theorem \ref{thm:delayed activation} in \Cref{app:delayed-activation} for a more precise version of the theorem.}

\begin{theorem}
    A randomized version of Delayed Activation is a $4.075$-approximation.
\end{theorem}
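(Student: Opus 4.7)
I would make Delayed Activation randomized by drawing a delay factor $\lambda \in [0, 4]$ from a probability density $g$ to be chosen, and then running the same protocol as deterministic Delayed Activation: mark each opened box $i$ as \emph{activatable} at time $\alpha_i + \lambda v_i$, stop at the earliest activation, and take $i(\lambda) = \argmin_i (\alpha_i + \lambda v_i)$. The total cost in a given Poisson realization and scenario is then $\alpha_{i(\lambda)} + (1 + \lambda) v_{i(\lambda)}$. The task is to design $g$ so that the expected cost is within a factor $4.075$ of $\CP_{\textsc{Unit}}$.

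\textbf{Reformulation.} Fix a Poisson realization and a scenario $v$, and set $\phi(k) \defeq \min_i (\alpha_i + k v_i)$. Then $\phi$ is a concave, piecewise-linear, nondecreasing function with $\phi(0) \ge 0$, and since it is the lower envelope of lines with slope $v_i$, we have $\phi'(\lambda) = v_{i(\lambda)}$ at every $\lambda$ outside a finite set of kinks. Hence the algorithm's per-realization cost is $\phi(\lambda) + \phi'(\lambda)$ almost surely. Moreover, \Cref{remark:unit-cost-clairvoyant} applied with $k = 4$ yields $\E_{\mathrm{Poisson}}[\phi(4)] \le 4 \, \LP(v)$, where $\LP(v)$ is the per-scenario LP objective.

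\textbf{Choice of $g$ via a telescoping identity.} The factor-revealing question is to minimize, over densities $g$ on $[0, 4]$, the quantity
\[
\sup_{\substack{\phi \text{ concave, nondec.} \\ \phi(0) \ge 0}} \frac{1}{\phi(4)} \int_0^4 g(\lambda)\bigl(\phi(\lambda) + \phi'(\lambda)\bigr) \, d\lambda.
\]
A natural ansatz is to pick $g$ so that the integrand is a perfect derivative, collapsing the integral to boundary values; this forces $g(\lambda) \propto e^\lambda$. With the normalization $g(\lambda) = e^\lambda/(e^4 - 1)$, we get $g(\lambda)(\phi + \phi') = (e^4 - 1)^{-1} (e^\lambda \phi(\lambda))'$, and a direct integration gives
\[
    \E_{\lambda \sim g}\bigl[\phi(\lambda) + \phi'(\lambda)\bigr] = \frac{e^4 \phi(4) - \phi(0)}{e^4 - 1} \le \frac{e^4}{e^4 - 1}\, \phi(4).
\]
Taking expectation over Poisson randomness, then over $v \sim \Distribution$, and invoking \Cref{remark:unit-cost-clairvoyant}, the total expected cost is at most $\frac{4 e^4}{e^4 - 1} \, \CP_{\textsc{Unit}} \approx 4.0746 \, \CP_{\textsc{Unit}} < 4.075 \, \CP_{\textsc{Unit}}$.

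\textbf{Main obstacle.} The principal conceptual step is identifying the exponential density --- either by solving the factor-revealing LP numerically and recognizing the closed form, or by insisting that the integrand telescope so that the analysis reduces to the single boundary term $\phi(4)$ that \Cref{remark:unit-cost-clairvoyant} directly controls. Once this ansatz is in hand the remaining technicalities are routine: the finitely many kinks of $\phi$ have measure zero and do not affect the integral identity, the lower boundary term $-\phi(0) \le 0$ is favorable, and the algorithm can be realized online by sampling $\lambda$ at the start and continuously checking the activation condition after each box is opened. The same template extends to the general-cost setting in \Cref{sec:general-cost} with $\phi$ built from the corresponding LP solution.
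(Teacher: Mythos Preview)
Your argument is correct and reaches the same constant $\tfrac{4e^4}{e^4-1}$ with the same exponential density, but the route is genuinely different from the paper's. The paper works in expectation: it defines $C(k)=\E[\alpha_{i_k^*}]$ and $V(k)=\E[v_{i_k^*}]$, records the envelope constraints $C(k)+kV(k)\le C(k')+kV(k')$ (\Cref{lem:factor-revealing-different-k}), and then sets up a factor-revealing LP whose dual it solves after discretization. You instead fix a single Poisson realization and scenario, package everything into the concave envelope $\phi(\lambda)=\min_i(\alpha_i+\lambda v_i)$, and observe that with $g(\lambda)\propto e^\lambda$ the integrand $g(\lambda)(\phi+\phi')$ is exactly $(e^4-1)^{-1}(e^\lambda\phi)'$, so the whole analysis collapses to the boundary term $\phi(4)$ controlled by \Cref{remark:unit-cost-clairvoyant}.

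What each approach buys: the paper's LP framework makes it transparent that $\tfrac{4e^4}{e^4-1}$ is the \emph{best} ratio achievable by any mixture over $k$-Delayed Activation against this relaxation (the dual certifies optimality of the primal). Your telescoping argument is far shorter and, notably, does not even use the concavity or monotonicity of $\phi$ that you state --- the identity $(e^\lambda\phi)'=e^\lambda(\phi+\phi')$ holds for any absolutely continuous $\phi$, and the final inequality only needs $\phi(0)\ge 0$. So your proof is strictly more elementary, at the cost of not demonstrating that the exponential density is optimal. Two cosmetic points: the cost $\alpha_{i(\lambda)}+(1+\lambda)v_{i(\lambda)}$ is an upper bound (the algorithm may take a smaller opened volume), and the absolute continuity of $\phi$ (piecewise-linear, hence Lipschitz) is what justifies applying the fundamental theorem of calculus across the kinks.
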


Surprisingly and unfortunately, the limit of this approach still falls short of the optimal $4$-approximation by a small gap of $0.075$.
Nonetheless, it already improves the state-of-the-art ratio of $4.43$ by \citet{GergatsouliT:NeurIPS:2023}. 
Since we already have a $4$-approximation using a different approach, we defer the formal definition of Delayed Activation and its analysis to \Cref{app:delayed-activation}.

Delayed Activation can be seen as a generalization of a $2$-competitive deterministic Ski Rental algorithm, which rents for up to $B$ days and then buys.
In the above interpretation of Ski Rental as an instance of Correlated Pandora's Problem, the first box with volume $v_i = B$ is opened at time $\alpha_i = 0$.
Hence, it will be taken by Delayed Activation after step $\alpha_i  + v_i = B$, unless we get a zero-volume box before that, which corresponds to $T \le B$.

\medskip
\noindent 
{\bf Final Solution: Balanced Stopping.~}
To get an optimal $4$-approximation algorithm, we design another algorithm called \emph{Balanced Stopping}, which takes an opened box $i$ when the time the algorithm has used to open other boxes is comparable to the time to be spent on taking box $i$.
In this sense, it balances the ``exploration cost'' of opening boxes and the ``exploitation cost'' of the box it takes. 
For the discussion in this section, we consider an inaccurate version of Balanced Stopping, omitting a complication that arises in the general case:
\begin{enumerate}
	\item Whenever a box $i$ is opened at time $\alpha_i$, mark the time $\max\{ \alpha_i, v_i \}$ as the earliest time when this box can be taken. 
	\item Stop at the first marked time and take the corresponding box $i^* = \argmin_i \max \{ \alpha_i, v_i \}$.
\end{enumerate}

To understand why $\max\{ \alpha_i, v_i \}$ is the natural time to take box $i$ under Poisson Rounding, consider the following thought experiment.
Recall that Poisson Rounding sets the arrival rates of the boxes based on variables $x_i(t)$, oblivious to the realization of the scenario.
Given any scenario $v$, we partition the box arrivals into ``good'' ones triggered by the scenario-specific variables $z_i(t \,|\, v)$, and ``bad'' ones triggered by the {\em slacks} $x_i(t) - z_i(t \,|\, v)$.
Intuitively, we want to take a ``good'' arrival and compare its cost to the LP coefficient of the corresponding variable $z_i(t \,|\, v)$.
In contrast, taking a ``bad'' arrival is risky because the slacks do not contribute to the LP objective for scenario $v$.
However, the algorithm cannot distinguish ``good'' and ``bad'' arrivals.
Instead, let us pessimistically assume the box $i$ at hand to be ``bad'', and consider whether to take it at time $t$?
Here is the key observation: from any time $t$ onwards, Poisson Rounding takes $t$ amount of time in expectation to get another ``good'' arrival.
Therefore, even if we anticipate the next ``good'' arrival to have zero volume, we would still want to take the potentially ``bad'' box $i$ if $v_i \le t$.
Finally, we take the max between $v_i$ and $\alpha_i$ as we can only take a box after its arrival time $\alpha_i$ in the real time horizon.

Interestingly, Balanced Stopping can also be seen as a generalization of the $2$-competitive deterministic Ski Rental algorithm, even though the above thought experiment is unrelated to the Ski Rental problem.
In the interpretation of Ski Rental as an instance of Correlated Pandora's Problem, we either take the first box opened at time $\alpha_i = 0$, or a later box with a volume of $0$.
Since $\max\{\alpha_i, v_i\} = \alpha_i + v_i$ when $\alpha_i = 0$ or $v_i = 0$, Balanced Stopping and Delayed Activation coincide in Ski Rental, and both degenerate to the $2$-competitive deterministic algorithm.

\section{General Case}
\label{sec:general-cost}

We now consider general instances of the Correlated Pandora's Problem.
We first describe in \Cref{subsec:general-cost-convex-program} the linear and convex programs that extend those in \Cref{sec:unit-cost} for the unit cost case.
We then describe our algorithm in Sections \ref{subsec:general-cost-rounding} and \ref{subsec:general-cost-balanced}, which consists of (i) a procedure to round the fractional LP solution, and
(ii) a stopping rule that decides when to stop opening new boxes and take the best box at hand.
Finally, in Section \ref{subsec:general-approximation}, we show that the overall algorithm achieves a $4$-approximation.

\subsection{Linear and Convex Program Relaxations}
\label{subsec:general-cost-convex-program}

Given a general instance of the problem, we consider a natural LP relaxation as follows.
It will be convenient to work in continuous time.%
\footnote{This can be easily discretized with arbitrarily small loss, see e.g., the proof of Lemma \ref{lem:general-cost-lp-solvability}.}
For each box $i$ and time $t$, variable $X_i(t)$ denotes the probability that box $i$'s start time is at most $t$, or equivalently, $X_i(t)$ is the cumulative density function (CDF) of box $i$'s start time.
Similarly, variable $Z_i(t \,|\, v)$ denotes the probability that box $i$'s start time is at most $t$, and the algorithm takes box $i$ in scenario $v$. 

\smallskip

With these variables, consider the following \emph{General-Cost Linear Program} relaxation:
\begin{align}
    \minimize \quad & \E_{v \sim \Distribution} \, \sum_{i \in [n]} \int_0^\infty (t+c_i+v_i) \dif{Z_i(t \,|\, v)} 
    \tag{\textsc{General LP}} \\
	\subjectto \quad %
        & \sum_{i \in [n]} \Big(X_i(t) - X_i \big((t-c_i)_+ \big) \Big) \leq 1 && \forall t \ge 0 
        \label{eq:general-lp-1} \\
        & \sum_{i \in [n]} Z_i(\infty \,|\,v) = 1 && \forall v \in \Distribution 
        \label{eq:general-lp-2} \\
        & 0 \leq Z_i(t' \,|\, v) - Z_i(t \,|\, v) \leq X_i(t') -X_i(t) && \forall v \in \Distribution, \forall i \in [n], \forall t \leq t'
        \label{eq:general-lp-3} \\[2ex]
        & 0 \leq Z_i(t \,|\, v), X_i(t) \leq 1 && \forall v \in \Distribution, \forall i \in [n], \forall t \ge 0 \label{eq:general-lp-4}
\end{align}

Let us explain the meaning of these constraints and see why the above linear program is a relaxation of the Correlated Pandora's Problem.

Constraint \eqref{eq:general-lp-1} ensures that at most one box is being opened at any time $t$. 
To see this, note that $X_i(t) - X_i\big((t-c_i)_+\big)$ is the probability that box $i$'s start time is between $t$ (inclusive) and $(t-c_i)_+$ (exclusive)  (as $X_i(t)$ is the CDF of box $i$'s start time), and thus is the probability that box $i$ is being opened at time $t$.

Constraint \eqref{eq:general-lp-2} says that the algorithm must take one box at the end in every scenario.

The first inequality in Constraint \eqref{eq:general-lp-3} says that
$Z_i(t \,|\, v)$ and $X_i(t)$ are non-decreasing over $t$ because they are CDFs of different events.
The second inequality says that the event captured by $Z_i(t \,|\, v)$ is a subset of the event captured by $X_i(t)$ --- during any time interval $(t,t']$, the probability of the former event is at most that of the latter.

Constraint \eqref{eq:general-lp-4} are the lower and upper bounds for probability measures. 

\medskip
\noindent
{\bf Efficiently Solvable Convex Program.~}
Next, we derive a succinct convex program with only $X_i(t)$ as its variables.
Given any $X_i(t)$, for any scenario $v$, the optimal choice of $Z_i(t \,|\, v)$ assigns probability masses to box-time pairs $(i, t)$ to minimize the expectation of $t + c_i + v_i$, subject to second and third constraints.
Similar to the unit-cost case, this reduces to finding the minimum threshold $t(v)$ such that the total probability mass of all boxes $i \in [n]$ according to $X_i(t)$, for their respective time intervals satisfying $t + c_i + v_i \le t(v)$, is at least $1$. That is,
   \[	
   \sum_{i \in [n]} X_i \big( (t(v) - c_i - v_i)_+ \big) \ge 1
   ~.
   \]

Then, for each $i$, it is optimal to let $Z_i(t \,|\, v) = X_i(t)$ for $t < t(v) - v_i - c_i$, and let $Z_i(t \,|\, v)$ be some constant that is at most $X_i\big( (t(v) - v_i - c_i)_+ \big)$ for $t \ge t(v) - v_i - c_i$, subject to the constraint across $i \in [n]$ that $\sum_{i \in [n]} Z_i(t \mid v) = 1$.
\Cref{fig:definition-t-z-general} presents an illustration.

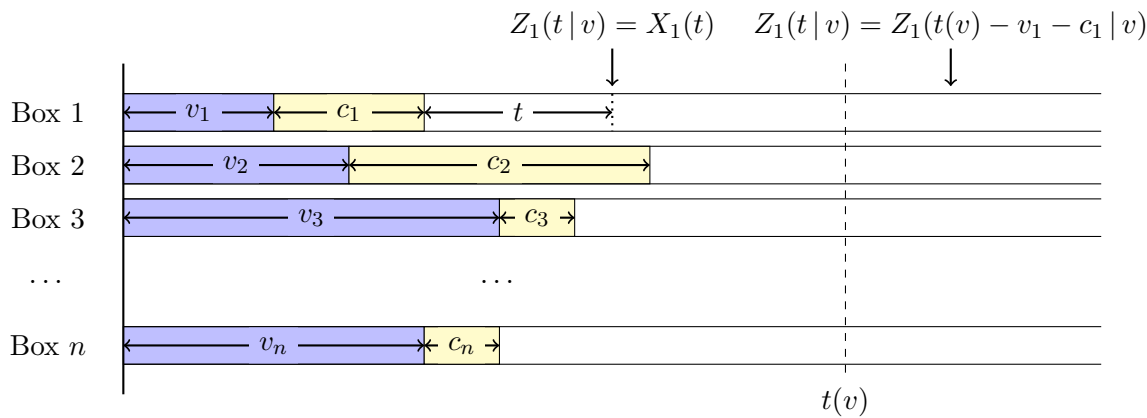
\begin{figure}[h]
    \centering
    \begin{tikzpicture}
        \draw[fill=blue!25] (0,0) rectangle (2,-0.5);
        \draw[thick,<->] (0,-.25) -- (2,-.25) node[pos=.5,fill=blue!25] {$v_1$};
        \draw[fill=yellow!25] (2,0) rectangle (4,-0.5);
        \draw[thick,<->] (2,-.25) -- (4,-.25) node[pos=.5,fill=yellow!25] {$c_1$};
        \draw[dotted,thick] (6.5,0) -- (6.5,-0.5);
        \draw[thick,<->] (4,-.25) -- (6.5,-.25) node[pos=.5,fill=white] {$t$};
        \draw[fill=blue!25] (0,-0.7) rectangle (3,-1.2);
        \draw[thick,<->] (0,-0.95) -- (3,-0.95) node[pos=.5,fill=blue!25] {$v_2$};
        \draw[fill=yellow!25] (3,-0.7) rectangle (7,-1.2);
        \draw[thick,<->] (3,-0.95) -- (7,-0.95) node[pos=.5,fill=yellow!25] {$c_2$};        
        \draw[fill=blue!25] (0,-1.4) rectangle (5,-1.9);
        \draw[thick,<->] (0,-1.65) -- (5,-1.65) node[pos=.5,fill=blue!25] {$v_3$};
        \draw[fill=yellow!25] (5,-1.4) rectangle (6,-1.9);
        \draw[thick,<->] (5,-1.65) -- (6,-1.65) node[pos=.5,fill=yellow!25] {$c_3$};    
        \draw[fill=blue!25] (0,-3.1) rectangle (4,-3.6);
        \draw[thick,<->] (0,-3.35) -- (4,-3.35) node[pos=.5,fill=blue!25] {$v_n$};
        \draw[fill=yellow!25] (4,-3.1) rectangle (5,-3.6);
        \draw[thick,<->] (4,-3.35) -- (5,-3.35) node[pos=.5,fill=yellow!25] {$c_n$};   
        \draw[thick] (0,0.4) -- (0,-4);
        \draw (0,0) -- (13,0);
        \node at (-1,-.25) {Box $1$};
        \draw (0,-0.5) -- (13,-0.5);
        \draw (0,-0.7) -- (13,-0.7);
        \node at (-1,-0.95) {Box $2$};
        \draw (0,-1.2) -- (13,-1.2);
        \draw (0,-1.4) -- (13,-1.4);
        \node at (-1,-1.65) {Box $3$};
        \draw (0,-1.9) -- (13,-1.9);
        \node at (-1,-2.5) {\dots};
        \node at (5,-2.5) {\dots};
        \draw (0,-3.1) -- (13,-3.1);
        \node at (-1,-3.35) {Box $n$};
        \draw (0,-3.6) -- (13,-3.6);
        \draw[dashed] (9.6,0.4) -- (9.6,-3.8);
        \node at (9.6, -4.1) {$t(v)$};        
        \draw[thick,->] (6.5,0.6) -- (6.5,0.1);
        \node at (6.5,.9) {$Z_1(t \,|\, v) = X_1(t)$};
        \draw[thick,->] (11,0.6) -- (11,0.1);
        \node at (11,.9) {$Z_1(t \,|\, v) = Z_1(t(v) -v_1 - c_1 \,|\, v)$};        
    \end{tikzpicture}
    \caption{Illustration of the threshold $t(v)$ and the optimal $Z_i(t\,|\,v)$ for scenario $v$ given $X_i(t)$}
    \label{fig:definition-t-z-general}
\end{figure}

Based on this relation between $X_i(t)$ and $Z_i(t \,|\, v)$, a standard computation similar to that in the unit-cost case gives:
\begin{align*}
	 \sum_{i \in [n]} \int_0^\infty (t+c_i+v_i) \dif{Z_i(t \,|\, v)} 
	 &
	 ~=~
	 \int_0^\infty \bigg( 1 - \sum_{i \in [n]} Z_i\big((t - v_i - c_i)_+ \,|\, v\big) \bigg) \dif{t} \\
	 &
	 ~=~
	 \int_0^\infty \bigg( 1 - \sum_{i \in [n]} X_i\big((t - v_i - c_i)_+\big) \bigg)_+ \dif{t} 
	 ~.
\end{align*}

Given these observations, we derive the following General-Cost Convex Program:
\begin{align}
    \minimize \quad & \E_{v \sim \Distribution} \, \int_0^\infty \bigg( 1 - \sum_{i \in [n]} X_i\big((t - v_i - c_i)_+\big) \bigg)_+ \dif{t}
    \tag{\textsc{General CP}} \\
	\subjectto \quad %
        & \sum_{i \in [n]} \Big(X_i(t) - X_i \big((t-c_i)_+ \big) \Big) \leq 1 && \forall t \ge 0 
        \label{eq:general-lp2-1} \\
        & X_i(t') - X_i(t) \le 0 && \forall t \ge t' \ge 0 \\[2ex]
        & 0 \leq X_i(t) \leq 1 &&  \forall i \in [n], \forall t \ge 0 \notag
\end{align}

Let $\CP_{\textsc{General}}$ denote the optimal objective value of this convex program. 
We have:

\begin{lemma}
    \label{lem:general-cost-lp-relaxation}
    For any instance of the Correlated Pandora's Problem, we have $\CP_{\textsc{General}} \le \OPT$.
\end{lemma}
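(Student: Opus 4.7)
The plan is to construct a feasible solution of \textsc{General CP} whose objective is at most $\OPT$. Since \textsc{General CP} was obtained from \textsc{General LP} by analytically minimizing over $Z$ at each fixed $X$ (the threshold construction illustrated in \Cref{fig:definition-t-z-general}), it suffices to exhibit a feasible pair $(X, Z)$ for \textsc{General LP} whose objective is at most $\OPT$; the CP value at the induced $X$ is then no larger by construction. Fix an optimal partially adaptive policy $\ALG$ with expected cost $\OPT$. For each box $i$, let $X_i(t)$ be the probability, taken over $v \sim \Distribution$ and any internal randomness of $\ALG$, that box $i$'s start time is at most $t$ (with start time $+\infty$ for unopened boxes); similarly let $Z_i(t \,|\, v)$ be the probability that box $i$'s start time is at most $t$ \emph{and} $\ALG$ takes box $i$ in scenario $v$.

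All LP constraints can be verified directly from the definitions. Each $X_i$ and $Z_i(\cdot \,|\, v)$ is a CDF, giving \eqref{eq:general-lp-4} and the monotonicity part of \eqref{eq:general-lp-3}. The second inequality of \eqref{eq:general-lp-3} is the event containment ``taken in $v$ and started in $(t,t']$'' $\subseteq$ ``started in $(t,t']$''. Constraint \eqref{eq:general-lp-2} holds because $\ALG$ always takes exactly one box (if it never stops voluntarily, it takes the minimum volume after opening all boxes). For \eqref{eq:general-lp-1}, observe that $X_i(t) - X_i((t-c_i)_+)$ is precisely the probability that $i$'s start time lies in $((t-c_i)_+, t]$, i.e., that $i$ is being opened at time $t$; since at most one box is being opened at any instant, summing over $i$ yields the constraint.

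The key inequality is that the LP cost lower-bounds the algorithm's cost. Fix a scenario $v$ and let $(I, T)$ denote the random pair (taken box, start time of taken box) under $\ALG$. Its true cost on $v$ equals (total opening cost in $v$) $+\, v_I \ge T + c_I + v_I$, since $T + c_I$ is $I$'s completion time and is thus a lower bound on the total opening cost. Therefore
\begin{equation*}
\sum_{i \in [n]} \int_0^\infty (t + c_i + v_i) \, dZ_i(t \,|\, v) \;=\; \E\bigl[T + c_I + v_I \,\big|\, v\bigr] \;\le\; \E\bigl[\cost(\ALG) \,\big|\, v\bigr],
\end{equation*}
and averaging over $v \sim \Distribution$ shows the LP objective of our $(X, Z)$ is at most $\OPT$. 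Combined with the reduction from LP to CP in the first paragraph, this gives $\CP_{\textsc{General}} \le \OPT$. The only nontrivial bookkeeping is the integral identity that rewrites the LP objective into the closed-form CP integrand under the optimal threshold assignment of $Z$, but this is a routine tail-integration/complementary-CDF computation that mirrors the unit-cost derivation already carried out in \Cref{sec:unit-cost}, so I do not expect any real obstacle.
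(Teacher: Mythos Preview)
Your overall strategy matches the paper's: exhibit a feasible $(X,Z)$ for \textsc{General LP} induced by the optimal partially adaptive policy, check its objective is at most $\OPT$, and conclude via the LP-to-CP reduction. The paper in fact states the lemma without a separate proof, relying on the preceding discussion, so your write-up is more explicit than the paper's.

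There is, however, a genuine gap in your definition of $X_i(t)$. You take the probability over both $v\sim\Distribution$ and the algorithm's randomness, declaring the start time to be $+\infty$ when the box is never opened. With that choice, the containment you invoke for \eqref{eq:general-lp-3} compares a scenario-$v$ event on the $Z$ side against a scenario-\emph{averaged} quantity on the $X$ side, and this can fail. Concretely: take two unit-cost boxes and two equiprobable scenarios with $(v_1,v_2)=(\infty,0)$ and $(v'_1,v'_2)=(0,\infty)$. The optimal deterministic policy opens box~$1$ first and stops immediately if $v_1=0$. In scenario $v$, box~$2$ is opened at time~$1$ and taken, so $Z_2(1\,|\,v)-Z_2(0\,|\,v)=1$; yet box~$2$ is opened in only half the scenarios, so under your definition $X_2(1)-X_2(0)=\tfrac12$, violating \eqref{eq:general-lp-3}.

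The fix is simple: define $X_i(t)$ using the \emph{scheduled} start time of box~$i$, i.e., the time it would begin if the policy never stopped. For a partially adaptive policy this depends only on the (possibly random) ordering and is independent of $v$, so the event ``scheduled in $(t,t']$ and taken in $v$'' is genuinely contained in ``scheduled in $(t,t']$'', and \eqref{eq:general-lp-3} holds for every $v$. Constraint \eqref{eq:general-lp-1} still holds because exactly one box is scheduled to be opening at each instant. Your key inequality $T+c_I\le\text{(total opening cost)}$ is unaffected, since the scheduled and actual start times of the \emph{taken} box coincide. With this correction the rest of your argument goes through.
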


By standard discretization and sampling, this program can be solved efficiently to an arbitrary accuracy.
We defer the proof of the following  \Cref{lem:general-cost-lp-solvability} to \Cref{app:general-cost}.

\begin{lemma}
\label{lem:general-cost-lp-solvability}
	For any $\varepsilon >0$, we can compute with high probability a feasible solution to \textsc{General CP} with objective at most $(1 + \varepsilon)\,\CP_{\textsc{General}}$, with time and sample complexity polynomial in $n$, $1/\varepsilon$, and $\frac{\max_i c_i}{\min_i c_i}$. 
\end{lemma}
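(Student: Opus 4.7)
The plan is to reduce \textsc{General CP} to a polynomial-size convex program through (i) time discretization, (ii) scenario sampling, and (iii) polynomial-time convex optimization on the resulting finite problem. After rescaling, assume $\min_i c_i = 1$ and write $C = \max_i c_i / \min_i c_i$.

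\emph{Discretization and truncation.} The trivial strategy of opening every box in some fixed order shows $\CP_{\textsc{General}} \le \sum_i c_i + \E_v[\min_i v_i]$, so any approximately optimal $X_i$ can be assumed to satisfy $X_i(T) = 1$ for $T := \sum_i c_i \le nC$. For $t \ge T$, the integrand collapses to the scenario-dependent quantity $(1 - |\{i : v_i + c_i \le t\}|)_+$, whose integral over $[T, \infty)$ equals $(\min_i(v_i + c_i) - T)_+$ and is computable exactly from any scenario sample. On $[0, T]$, restrict each $X_i$ to be piecewise constant on a uniform grid of spacing $\delta = \Theta(\varepsilon / (n^2 C))$; a standard rounding argument (e.g., replacing $X_i$ on each grid interval by its value at the left endpoint, then modestly rescaling opening rates to restore feasibility of constraint~\eqref{eq:general-lp2-1}) produces a feasible piecewise-constant solution whose integrand is perturbed by at most $O(n\delta)$ pointwise, yielding additive error $O(Tn\delta) = O(\varepsilon)$ in the head integral. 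Using $\CP_{\textsc{General}} \ge \min_i c_i = 1$, this is an $O(\varepsilon)$ multiplicative perturbation, and the resulting program has $nT/\delta = \mathrm{poly}(n, 1/\varepsilon, C)$ variables.

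\emph{Sampling and optimization.} I would replace the expectation over $\Distribution$ by an empirical mean over $m = \mathrm{poly}(n, 1/\varepsilon, C)$ i.i.d.\ samples. Since the head integral is bounded by $T$ per scenario and $\CP_{\textsc{General}} \ge 1$, Hoeffding combined with a union bound over an $\varepsilon$-net of discretized feasible $X$ (whose log-size is polynomial) gives a $(1+\varepsilon)$-approximation of the head expectation with high probability. The tail term is a nonnegative function of $v$ alone with expectation at most $\E_v[\min_i v_i] \le \CP_{\textsc{General}}$, and is estimated to the same relative accuracy by essentially the same argument. The resulting empirical program has polynomially many variables and a piecewise-linear convex objective (a sum of $\max(\cdot, 0)$ of affine forms) subject to polyhedral constraints, so it can be solved in polynomial time via stochastic gradient descent (as suggested for the unit-cost case in \Cref{lem:unit-cost-lp-solvability}) or interior-point methods.

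\emph{Main obstacle.} The delicate step is to design the time discretization so that both feasibility of~\eqref{eq:general-lp2-1} is preserved and the per-scenario integrand error is at most $O(\varepsilon)$ uniformly in $v$, without any dependence on the volume distribution. Isolating the tail on $[T, \infty)$ as an analytically computable per-scenario quantity is what keeps the sample and time complexity free of any dependence on volumes; all remaining analysis then takes place on the bounded head interval $[0, T]$ with $T \le nC$, where everything is polynomial in $n$, $1/\varepsilon$, and $C$.
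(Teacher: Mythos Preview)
Your high-level plan---discretize time to a polynomial grid and solve the resulting finite stochastic convex program by sampling scenarios---matches the paper's. The paper's execution is more direct: it rounds both costs \emph{and volumes} to integer multiples of $\varepsilon\,c_{\min}$, bounds the start-time horizon by $n(c_{\max}+\varepsilon c_{\min})$, and then invokes stochastic gradient descent on the discretized program, without separating the integral into head and tail or running a separate Hoeffding-plus-net argument.

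Your head/tail decomposition contains a concrete error. You assert that for $t \ge T$ the integrand $\big(1 - \sum_i X_i((t - v_i - c_i)_+)\big)_+$ equals $(1 - |\{i : v_i + c_i \le t\}|)_+$, a quantity depending only on $v$. But when $T \le t < T + v_i + c_i$, the argument $(t - v_i - c_i)_+$ lies in $[0,T)$, where $X_i$ is an optimization variable taking arbitrary values in $[0,1]$; so the tail integral over $[T,\infty)$ is \emph{not} a function of $v$ alone, and for feasible $X$ with $X_i(T)=1$ it can exceed your formula $(\min_i(v_i+c_i)-T)_+$ by as much as $T$. Your underlying worry---that unbounded volumes obstruct Hoeffding or SGD---is legitimate (and the paper's terse SGD citation is equally silent on it), but the clean fix is to subtract a per-scenario constant rather than split the integral: from the equivalent LP form $f(X,v) = \sum_i \int (t+c_i+v_i)\,dZ_i(t\,|\,v)$ one gets $\min_i(c_i+v_i) \le f(X,v) \le T + \min_i(c_i+v_i)$ whenever $X_i(T)=1$, so $g(X,v) := f(X,v) - \min_i(c_i+v_i) \in [0,T]$ shares minimizers with $\E_v[f(X,v)]$ and admits the standard bounded-range analysis you wanted.
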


\subsection{Poisson Rounding}
\label{subsec:general-cost-rounding}
We now extend the Poisson Rounding algorithm to general costs and volumes. 
Recall that in the discrete-time model for the unit-cost case, $\bar{x}_i(t)$ denotes the ``average probability'' that box $i$ is being opened in the first $t$ time steps. 
By the discussion on Constraint~\eqref{eq:general-lp-1} above, $X_i(t) - X_i\big((t-c_i)_+\big)$ captures the probability that box $i$ is being opened at time $t$. 
Thus, we define:
\begin{align*}
    \bar{x}_i(t) 
    &
    ~\defeq~ \frac{1}{t} \int_0^t \Big( X_i(t') - X_i\big((t'-c_i)_+\big) \Big)\dif{t'}
    \\
    &
    \,~=~\, \frac{1}{t} \int_0^t \min \{t-t', c_i\} \dif{X_i(t')}
    ~.
    \tag{integration by parts}
\end{align*}

Both expressions after the first and second equalities will be useful later.
By Constraint \eqref{eq:general-lp2-1}, we can still consider $\bar{x}_i(t)$ as a distribution over the boxes as:
\begin{equation}
    \label{eqn:unit-poisson-rate}
    \sum_{i \in [n]} \bar{x}_i(t) 
    = \frac{1}{t} \int_0^t \sum_{i \in [n]} \Big( X_i(t') - X_i\big((t'-c_i)_+\big) \Big)\dif{t'} 
    \leq 1
    ~.
\end{equation}

\begin{algorithm}{(Continuous-Time) Poisson Rounding}
\begin{enumerate}
    \item Each box $i \in [n]$ arrives by a non-homogeneous Poisson process with rate $\frac{1}{c_i} \bar{x}_i(\frac{\tau}{2})$.
	\item Open each box on its first arrival.%
    \footnote{The analysis holds even if we allow opening a box multiple times.}%
    \footnote{We omit ties because they are zero-measure events.}
\end{enumerate}
\end{algorithm}

The $\frac{1}{c_i}$ factor cancels the cost $c_i$ of each arrival of box $i$, so that the expected cost spent on opening a box $i$ arriving at time $\tau$ equals $\bar{x}_i(\frac{\tau}{2})$.

\medskip
\noindent {\bf Poisson and Real Time Horizons.~} 
Recall that we think of the Poisson process of the rounding algorithm as happening in a different time horizon, in contrast to the real time horizon.
We refer to the time horizon of Poisson Rounding as the \emph{Poisson time horizon}, and the original one as the \emph{real time horizon}.
We use Greek letters, such as $\tau$, for time in the former horizon, and English letters, such as $t$, for time in the latter horizon.

For any box $i \in [n]$, let $\alpha_i$ denote the time when box $i$ arrives for the first time in Poisson Rounding. 
The algorithm opens the boxes in the real time horizon in the ascending order of $\alpha_i$. 
Let $t_i$ denote the start time of box $i$ in the real time horizon. 
We have:
\[
    t_i ~= \sum_{j \,:\, \alpha_j < \alpha_i} c_j
    ~.
\]

Figure~\ref{fig:pseudo to real} illustrates how arrivals in the Poisson time horizon map to box openings in the real time horizon.

\begin{figure}[h]
    \centering
    \begin{tikzpicture}[scale = 0.9]
        \draw [thick,->] (2,0) -- (14.6,0);
        \draw [thick,-] (2,0.15) -- (2,0);
        \draw [thick,->] (4,0.8) -- (4,0.2);
        \draw [thick,-] (4,0.15) -- (4,0);
        \draw [thick,->] (7,0.8) -- (7,0.2);
        \draw [thick,-] (7,0.15) -- (7,0);
        \draw [thick,->] (12,0.8) -- (12,0.2);
        \draw [thick,-] (12,0.15) -- (12,0);
        \node[align=center] at (2,-0.5) {$0$};
        \node[align=center] at (4,-0.5) {$\alpha_3$};
        \node[align=center] at (4,1.4) {box $3$ \\ arrives};
        \node[align=center] at (7,-0.5) {$\alpha_5$};
        \node[align=center] at (7,1.4) {box $5$ \\ arrives};
        \node[align=center] at (12,-0.5) {$\alpha_1$};
        \node[align=center] at (12,1.4) {box $1$ \\ arrives};
        \draw [thick,->] (2,-3.5) -- (14.6,-3.5);
        \draw [thick,-] (2,-3.35) -- (2,-3.5);
        \draw [thick,-] (5,-3.35) -- (5,-3.5);
        \draw [thick,-] (9,-3.35) -- (9,-3.5);
        \draw [thick,-] (14,-3.35) -- (14,-3.5);
        \draw [thick,decorate,decoration={brace,amplitude=6pt,raise=0pt}] (2,-3.3) -- (5,-3.3);
        \draw [thick,decorate,decoration={brace,amplitude=6pt,raise=0pt}] (5,-3.3) -- (9,-3.3);
        \draw [thick,decorate,decoration={brace,amplitude=6pt,raise=0pt}] (9,-3.3) -- (14,-3.3);
        \node[align=center] at (2,-4) {$t_3 = 0$};
        \node[align=center] at (5,-4) {$t_5 = c_3$};
        \node[align=center] at (9,-4) {$t_1 = c_3 + c_5$};
        \node[align=center] at (14,-4) {$c_3 + c_5 + c_1$};
        \node[align=center] at (3.5,-2.4) {box $3$ is \\ being opened};
        \node[align=center] at (7,-2.4) {box $5$ is \\ being opened};
        \node[align=center] at (11.5,-2.4) {box $1$ is \\ being opened};
        \node at (16.7,-3.5) {real time horizon};
        \node at (17,0) {Poisson time horizon};
    \end{tikzpicture}
\caption{Illustration of events in the Poisson and real time horizons}
\label{fig:pseudo to real} 
\end{figure}
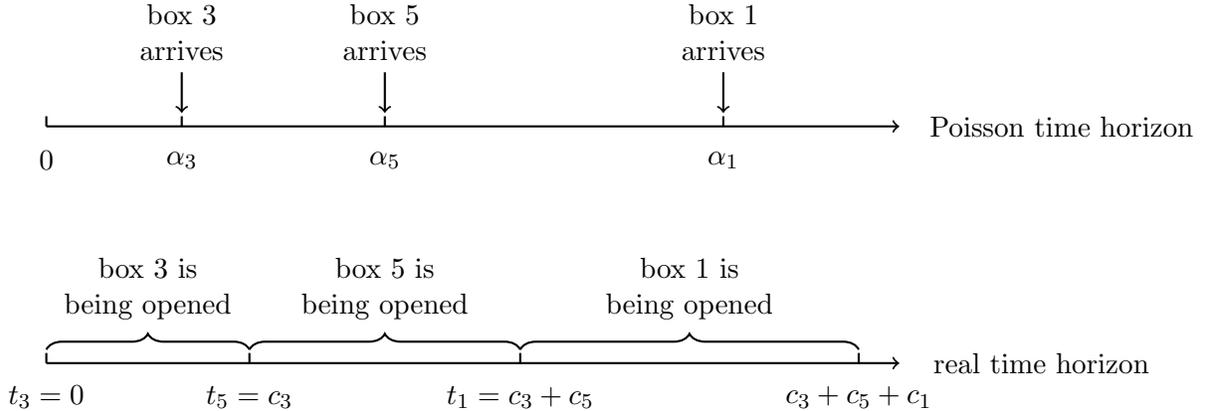

\noindent {\bf Some Useful Facts.~}
We first show that the real time horizon elapses at most as fast as the Poisson time horizon in expectation.
This is analogous to \Cref{lem:unit-cost-two-horizons} in the unit-cost case.

\begin{lemma}
	\label{lem:general-two-horizons}
    For any $\tau \ge 0$, we spend at most $\tau$ time in expectation in the real time horizon to open the boxes that arrive before $\tau$ in the Poisson time horizon, i.e.,
	$\E \left[\sum_{i \,:\, \alpha_i < \tau} c_i \right] 
    ~\le~ 
    \tau
    ~.$
\end{lemma}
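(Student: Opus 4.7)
The plan is to bound the expected total cost by linearity of expectation, cancel the $\frac{1}{c_i}$ factor in the Poisson rate against the per-box opening cost $c_i$, and then invoke the feasibility inequality \eqref{eqn:unit-poisson-rate} after swapping the order of summation and integration.

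More concretely, I would start by writing
\[
    \E \left[\sum_{i \,:\, \alpha_i < \tau} c_i \right] ~=~ \sum_{i \in [n]} c_i \cdot \Pr\big[\, \alpha_i < \tau \,\big]
\]
via linearity of expectation. For each box $i$, the arrival time $\alpha_i$ is the first event of a non-homogeneous Poisson process with rate $\frac{1}{c_i} \bar{x}_i(\frac{\tau'}{2})$ at Poisson time $\tau'$, so $\Pr[\alpha_i < \tau] = 1 - \exp\!\big(-\int_0^\tau \tfrac{1}{c_i} \bar{x}_i(\tfrac{\tau'}{2}) \dif{\tau'}\big)$. Using $1 - e^{-x} \le x$, this is at most $\int_0^\tau \tfrac{1}{c_i} \bar{x}_i(\tfrac{\tau'}{2}) \dif{\tau'}$. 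Substituting back, the $c_i$ factor cancels cleanly:
\[
    \E \left[\sum_{i \,:\, \alpha_i < \tau} c_i \right] ~\le~ \sum_{i \in [n]} \int_0^\tau \bar{x}_i\!\left(\tfrac{\tau'}{2}\right) \dif{\tau'} ~=~ \int_0^\tau \sum_{i \in [n]} \bar{x}_i\!\left(\tfrac{\tau'}{2}\right) \dif{\tau'}.
\]
Finally, I would apply inequality \eqref{eqn:unit-poisson-rate}, which says $\sum_{i \in [n]} \bar{x}_i(s) \le 1$ for every $s \ge 0$. This upper bounds the integrand by $1$ uniformly, giving the desired bound of $\tau$.

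There is no substantive obstacle here: the only mildly nontrivial step is the cancellation made possible by the $\frac{1}{c_i}$ factor baked into the Poisson arrival rate, which is precisely the reason the rate was designed that way. The use of $1 - e^{-x} \le x$ replaces the Poisson first-arrival probability with the expected number of arrivals, which is a standard slack that is tight in the small-rate regime and does not cost anything in the final inequality since \eqref{eqn:unit-poisson-rate} already absorbs the loss.
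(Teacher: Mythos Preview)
Your proof is correct and essentially identical to the paper's. The paper phrases the key step as bounding the delay from box $i$ by (expected number of arrivals before $\tau$) $\times\, c_i = \int_0^\tau \bar{x}_i(\tau'/2)\dif{\tau'}$, whereas you make the same bound explicit via $\Pr[\alpha_i<\tau]=1-e^{-x}\le x$; both then sum over $i$ and apply \eqref{eqn:unit-poisson-rate}.
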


\begin{proof}
By definition, the arrival rate of box $i$ at any Poisson time $\tau'$ is $\frac{1}{c_i} \bar{x}_i(\frac{\tau'}{2})$. As it takes $c_i$ to open box $i$ each time it arrives, the delay caused by box $i$ is at most:
    \[
        \int_0^{\tau} \frac{1}{c_i} \bar{x}_i\Big(\frac{\tau'}{2}\Big) \cdot c_i \dif{\tau'} ~=~ \int_0^{\tau} \bar{x}_i\Big(\frac{\tau'}{2}\Big) \dif{\tau'}
        ~.
    \]
        
    Summing over all boxes $i \in [n]$, the expected real time that the algorithm uses to open boxes arriving before time $\tau$ in the Poisson time horizon is at most:
    \[
   		\int_0^{\tau} \sum_{i \in [n]} \bar{x}_i \Big(\frac{\tau'}{2}\Big) \dif{\tau'} ~\underset{\eqref{eqn:unit-poisson-rate}}{\leq}~ \int_0^{\tau} 1 \dif{\tau'} ~=~ \tau
		~.
    \qedhere\]    
\end{proof}

Suppose that a stopping rule takes box $i$ at time $\tau$ in the Poisson time horizon. 
To upper bound the expected objective, we need to bound the amount of real time spent on opening boxes $j \ne i$ that arrive before time $\tau$ in the Poisson time horizon. 
For all stopping rules in the rest of the paper, the condition of taking box $i$ at time $\tau$  will be equivalent to all other boxes $j$ arriving later than some respective thresholds time $\theta_j$.
The lemma below provides a conditional upper bound on the expected time required to open these boxes, and is analogous to \Cref{lem:unit-cost-completion-time}.

\begin{lemma}
	\label{lem:general-completion-time}
	For any box $i \in [n]$, any time $\tau$, and any thresholds $\theta_j$ for $j \ne i$:
	\[
		\E \bigg[ \sum_{j \ne i \,:\, \alpha_j < \tau} c_j \mid\, \forall j \in [n], \alpha_j > \theta_j \,\bigg] 
        ~\le~ 
        \tau
        ~.
	\]
\end{lemma}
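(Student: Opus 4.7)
The approach is to exploit the independence of the per-box Poisson processes, so that conditioning on the event $\bigcap_{j \in [n]}\{\alpha_j > \theta_j\}$ factors into separate conditionings on each box's process. Once factored, we can bound the conditional delay from each box $j \ne i$ by essentially replaying the calculation in the proof of \Cref{lem:general-two-horizons}, integrated from $\theta_j$ (rather than $0$) up to $\tau$.

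First, I would recall that by construction the arrival processes of different boxes are mutually independent non-homogeneous Poisson processes. Hence, for any $j \ne i$, the conditional law of box $j$'s arrivals given $\bigcap_{k}\{\alpha_k > \theta_k\}$ coincides with its law given only $\{\alpha_j > \theta_j\}$. By the independent-increments property of a Poisson process, conditioning on no arrival of box $j$ during $[0, \theta_j]$ leaves the arrival process on $(\theta_j, \infty)$ unchanged---it is still Poisson with rate $\frac{1}{c_j}\bar{x}_j(\tau'/2)$ on that interval.

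Next, I would use the crude bound $\mathbf{1}[\alpha_j < \tau] \le N_j(\theta_j, \tau)$, where $N_j(\theta_j, \tau)$ counts arrivals of box $j$ in $(\theta_j, \tau)$; this is valid because $\alpha_j < \tau$ given $\alpha_j > \theta_j$ requires at least one arrival in that interval. Taking conditional expectations and multiplying by $c_j$ gives
\[
    \E\!\left[c_j \cdot \mathbf{1}[\alpha_j < \tau] \,\Big|\, \forall k,\ \alpha_k > \theta_k\right]
    \;\le\; c_j \cdot \int_{\theta_j}^{\tau} \frac{1}{c_j}\, \bar{x}_j\!\Big(\frac{\tau'}{2}\Big) \dif{\tau'}
    \;=\; \int_{\theta_j}^{\tau} \bar{x}_j\!\Big(\frac{\tau'}{2}\Big) \dif{\tau'}
    ~,
\]
where we interpret the integral as $0$ if $\tau \le \theta_j$. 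Summing over $j \ne i$, extending each integral down to $0$ and including the $j = i$ term for free, and then applying \eqref{eqn:unit-poisson-rate} exactly as in \Cref{lem:general-two-horizons} yields the bound $\int_0^\tau 1 \dif{\tau'} = \tau$.

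The only delicate point, and the main thing to get right, is the first step: justifying that conditioning on all the events $\{\alpha_k > \theta_k\}$ simultaneously does not inflate the conditional arrival rate of any individual box $j$ on the interval $(\theta_j, \tau)$. This is where independence of the Poisson processes across boxes is essential---without it, the conditioning on $\{\alpha_k > \theta_k\}$ for $k \ne j$ could in principle introduce extra correlations. The rest of the argument is a mechanical repetition of the calculation already used for \Cref{lem:general-two-horizons}, so I do not anticipate any further obstacles.
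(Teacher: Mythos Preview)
Your proposal is correct and follows essentially the same approach as the paper's proof: both use independence of the per-box Poisson processes to reduce the conditioning on $\{\alpha_j>\theta_j\}$ to a per-box statement, observe that the conditional arrival rate of box $j$ on $(\theta_j,\infty)$ equals the unconditional rate (and is zero before $\theta_j$), and then invoke the calculation of \Cref{lem:general-two-horizons} together with \eqref{eqn:unit-poisson-rate}. The paper phrases step two simply as ``the conditional arrival rates are at most the unconditional rates,'' whereas you route through the bound $\mathbf{1}[\alpha_j<\tau]\le N_j(\theta_j,\tau)$, but this is the same argument with slightly more detail.
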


We cannot include $i$ in the above summation because the condition of taking $i$ at time $\tau$ implies $\alpha_i \le \tau$, which is in the opposite direction compared to the conditions for the other boxes.

\begin{proof}
	For each box $j \ne i$, conditioned on $\alpha_j \ge \theta_j$, its arrival rate at time $\tau'$, is $0$ if $\tau' \leq \theta_j$, or $\frac{1}{c_j}\bar{x}_j(\frac{\tau'}{2})$ if $\tau' > \theta_j$. 
	Importantly, the conditional arrival rates are at most the unconditional rates in both cases. 
	Hence, the conditional expectation is upper bounded by the unconditional expectation, which is at most $\tau$ by \Cref{lem:general-two-horizons}.
\end{proof}

Finally, our analysis needs to upper bound the probability that the algorithm does not take any boxes before some time threshold, which means that all boxes $i \in [n]$ arrive later than their corresponding thresholds $\theta_i$.
The following lemma is analogous to \Cref{lem:unit-cost-poisson-bound}, and follows from the definition of the Poisson process.

\begin{lemma}
	\label{lem:poisson-bound}
	For any non-negative thresholds $(\theta_i)_{i \in [n]}$ in the Poisson time horizon, we have:
	\[
		\Pr \big[\, \forall i \in [n], \alpha_i > \theta_i \,\big] 
        ~=~ 
        \exp \bigg( - \sum_{i \in [n]} \int_0^{\theta_i} \frac{1}{c_i} \bar{x}_i \Big( \frac{\tau}{2} \Big) \dif{\tau} \bigg)
		~.
	\]		
\end{lemma}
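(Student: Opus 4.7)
The plan is to reduce the statement to two standard facts about non-homogeneous Poisson processes: (i) the event $\{\alpha_i > \theta_i\}$ is exactly the event that box $i$'s Poisson process has no arrival in the window $[0,\theta_i]$ of the Poisson time horizon; and (ii) the arrival processes for distinct boxes are mutually independent by construction of Poisson Rounding.

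First I would invoke independence across boxes to factor the joint probability:
\[
    \Pr\big[\,\forall i\in[n],\ \alpha_i>\theta_i\,\big]
    ~=~ \prod_{i\in[n]}\Pr\big[\,\alpha_i>\theta_i\,\big].
\]
Next, recall that box $i$ arrives as a non-homogeneous Poisson process with intensity $\lambda_i(\tau)=\tfrac{1}{c_i}\bar x_i(\tfrac{\tau}{2})$. Since $\alpha_i$ is the time of the \emph{first} arrival, the event $\{\alpha_i>\theta_i\}$ coincides with $\{N_i(\theta_i)=0\}$, where $N_i$ is the counting process. The standard void-probability formula for a non-homogeneous Poisson process then gives
\[
    \Pr\big[\,\alpha_i>\theta_i\,\big]
    ~=~ \exp\!\bigg(-\int_0^{\theta_i}\lambda_i(\tau)\dif{\tau}\bigg)
    ~=~ \exp\!\bigg(-\int_0^{\theta_i}\frac{1}{c_i}\bar x_i\!\Big(\frac{\tau}{2}\Big)\dif{\tau}\bigg).
\]
Multiplying these expressions over $i\in[n]$ and pulling the product into the exponent as a sum yields the claimed identity.

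There is essentially no obstacle: the only thing to check with any care is that the rate function $\lambda_i$ is a bona fide (locally integrable, non-negative) intensity, which is immediate from $\bar x_i\ge 0$ and constraint \eqref{eqn:unit-poisson-rate}. If desired, one can justify the void-probability formula from first principles by partitioning $[0,\theta_i]$ into $K$ pieces, using the infinitesimal arrival probability $\lambda_i(\tau)\dif\tau+o(\dif\tau)$ on each piece together with independence of disjoint intervals, and taking $K\to\infty$ so that $\prod_k(1-\lambda_i(\tau_k)\Delta_k)\to\exp(-\int_0^{\theta_i}\lambda_i(\tau)\dif\tau)$; but citing the standard fact is sufficient.
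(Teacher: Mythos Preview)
Your proposal is correct and matches the paper's approach: the paper simply states that the lemma ``follows from the definition of the Poisson process,'' and your argument spells out exactly that---independence across boxes plus the void-probability formula for each non-homogeneous Poisson process. There is nothing to add.
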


\subsection{Balanced Stopping}
\label{subsec:general-cost-balanced}

Having defined the Poisson Rounding, we now introduce the stopping algorithm for deciding when to stop and take the best opened box.
We refer to this rule as Balanced Stopping and define it within the Poisson time horizon, where we have nice independence properties. 
Of course, the bound on the objective value will apply to the real time horizon.

\medskip
\noindent
{\bf Balanced Stopping.~}
Fix any scenario $v$.
Define the \emph{balanced time} of each box $i \in [n]$ in the Poisson horizon to be:
\[
    \beta_i ~\defeq ~ c_i + v_i
    ~.
\]

In Balanced Stopping, a box $i$ can be taken at Poisson time $\tau$ if (i) it has arrived, i.e., $\tau \ge \alpha_i$, and (ii) $\tau$ is at least its balanced time, i.e., $\tau \ge \beta_i$.
The algorithm takes the first box satisfying these two conditions.
For a description consistent with the discussion in \Cref{sec:stopping-dicusssion}, define:
\begin{equation}
\label{eq:tau-stars}
  	\tau_i ~\defeq~ \max \big\{ \alpha_i, \beta_i \big\}
	\quad,\quad
	\tau^* = \min_{i \in [n]} \tau_i
	\quad,\quad
	i^* = \argmin_{i \in [n]} \tau_i
	~.  
\end{equation}

We call $\tau^*$ the stopping time, and $i^*$ the taken box.

\begin{algorithm}{Balanced Stopping with Poisson Rounding}
\begin{enumerate}%
    \item Let $\alpha_i$ be the earliest arrival time  of each box $i \in [n]$ according to Poisson Rounding.
    \item At time $\tau > 0$ in the Poisson horizon:
    \begin{enumerate}
        \item If $\tau = \alpha_i$ for some box $i$, open the box and reveal $v_i$, and thus, $\beta_i$ and $\tau_i$.
        \item If $\tau = \tau_{i^*}$ for some opened box $i^*$, %
        stop and take box $i^*$. %
    \end{enumerate}
\end{enumerate}
\end{algorithm}

We first prove a lemma that allows us to analyze Balanced Stopping with Poisson Rounding purely within the Poisson time horizon, where we enjoy the independence of various events. 

\begin{lemma}
	\label{lem:objective-poisson-horizon}
	Conditioned on any scenario $v$, any taken box $i^*$, and any stopping time $\tau^*$, the expected objective of Balanced Stopping is at most:
	\begin{equation}
		\label{eqn:objective-poisson-horizon}
        \tau^* + c_{i^*} + v_{i^*} ~=~ \tau^* + \beta_{i^*}
		~.			
	\end{equation}
\end{lemma}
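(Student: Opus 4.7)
The approach is to decompose the real-time objective, conditioned on the scenario $v$, the taken box $i^*$, and the stopping time $\tau^*$, into the intrinsic cost $\beta_{i^*} = c_{i^*} + v_{i^*}$ of taking box $i^*$ plus the aggregated opening costs of the other boxes that Poisson Rounding opens, and then to bound the latter in expectation by $\tau^*$ using \Cref{lem:general-completion-time}.

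First, I would observe that the set of boxes whose opening costs contribute to the objective is exactly $\{j : \alpha_j \le \tau^*\}$, since a box is opened precisely when its Poisson arrival occurs before stopping. Because $\alpha_{i^*} \le \tau_{i^*} = \tau^*$, box $i^*$ is among them, so the real-time objective can be written as
\[
c_{i^*} + v_{i^*} + \sum_{j \ne i^*,\, \alpha_j < \tau^*} c_j ~=~ \beta_{i^*} + \sum_{j \ne i^*,\, \alpha_j < \tau^*} c_j
\]
up to measure-zero ties. It therefore suffices to bound the conditional expectation of the tail sum by $\tau^*$.

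Second, I would translate the event ``$i^* = \argmin_j \tau_j$ and $\tau^* = \min_j \tau_j$'' into a condition on the Poisson arrivals alone, using that $\tau_j = \max\{\alpha_j, \beta_j\}$ and that $\beta_j = c_j + v_j$ is deterministic once $v$ is fixed. For each $j \ne i^*$, the requirement $\tau_j > \tau^*$ is automatic when $\beta_j > \tau^*$ and reduces to $\alpha_j > \tau^*$ when $\beta_j \le \tau^*$. For $i^*$ itself, the event pins $\alpha_{i^*}$ into $[0, \tau^*]$ with $\max\{\alpha_{i^*}, \beta_{i^*}\} = \tau^*$.

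Finally, I would apply \Cref{lem:general-completion-time} with $i = i^*$, $\tau = \tau^*$, and thresholds $\theta_j = \tau^*$ for each $j \ne i^*$ with $\beta_j \le \tau^*$ and $\theta_j = 0$ otherwise; these exactly match the one-sided constraints on $\{\alpha_j\}_{j \ne i^*}$ derived above. The main obstacle is that the conditioning event additionally constrains $\alpha_{i^*}$ in a two-sided way, which does not fit the form $\alpha_j > \theta_j$ in the lemma. This is resolved by the independence of the Poisson clocks across boxes: further conditioning on $\alpha_{i^*}$ leaves the joint distribution of $\{\alpha_j\}_{j \ne i^*}$ unchanged, so the conditional expectation of the tail sum under the actual event coincides with the one the lemma bounds, giving $\tau^*$ and completing the proof of \eqref{eqn:objective-poisson-horizon}.
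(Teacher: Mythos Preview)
Your proposal is correct and follows essentially the same approach as the paper: decompose the objective into $\beta_{i^*}$ plus the opening costs of the other boxes, translate the conditioning on $(i^*,\tau^*)$ into the one-sided constraints $\alpha_j > \tau^*$ for $j\ne i^*$ with $\beta_j \le \tau^*$, handle the two-sided constraint on $\alpha_{i^*}$ by further conditioning on $\alpha_{i^*}$ (using independence across Poisson clocks), and then invoke \Cref{lem:general-completion-time} with exactly those thresholds. The paper's proof is structured identically, just more tersely.
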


\begin{proof}
	We prove a stronger claim that bounds the expectation further conditioned on the arrival time $\alpha_{i^*}$ of $i^*$.
	Observe that if $\tau^* > \beta_{i^*}$, then $\alpha_{i^*} = \tau^*$.
	Otherwise, if $\tau^*=\beta_{i^*}$, then $\alpha_{i^*} \leq \tau^*$.
	
	Under the stated conditions, any box $j\neq i$ with $\beta_j \le \tau^*$ must arrive after time $\tau^*$ (the arrival time for boxes with $\beta_j > \tau^*$ can be arbitrary).
    Applying \Cref{lem:general-completion-time} with $\theta_j = \tau^*$ if $\beta_j \le \tau^*$ and $\theta_j = 0$ if $\beta_j > \tau^*$, the expected real time spent on opening boxes other than $i^*$ is at most $\tau^*$.
	
    The lemma follows by summing $\tau^*$ with the opening cost $c_{i^*}$ and volume $v_{i^*}$ of box $i^*$.
\end{proof}

\medskip
\noindent {\bf Balanced Times and Good Arrivals.~}
Intuitively, \Cref{eqn:objective-poisson-horizon} is the reason why we call $\beta_i = c_i + v_i$ the balanced time --- if the algorithm takes a box $i$ at its balanced time, then the ``exploration cost'' it spends on opening other boxes equals the ``exploitation cost'' it spends on opening and taking box $i$. 
However, this intuition does not explain why Balanced Stopping is ``lossless'' and still yields a $4$-approximation.

We now explain how this stopping rule arises naturally through a thought experiment. 
The point is that the objective in our benchmark $\textsc{General LP}$ depends on scenario-dependent variables $Z_i(t \,|\, v)$, while Poisson Rounding relies on the scenario-independent variables $X_i(t)$ (necessarily so, as the algorithm does not know the scenario).
Note that $\dif X_i(t) \geq \dif Z_i(t \,|\, v)$ by Constraint~\eqref{eq:general-lp-3}.
Imagine an adversary that keeps $\dif Z_i(t \,|\, v)$'s (and hence the LP objective) unchanged, but artificially increases $\dif X_i(t)$'s. In the rounding, this would lead to the arrival of ``bad'' boxes; our stopping rule should ensure that it does not hurt the algorithm to take such boxes (by missing a ``good'' box with a lower cost that might arrive later).%
\footnote{In other words, the adversary can only help the online algorithm by giving such ``bad'' boxes.}
The following lemma formalizes this thought experiment.

\begin{lemma}
	\label{lem:good-and-bad}
	Consider any good Poisson process and any bad Poisson process of box arrivals.
	Let the arrival rates of box $i$ in the two processes be $\lambda_i^g(\tau)$ and $\lambda_i^b(\tau)$. 
	Suppose the total arrival rate of the good process at any time $\tau > 0$ is bounded by:
	\begin{equation}
		\label{eqn:good-arrival-bound}
		\sum_{i \in [n]} \lambda^g_i(\tau) \le \frac{2}{\tau}
		~.
	\end{equation}

	Then, our stopping rule guarantees that the expected value in \Cref{eqn:objective-poisson-horizon} with both good and bad arrivals, is never higher than in the process with only good arrivals.
\end{lemma}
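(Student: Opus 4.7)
The plan is to prove $\E[V^{gb}] \le \E[V^g]$, where $V = \tau^{*} + \beta_{i^{*}}$ is the quantity in \eqref{eqn:objective-poisson-horizon}, by combining a pointwise bound with a conditional expected-value argument that crucially exploits the rate hypothesis $\sum_i \lambda_i^g(\tau)\le 2/\tau$. The key pointwise observation is $V \le 2\tau^{*}$: since $\tau^{*} = \max\{\alpha_{i^{*}},\beta_{i^{*}}\} \ge \beta_{i^{*}}$, we get $V = \tau^{*} + \beta_{i^{*}} \le 2\tau^{*}$. This is the only place the factor $2$ enters, precisely matching the factor in the rate hypothesis.

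I would couple the two processes by taking the bad process independent of the good, so that $\alpha_i^{gb} = \min(\alpha_i^g,\alpha_i^b) \le \alpha_i^g$ and hence $\tau^{gb*} \le \tau^{g*}$. Let $E$ denote the event that the box $i^{gb*}$ taken in the both world first arrived via bad, i.e.\ $\alpha_{i^{gb*}}^b < \alpha_{i^{gb*}}^g$. On $E^c$, we have $\tau^{gb*} = \tau^{g*}$ and $V^{gb} = V^g$, so it suffices to show $\E[V^g\,\mathbf{1}_E] \ge \E[V^{gb}\,\mathbf{1}_E]$. Using $V^{gb} \le 2\tau^{gb*}$ on $E$, this further reduces to $\E[V^g\,\mathbf{1}_E] \ge \E[2\tau^{gb*}\,\mathbf{1}_E]$. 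Conditioning additionally on $\tau^{gb*} = \tau_0$ and on the set $S$ of boxes opened by good arrivals by time $\tau_0$, I observe that every $i \in S$ must satisfy $\beta_i > \tau_0$ (otherwise the good world would itself have stopped) and, by independence of the two processes, the future good arrivals still follow the original rates $\lambda_i^g$.

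The heart of the argument is the auxiliary claim $\E[V^g \mid \tau_0, S] \ge 2\tau_0$. From the rate bound one obtains the survival estimate $\Pr[\text{no good arrival in }(\tau_0,\tau]] \ge (\tau_0/\tau)^2$, and a direct integration over the first future good arrival time together with the deterministic opened-box balanced times yields the claim, with equality in the worst case of a saturated rate $2/\tau$ and $\beta_j = 0$ for every new arrival. Plugging this auxiliary claim into the reduction above and taking expectations gives $\E[V^{gb}] \le \E[V^g]$.

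The main obstacle is the auxiliary claim when $S$ contains opened boxes with $\beta_i$ only slightly above $\tau_0$: such a box would let the good world stop shortly after $\tau_0$ at a value not much more than $2\tau_0$, and a future good arrival with small $\beta_j$ could in principle pull the stopped value below $2\tau_0$. Showing that these downward contributions are exactly offset by the ``no further arrival'' tail --- which forces the good world to wait until $\min_{i\in S}\beta_i$ and then pay $2\min_{i\in S}\beta_i$ --- requires the tight survival estimate $(\tau_0/\tau)^2$ and a telescoping integration in $\tau$, and it is here that the rate hypothesis $2/\tau$ is used essentially.
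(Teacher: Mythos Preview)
Your proposal is correct and follows essentially the same approach as the paper: the coupling, the event $E$ (which coincides almost surely with the paper's case $\tau^* > \tau^b$), the reduction via $V^{gb}\le 2\tau^{gb*}$, the conditioning on $\tau_0$ and the set $S$ of early good arrivals, and the auxiliary claim $\E[V^g\mid \tau_0,S]\ge 2\tau_0$ proved by the survival bound $(\tau_0/\tau)^2$ together with the $2\beta^*$ tail contribution are exactly the paper's argument. Your identification of $\beta^*=\min_{i\in S}\beta_i$ and the telescoping computation matching $2\tau_0$ with equality in the saturated case are precisely the paper's calculation.
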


\begin{proof}
	We prove the lemma for every scenario $v$ by a coupling argument.
	Let $\ALG^{g+b}$ be the upper bound of Balanced Stopping's objective in \Cref{eqn:objective-poisson-horizon} when there are both good and bad arrivals.
	Let $\ALG^g$ be the objective with only good arrivals. 
	We couple these two random variables by defining their values using the same realization of good arrivals.
	
	In this proof, we define the variables $\alpha_i$, $\beta_i$, $\tau_i$, $\tau^*$, and $i^*$ in \eqref{eq:tau-stars} with respect to the good Poisson process. 
	We let $\tau^b$ and $i^b$ denote the stopping time and taken box in the bad Poisson process, i.e., the counterparts for $\tau^*$ and $i^*$.
	
	It suffices to show that the expectation of $\ALG^g$, conditioned on any realization of $\tau^b$ and $i^b$, is greater than or equal to the corresponding conditional expectation of $\ALG^{g+b}$, i.e.:
	\[
		\E \big[ \ALG^g \mid \tau^b, i^b \big] ~\ge~ \E \big[ \ALG^{g+b} \mid \tau^b, i^b \big] 
		~.
	\]
	
	For any $\tau^* < \tau^b$, conditioned on $\tau^*$ and any realization of the remaining randomness, we have $\ALG^g = \ALG^{g+b}$ because both processes stops at time $\tau^*$ and take the same box $i^*$.
	
	It remains to consider the expectation conditional on $\tau^* > \tau^b$.\footnote{We omit the zero-measure event of $\tau^* = \tau^b$.}
	That is:
	\[
		\E \big[ \ALG^g \mid \tau^b, i^b, \tau^* > \tau^b\big] ~\ge~ \E \big[ \ALG^{g+b} \mid \tau^b, i^b, \tau^* > \tau^b \big] 
		~.
	\]
	
    By definition, the algorithm with both good and bad arrivals takes box $i^b$ at Poisson time $\tau^b$. 
	Hence, the right-hand-side equals $\tau^b + \beta_{i^b} \le 2 \tau^b$, as we have $\beta_{i^b} \le \tau^b$ by the definition of Balanced Stopping.
	
    It remains to show that the left-hand-side is at least $2 \tau^b$.
	For that, we further fix the realization of box arrivals (if any) before time $\tau^b$ in the good Poisson process.
	Let:
	\[
		S = \big\{ i \in [n] : \alpha_i < \tau^b \big\}
	\]
	denote the set of boxes that arrive before time $\tau^b$.
	Since $\tau_i \ge \tau^* > \tau^b$, where recall that $\tau_i = \max \{ \beta_i, \alpha_i \}$, it must be the case that $\tau_i = \beta_i > \tau^b$ for any box $i \in S$.
	Let $\beta^* = \min_{i \in S} \beta_i$.
	
	We can now characterize the Balanced Stopping algorithm with only the good Poisson process as follows.
	Either (i) some box $i \notin S$ arrives at time $\tau^b < \tau < \beta^*$ and Balanced Stopping takes box $i$ before $\beta^*$, or (ii) the algorithm stops at $\tau^* = \beta^*$ and takes the corresponding box with $\beta_{i^*} = \beta^*$.
	
    In the former case, we can bound the algorithm's objective with $\ALG^g \ge \tau$, generously assuming that box $i$ has zero volume and the algorithm takes it immediately.
    
    In the latter case, we have $\ALG^g = 2 \beta^*$ by \Cref{lem:objective-poisson-horizon}.
	Note that each box $i \notin S$ still arrives with the original arrival rate $\lambda_i^g(\tau)$ at any time $\tau > \tau^b$, conditioned on not arriving before time $\tau^b$, a convenient property of Poisson processes.
    The total arrival rate of these boxes not in $S$ is at most the total arrival rate of all boxes, which is no more than $\frac{2}{\tau}$ by the lemma's assumption, i.e., \Cref{eqn:good-arrival-bound}.
	See \Cref{fig:alg-good-bound} for an illustration.

    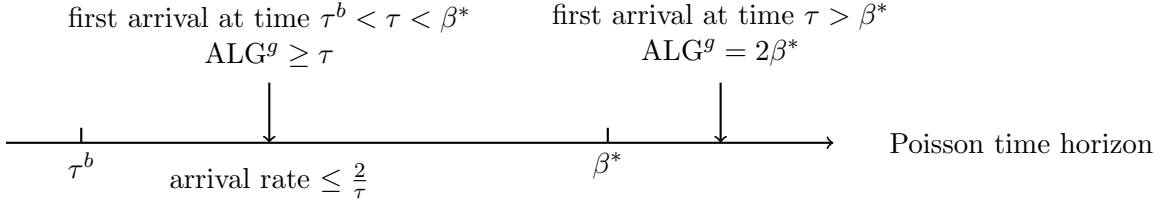
\begin{figure}[ht]
	\centering
	\begin{tikzpicture}
		\draw[thick,->] (0,0) -- (11,0);
		\draw[thick] (1,0.2) -- (1,0) node[pos=2.5] {$\tau^b$};
		\draw[thick] (8,0.2) -- (8,0) node[pos=2.5] {$\beta^*$};
		\node at (13.5,0) {Poisson time horizon};
		\draw[thick,<-] (3.5,0) -- (3.5,0.8) node[pos=1.8,align=center] {first arrival at time $\tau^b < \tau < \beta^*$\\ $\ALG^g\ge \tau$};
		\draw[thick,<-] (9.5,0) -- (9.5,0.8) node[pos=1.8,align=center] {first arrival at time $\tau >\beta^*$\\ $\ALG^g= 2\beta^*$};
		\node at (3.5,-0.5) {arrival rate $\le \frac{2}{\tau}$};
	\end{tikzpicture}	
	\caption{Illustration of the argument for bounding $\ALG^g$}
	\label{fig:alg-good-bound}
	\end{figure}

 	For any threshold $\theta > \tau^b$, the probability that there is no arrival between $\tau^b$ and $\theta$ is at most:
 	\[
 		\exp \Big( - \int_{\tau^b}^{\theta} \frac{2}{\tau} \dif{\tau} \Big) ~=~ \Big( \frac{\tau^b}{\theta} \Big)^2
 		~.
 	\]

	Putting the above into $\E \big[ \ALG^g \big] = \int_0^\infty \Pr \big[ \ALG \ge \theta \big] \dif{\theta}$, we get that:
	\[
		\E \big[ \ALG^g \big] 
		~\ge 
		\underbrace{\phantom{\bigg|} \tau^b}_{\text{for $0 < \theta < \tau^b$}}
		+~ 
		\int_{\tau^b}^{\beta^*} \big( \frac{\tau^b}{\theta} \big)^2 \dif{\theta} 
		~+~ 
		\beta^* \cdot \underbrace{\phantom{\bigg|} \Big( \frac{\tau^b}{\beta^*} \Big)^2}_{\text{for $\beta^* < \theta < 2\beta^*$}} 
		~=~ 
        2 \tau^b 
		~.
    \qedhere	
    \]
\end{proof}

\subsection{Proof of $4$-Approximation}
\label{subsec:general-approximation}

This subsection is devoted to proving the following main result of the paper:

\begin{theorem}\label{thm:general-cost-balanced}
    Balanced Stopping with Poisson Rounding is a $4$-approximation to the optimal partially adaptive algorithm for the Correlated Pandora's Problem.
\end{theorem}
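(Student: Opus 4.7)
The plan is to follow the template of \Cref{thm:unit-cost-clairvoyant}: fix any scenario $v$, prove that the algorithm's conditional expected objective is at most $4\sum_i \int_0^\infty (t + c_i + v_i)\,dZ_i(t\mid v)$, and then take expectation over $v$ using \Cref{lem:general-cost-lp-relaxation}. Applying \Cref{lem:objective-poisson-horizon} and decomposing the Poisson process into a ``good'' component of rate $\lambda_i^g(\tau) := \frac{1}{c_i}\bar{z}_i(\tau/2\mid v)$---where
\[
    \bar{z}_i(t\mid v) ~:=~ \frac{1}{t}\int_0^t\bigl(Z_i(t'\mid v) - Z_i((t'-c_i)_+\mid v)\bigr)\,dt' ~=~ \frac{1}{t}\int_0^t \min\{t-s, c_i\}\,dZ_i(s\mid v)
\]
---and a complementary ``bad'' component, I would first verify via Constraint~\eqref{eq:general-lp-2} and $\min\{\cdot, c_i\} \le c_i$ that the total good rate is at most $2/\tau$. \Cref{lem:good-and-bad} then reduces the goal to bounding $\operatorname{\mathbf{E}}_g[\tau^{*,g} + \beta_{i^{*,g}}]$ in the good-only process.

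Next, I would write this expectation as a tail integral. Because $\beta_i = c_i + v_i$ is deterministic given $v$, the event $\{\tau^{*,g} > \theta\}$ is equivalent to $\{\alpha_i^g > \theta$ for every $i$ with $\beta_i \le \theta\}$, so \Cref{lem:poisson-bound} bounds $\Pr[\tau^{*,g} > \theta]$ by $\exp(-\sum_{i:\beta_i \le \theta}\int_0^\theta \lambda_i^g(\tau)\,d\tau)$. Expanding $\bar{z}_i$ via its integration-by-parts form, swapping orders of integration, and evaluating the inner integral in closed form---which for $\theta \ge 2(s+c_i)$ simplifies to $h(s, c_i) + 2\ln\tfrac{\theta}{2(s+c_i)}$ with $h(s, c_i) := 2 - \tfrac{2s}{c_i}\ln(1 + c_i/s) \in [0, 2]$---writes the exponent as a linear combination of $dZ_i(s\mid v)$ with coefficients
\[
    K_i(s, \theta) ~:=~ \mathbf{1}[\beta_i \le \theta]\,\mathbf{1}[s \le \theta/2]\,\frac{2}{c_i}\int_s^{\theta/2}\frac{\min\{u-s, c_i\}}{u}\,du.
\]
Jensen's inequality applied to $e^{-x}$, using $\sum_i \int dZ_i(s\mid v) = 1$, then bounds $\Pr[\tau^{*,g} > \theta]$ by $\sum_i \int dZ_i(s\mid v)\,\exp(-K_i(s, \theta))$.

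Finally, I would integrate $\exp(-K_i(s, \theta))$ over $\theta$ by splitting into three regions demarcated by $\theta = \max\{2s, \beta_i\}$ and $\theta = 2(s+c_i)$: the indicator-off region where $\exp(-K_i) = 1$, the intermediate region where $\exp(-K_i) \le 1$, and the quadratic-tail region $\theta \ge 2(s+c_i)$ where $\exp(-K_i) \le e^{-h(s, c_i)}(2(s+c_i)/\theta)^2$. A direct computation yields a coefficient of $dZ_i(s\mid v)$ in $\operatorname{\mathbf{E}}_g[\tau^{*,g}]$ equal to $2(s+c_i)(1 + e^{-h(s, c_i)})$. Combined with a matching bound on $\operatorname{\mathbf{E}}_g[\beta_{i^{*,g}}]$---which uses that Balanced Stopping biases $i^{*,g}$ toward boxes with small $\beta_i$---the total coefficient of $dZ_i(s\mid v)$ is at most $4(s + c_i + v_i)$, as required.

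The main obstacle is this final accounting. The crude bound $\tau^{*,g} + \beta_{i^{*,g}} \le 2\tau^{*,g}$ is insufficient: in a single-box instance with $v_i = 0$, the value $\beta_{i^{*,g}} = c_i$ is deterministic whereas $\tau^{*,g}$ has a heavy tail $\sim c_i^2/\theta^2$, so doubling $\operatorname{\mathbf{E}}_g[\tau^{*,g}]$ already overshoots $4(s + c_i + v_i)$. The proof must therefore retain the $e^{-h(s, c_i)}$ tail-saving factor---which trims $\operatorname{\mathbf{E}}_g[\tau^{*,g}]$'s coefficient below $4(s+c_i)$ with just enough slack to absorb the $\operatorname{\mathbf{E}}_g[\beta_{i^{*,g}}]$ contribution---and handle $\operatorname{\mathbf{E}}_g[\beta_{i^{*,g}}]$ with a dedicated tail argument exploiting Balanced Stopping's selection rule. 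This delicate bookkeeping, absent from the unit-cost proof where Clairvoyant Stopping's deterministic choice made the analogous step immediate, is the technical heart of the proof.
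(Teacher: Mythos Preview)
Your high-level plan matches the paper's: fix a scenario, decompose Poisson Rounding into a scenario-dependent ``good'' process plus a residual ``bad'' process, invoke \Cref{lem:good-and-bad} to reduce to the good-only process, and then linearize the resulting bound in $dZ_i(s\mid v)$ via Jensen. But two concrete choices in your outline diverge from the paper in ways that leave a genuine gap.

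\textbf{First, your good rate is not the paper's.} You set $\lambda_i^g(\tau)=\tfrac{1}{c_i}\bar z_i(\tau/2)$, i.e., denominator $\tau$. The paper instead uses denominator $\max\{\tau,\beta_i\}$ and explains this modification is ``subtle (but crucial)'': with your rate, a box $i$ having tiny cost $c_i=\delta$ and $dZ_i$-mass $\varepsilon$ at $s=0$ accumulates integrated rate $\approx 2\varepsilon(1+\ln\tfrac{1}{2\delta})$ on $[0,\beta_i]$, so the early-arrival contribution $\beta_i\int_0^{\beta_i}\lambda_i^g\,d\tau$ to $\E[\beta_{i^*}]$ is unbounded relative to the LP coefficient $\varepsilon\beta_i$. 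Your Jensen bound on $\E[\tau^*]$ escapes this because Jensen isolates each $(i,s)$-pair, but any per-box decomposition of $\E[\beta_{i^*}]$ along the lines of $\Pr[i^*=i]\le(\text{arrives before }\beta_i)+(\text{arrives after }\beta_i)$ blows up with your rates.

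\textbf{Second, your plan treats $\E[\tau^*]$ and $\E[\beta_{i^*}]$ separately; the paper does not.} The paper writes $\E[\tau^*]+\E[\beta_{i^*}]$ as an integral with integrand $\big(1+\sum_{i:\beta_i<\theta}\lambda_i(\theta)\beta_i\big)\exp\!\big(-\sum_{i:\beta_i<\theta}\int_0^\theta\lambda_i\big)$ plus a separate early-arrival piece, applies $1+x\le e^x$ to fold the $\lambda_i(\theta)\beta_i$ factor into the exponent, and \emph{only then} applies Jensen to the combined exponent. This joint linearization, together with the $\max\{\tau,\beta_i\}$ denominator that keeps the early-arrival piece finite, is what produces the functions $g$ and $h$ whose per-$(i,s)$ coefficient is shown to be at most $4(s+\beta_i)$ in the technical \Cref{lem:coefficient}. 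Your ``dedicated tail argument exploiting Balanced Stopping's selection rule'' is left unspecified; note that the naive candidate $\E[\beta_{i^*}]\le\sum_i\beta_i Z_i(\infty)$ is false in general (take two boxes with $\beta_1=1,\beta_2=10$ and $Z$-mass $0.99$ on box~$1$ placed at very late $s$, mass $0.01$ on box~$2$ at $s=0$), so something substantive is needed here, and the paper's $1+x\le e^x$ merge is precisely that something.
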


We will prove a stronger claim that, for any fixed scenario $v$, the algorithm is a $4$-approximation to the corresponding LP objective.
Since we have fixed the scenario, we will write $Z_i(t)$ for $Z_i(t \,|\, v)$ for brevity.
Let:
\begin{equation}
    \label{eq:offline-cost}
        \CP_{\textsc{General}} (v) ~=~ \sum_{i \in [n]} \int_0^\infty (t+c_i+v_i) \dif{Z_i(t)}
\end{equation}
denote the objective of the \textsc{General LP} for scenario $v$.
We will prove that the expected objective of Balanced Stopping with Poisson Rounding in scenario $v$ is at most $4 \cdot \CP_{\textsc{General}}(v)$.

The value of $\CP_{\textsc{General}}(v)$ depends linearly only on variables $Z_i(t)$.
Therefore, we will first define a good arrival process using $Z_i(t)$ and then reduce the proof to an analysis of the good arrival process using \Cref{lem:good-and-bad}, and finally, use a series of inequalities to derive an upper bound of the algorithm's expected objective that is linear in $Z_i(t)$.
These ideas have been foreshadowed in \Cref{sec:unit-cost}.

\medskip
\noindent{\bf Reduction to the Good Arrival Process.~}
Recall that the arrival rate of any box $i \in [n]$ at any time $\tau$ in Poisson Rounding is:
\begin{align}
	\frac{1}{c_i} \bar{x}_i\Big(\frac{\tau}{2}\Big)
	&
	~=~ 
	\frac{2}{c_i \cdot \tau} \int_0^{\frac{\tau}{2}} \Big( X_i(t) - X_i\big( (t - c_i)_+ \big) \Big) \dif{t} \notag\\
	&
	~=~
	\frac{2}{c_i \cdot \tau} \int_0^{\frac{\tau}{2}} \min \Big\{ \frac{\tau}{2} - t, c_i \Big\} \dif{X_i(t)}	\label{eq:total-arrival-rate}
	~.
\end{align}

We define a {\em good} Poisson process of box arrivals with rates:
\begin{align*}
    \lambda^g_i(\tau) 
    &
    \,\defeq~ \frac{2}{c_i \cdot \max \{\tau, \beta_i\}} \int_0^{\frac{\tau}{2}} \Big( Z_i(t) - Z_i\big( (t - c_i)_+ \big) \Big) \dif{t} \\
    &
    ~=~\, \frac{2}{c_i \cdot \max \{\tau, \beta_i\}} \int_0^{\frac{\tau}{2}} \min \Big\{ \frac{\tau}{2} - t, c_i \Big\} \dif{Z_i(t)} 
    ~.
\end{align*}

Compared with \eqref{eq:total-arrival-rate}, notice that we have $\dif Z_i(\tau')$ here instead of $\dif X_i(\tau')$, and the denominator has the term $\max\{\tau,\beta_i\}$ instead of $\tau$. Both these changes reduce $\lambda_i^{g}(\tau)$ compared to $\bar{x}_i(\tau)$. 
The former allows us to compare the algorithm's performance with the LP benchmark, which is expressed in terms of $Z_i(t)$'s.
It further ensures that the good arrival rates satisfy the condition of \Cref{lem:good-and-bad}.

The reason for replacing $\tau$ by $\max(\tau,\beta_i)$ is more subtle (but crucial).
Comparing the definition of $\lambda_i^g$ and $\bar{x}_i$, some readers may wonder why we replace $\tau$ with $\max \{\tau, \beta_i\}$ in the definition of the former. 
The reason is that, with the original denominator $\tau$, an $\varepsilon$ mass of the LP solution could lead to an arbitrarily large probability of opening and taking a box in the good Poisson process.
This would make the comparison with the LP benchmark difficult.

Consider the following scenario. 
Box $1$ has $c_1 = \delta$, $\beta_1 = 1$, and $Z_1(\tau) = \varepsilon$ for all $\tau \ge 0$, i.e., with probability $\varepsilon$, the LP solution opens box $1$ at time $0$ and eventually takes it.
Box $2$ has $Z_2(\tau) = 0$ for $0 \le \tau < 1$ and $Z_2(\tau) = 1 - \varepsilon$ for $\tau \ge 2$.
It is a dummy for filling up the remaining $1-\varepsilon$ probability mass in $\sum_i Z_i(\infty) = 1$, but importantly, it never arrives before box $1$'s balanced point $\beta_1 = 1$.
Other boxes have infinite volumes and are irrelevant.
If box $1$ arrives before time $1$, it would be taken at its balanced point $\beta_1 = 1$. 
With the original denominator and small $\varepsilon$ and $\delta$, box $1$'s total arrival rate before time $1$, and approximately its arrival probability, equals:
\[
	\int_0^1 \bigg( \frac{2}{\delta \cdot \tau} \cdot \min \Big\{ \frac{\tau}{2}, \delta \Big\} \cdot \varepsilon \bigg) \dif\tau
	~=~ 
	2\Big(1 + \ln \frac{1}{2\delta}\Big) \varepsilon
	~\gg~ 
	\varepsilon
	~.
\]

In contrast, with the new denominator, the above expression becomes:
\[
	\int_0^1 \bigg( \frac{2}{\delta \cdot \max\{\tau, 1\}} \cdot \min \Big\{ \frac{\tau}{2}, \delta \Big\} \cdot \varepsilon \bigg) \dif\tau
	~=~
	\varepsilon \int_0^1 \bigg( \frac{2}{\delta} \cdot \min \Big\{\frac{\tau}{2}, \delta\Big\} \bigg) \dif{\tau}
	~<~ 2\varepsilon
	~.
\]

We remark that the modified denominator was not needed for MSSC where $c_i = 1$ and $v_i = 0$ or $\infty$, because the maximum operation is superfluous when $v_i = 0$, and the box is irrelevant when $v_i = \infty$. 

Accordingly, define a {\em bad} Poisson process of box arrivals with the difference of these rates:
    \[
	   \lambda_i^b(\tau) ~\defeq~\frac{1}{c_i} \bar{x}_i\Big( \frac{\tau}{2} \Big) - \lambda_i^g(\tau)
	   ~.
    \]

Notice that $\lambda_i^b(\tau)\geq 0$ because $\max \{\tau, \beta_i\} \ge \tau$ and $\dif X_i(\tau')  \ge \dif Z_i(\tau')$, where the latter follows from Constraint~\eqref{eq:general-lp-3} in \textsc{general LP}.

By \Cref{lem:good-and-bad}, the expected objective of Balanced Stopping (with both good and bad arrivals) is less than or equal to the counterpart with good arrivals only.
Hence, it remains to analyze the latter random process.
The arrival times in the remaining argument are defined w.r.t.\ the good Poisson process.

For each box $i \in [n]$, recall that $\alpha_i$ denotes the time when box $i$ arrives for the first time in the Poisson time horizon, and $\tau_i = \max \big\{ \alpha_i, \beta_i \big\}$ is the earliest time when box $i$ can be taken.
Further recall that the algorithm stops at time $\tau^* = \min_i \tau_i$ and takes box $i^* = \argmin_i \tau_i$.

In the following analysis of the algorithm, we will omit the superscript $g$ in $\lambda_i^g(\tau)$ for brevity, since this notation will be used repeatedly.

\medskip
\noindent {\bf Bounding the Objective.~}
By \Cref{eqn:objective-poisson-horizon}, it suffices to bound:
%
\[
	\E [ \tau^* + \beta_{i^*} ]
	~=~ 
	\E [\tau^*] + \E [\beta_{i^*}]
	~.
\]

The probability that the stopping time $\tau^*$ is greater than any threshold $\theta > 0$ is:
\begin{align*}
	\Pr \big[ \tau^* > \theta \big]
	&
	~=~ \Pr \big[\, \text{for any $i$ with $\beta_i \le \theta$, we have $\alpha_i > \theta$} \,\big] \\
	&
	~\le~ \exp \bigg( - \sum_{i \,:\, \beta_i \le \theta} \int_0^\theta \lambda_i(\tau) \dif{\tau} \bigg)
	~.
	\tag{by \Cref{lem:poisson-bound}}
\end{align*}

Thus, the expected stopping time is:
\begin{equation}
	\label{eqn:general-cost-opening-cost}
	\E[ \tau^*] 
	~=~ \int_0^\infty \, \Pr \big[ \tau^* \ge \theta \big] \dif{\theta}
	~\le~ \int_0^\infty \exp \bigg( - \sum_{i \,:\, \beta_i < \theta} \int_0^\theta \lambda_i(\tau) \dif{\tau} \bigg) \dif{\theta}
	~.
\end{equation}

Next, we analyze $\E[\beta_{i^*}]$.
For each box $i$, consider the probability it is taken, i.e., $\Pr \big[ i^* =i \big]$.
First, consider the case when box $i$ arrives at time $\alpha_i < \beta_i$ in the Poisson horizon, and the other boxes arrive too late so that the algorithm takes box $i$ at time $\beta_i$.
We upper bound the probability of this case by box $i$'s total arrival rate in this period:
\begin{equation}
	\label{eqn:general-cost-volume-early}
	\int_0^{\beta_i} \lambda_i(\tau) \dif{\tau}
	~.
\end{equation}

If box $i$ arrives at some time $\alpha_i > \beta_i$, a necessary condition for it to be taken is that no box $j$ with $\beta_j < \alpha_i$ arrives before time $\alpha_i$. 
The probability of taking box $i$ in this case equals:
\begin{equation}
	\label{eqn:general-cost-volume-late}
	\int_{\beta_i}^\infty \underbrace{\vphantom{\sum_{j: s_j < \theta}} \lambda_i(\theta)}_{\text{arrival rate of $i$}} \,\cdot\quad  \underbrace{\exp \bigg( - \sum_{j: \beta_j < \theta} \int_0^{\theta} \lambda_j(\tau) \dif{\tau} \bigg)}_{\Pr [\text{no other box taken before $\theta$}]} ~\dif{\theta}
	~.
\end{equation}

We bound $\E[\beta_{i^*}]$ by summing \Cref{eqn:general-cost-volume-early,eqn:general-cost-volume-late}, multiplying the result with $\beta_i$, and summing over all boxes $i \in [n]$.

Putting everything together, the algorithm's expected objective for scenario $v$ is at most:
%
\[
	\eqref{eqn:general-cost-opening-cost}+ \sum_{i \in [n]} \big( \eqref{eqn:general-cost-volume-early} + \eqref{eqn:general-cost-volume-late} \big) \cdot \beta_i
	~.
\]
%

Grouping the terms in $\eqref{eqn:general-cost-opening-cost}$ and $\sum_{i \in [n]} \eqref{eqn:general-cost-volume-late} \cdot \beta_i$, we can bound $ \E[\tau^*] + \E[\beta_{i^*}]$ as:
\begin{equation*}
	\int_0^\infty ~ \bigg( 1 + \sum_{i \,:\, \beta_i < \theta} \, \lambda_i(\theta) \cdot \beta_i \bigg) \cdot \exp \bigg( - \sum_{j \,:\, \beta_j < \theta} \int_0^{\theta} \lambda_j(\tau) \dif{\tau} \bigg) \dif{\theta}
	~+~
    \sum_{i \in [n]} \beta_i \int_0^{\beta_i} \lambda_i(\tau) \dif{\tau}
    ~.
\end{equation*}

We replace the index $j$ with $i$ above, as the two summations in the first term are independent.
In other words, we rewrite the above as:
%
\begin{equation}
	\label{eqn:general-total-cost}	
	\int_0^\infty ~ \bigg( 1 + \sum_{i \,:\, \beta_i < \theta} \, \lambda_i(\theta) \cdot \beta_i \bigg) \cdot \exp \bigg( - \sum_{i \,:\, \beta_i < \theta} \int_0^{\theta} \lambda_i(\tau) \dif{\tau} \bigg) \dif{\theta}
	~+~
	\sum_{i \in [n]} \beta_i \int_0^{\beta_i} \lambda_i(\tau) \dif{\tau}
    ~.
\end{equation}

\medskip
\noindent {\bf Linearizing the Bound.~}
Our next goal is to compare the bound in \Cref{eqn:general-total-cost} with the LP cost in \Cref{eq:offline-cost}. To do so, we first linearize the bound w.r.t.\ $Z_i(t)$, through several steps. 
The second part of \eqref{eqn:general-total-cost} is already linear in $Z_i(t)$ as $\lambda_i(t)$ is linear in $Z_i(t)$:
%
\begin{align}
    \sum_{i \in [n]} \beta_i \int_0^{\beta_i} \lambda_i(\tau) \dif{\tau} 
    &
    ~=~ \sum_{i \in [n]} \beta_i \int_0^{\beta_i} \frac{2}{c_i \cdot \max \{\tau, \beta_i\}} \int_0^{\frac{\tau}{2}} \min \Big\{\frac{\tau}{2} - t, c_i \Big\} \dif{Z_i(t)}  \dif{\tau} \notag \\
    \text{\footnotesize ($\tau \le \beta_i$)}
    &
    ~=~ 2 \sum_{i \in [n]} \int_0^{\beta_i} \int_0^{\frac{\tau}{2}} \frac{\min\{\frac{\tau}{2}-t, c_i\}}{c_i} \dif{Z_i(t)}  \dif{\tau} 
     \notag\\
    \text{\footnotesize (changing order of integrations)} 
    &
    ~=~ 2 \sum_{i \in [n]} \int_0^{\frac{\beta_i}{2}} \int_{2t}^{\beta_i} \frac{\min\{\frac{\tau}{2}-t, c_i\}}{c_i} \dif{\tau} \dif{Z_i(t)} 
	~.
    \label{eq:general-second-term}
\end{align}

Define function:
\begin{equation*}
    h(t, c, \beta) 
    ~\defeq~
    4 \int_{\min\{t, \beta\}}^{\beta} \frac{ \min\{\tau-t, c\}}{c} \dif{\tau} 
    ~.
\end{equation*}
%

Then, noting that $h(t,c,\beta)=0$ for $t \geq \beta$, we can rewrite \eqref{eq:general-second-term} as:
%
%

%
\begin{equation}
	\label{eqn:linear-part-2}	
    \sum_{i \in [n]} \int_0^\infty h\Big( t, c_i, \frac{\beta_i}{2} \Big) \dif{Z_i(t)}
    ~.
\end{equation}


We next consider the first part of \Cref{eqn:general-total-cost}.
As $1 + x \le e^x$, the integrand is at most:

\begin{equation}
    	\exp \bigg(\sum_{i \,:\, \beta_i < \theta} \, \lambda_i(\theta) \cdot \beta_i \,- \sum_{i \,:\, \beta_i < \theta} \int_0^{\theta} \lambda_i(\tau) \dif{\tau} \bigg)
	~.
\label{eq:first-integrand}
\end{equation}

Substituting $\lambda_i(\tau)$ with its definition, the exponent becomes:
\[
    \sum_{i \,:\, \beta_i < \theta} \int_0^{\frac{\theta}{2}} \frac{2\beta_i}{\theta} \cdot \frac{\min\{ \frac{\theta}{2} - t,c_i\}}{c_i} \dif{Z_i(t)} 
    ~-~
    \sum_{i \,:\, \beta_i < \theta} \int_0^{\theta} \frac{2}{\max\{\tau, \beta_i\}} \int_0^{\frac{\tau}{2}} \frac{\min\{\frac{\tau}{2} - t,c_i\}}{c_i} \dif{Z_i(t)} \dif{\tau}
    ~.
\]

Changing the order of integrations in the second part as in \Cref{eq:general-second-term}, and merging the two parts, we get that:
    \[
        \sum_{i \,:\, \beta_i<\theta} \int_0^{\frac{\theta}{2}} \bigg( \frac{2\beta_i}{\theta} \cdot \frac{\min\{\frac{\theta}{2} - t,c_i\}}{c_i} - \int_{2t}^\theta \frac{2 \cdot \min\{\frac{\tau}{2}-t,c_i\}}{c_i \cdot \max\{\tau, \beta_i\}} \dif{\tau} \bigg)\dif{Z_i(t)}
	   ~.
    \]
    
By a change of variables in the inner integral, this is:
    \begin{equation}
    \label{eq:general-first-integrand}
        \sum_{i \,:\, \beta_i<\theta} \int_0^{\frac{\theta}{2}} \bigg( \frac{2\beta_i}{\theta} \cdot \frac{\min\{\frac{\theta}{2} - t,c_i\}}{c_i} - \int_{t}^{\theta/2} \frac{2 \cdot \min\{\tau-t,c_i\}}{c_i \cdot \max\{\tau, \beta_i/2\}} \dif{\tau} \bigg)\dif{Z_i(t)}
	   ~.
    \end{equation}

Define the function:
\[
	g(t, c, \beta, \theta) 
    ~\defeq~
    \begin{cases}
        0 & \mbox{if $0 \leq \theta < \max\{2t,\beta\}$;} \\[2ex]
        \displaystyle \frac{2\beta}{\theta} \cdot \frac{\min\{\theta - t, c\}}{c} - \int_t^{\theta} \frac{2 \cdot \min\{\tau-t,c\}}{c \cdot \max\{\tau, \beta\}} \dif{\tau} & \mbox{otherwise.}
    \end{cases}
\]

Using this we can write \Cref{eq:general-first-integrand} as:
\[
	\sum_{i \in [n]} \int_0^\infty g \Big(t, c_i, \frac{\beta_i}{2}, \frac{\theta}{2} \Big) \dif{Z_i(t)}
	~.
\]
and the 
integrand \eqref{eq:first-integrand} becomes:
\[
	\exp \bigg(\sum_{i \in [n]} \int_0^\infty g \Big(t, c_i, \frac{\beta_i}{2}, \frac{\theta}{2} \Big) \dif{Z_i(t)} \bigg)
    ~.
\]

We now linearize this in $Z_i(t)$.
Recall the LP constraint $\sum_{i \in [n]} Z_i(\infty) = 1$, or equivalently, $\sum_{i \in [n]} \int_{0}^\infty \dif Z_i(t) = 1$.
By Jensen's inequality, the above is at most:
\[
    \sum_{i \in [n]} \int_0^\infty e^{g (t, c_i, \frac{\beta_i}{2}, \frac{\theta}{2} )} \dif{Z_i(\tau)} 
    ~.
\]

Therefore, the first part of \Cref{eqn:general-total-cost} is upper bounded by:
\begin{equation}
	\label{eqn:linear-part-1}
   	\int_0^\infty \sum_{i \in [n]} \int_0^\infty e^{g (t, c_i, \frac{\beta_i}{2}, \frac{\theta}{2} )}  \dif{Z_i(\tau)} \dif{\theta}
    ~.
\end{equation}

Combining \Cref{eqn:general-total-cost,eqn:linear-part-2,eqn:linear-part-1}, the algorithm's expected objective is at most:
\begin{align}
	&
        \sum_{i \in [n]} \int_0^\infty \bigg( \int_0^\infty e^{g (t, c_i, \frac{\beta_i}{2}, \frac{\theta}{2} )}  \dif{\theta}
    ~+~
    \int_0^\infty h\Big( t, c_i, \frac{\beta_i}{2} \Big) \bigg)\dif{Z_i(\tau)} \notag \\
	& \qquad
	=~
        \sum_{i \in [n]} \int_0^\infty \bigg( 2 \int_0^\infty e^{g (t, c_i, \frac{\beta_i}{2}, \theta )}  \dif{\theta}
    ~+~
    \int_0^\infty h\Big( t, c_i, \frac{\beta_i}{2} \Big) \bigg)\dif{Z_i(\tau)}   
    \label{eq:final-general-cost}
    ~.
\end{align}

\medskip
\noindent {\bf Comparing with the LP Benchmark.~} 
As $c_i+v_i=\beta_i$, the  LP cost \eqref{eq:offline-cost} for scenario $v$ is
\begin{align*}
    \CP_{\textsc{General}} (v) ~=~ \sum_{i \in [n]} \int_0^\infty (t+\beta_i) \dif{Z_i(t)}.
\end{align*} 

Compared this expression with \Cref{eq:final-general-cost}, to prove the stated approximation ratio of $4$, 
it suffices to show that the coefficient of $\dif Z_i(\tau)$ in \Cref{eq:final-general-cost} is at most $4(t+\beta_i)$, which we formulate as the next lemma. Here, it will be convenient to write $\beta_i$ as $2\beta$.

\begin{lemma}\label{lem:coefficient}
    For any $t > 0$, any $c > 0$, and any $\beta = \frac{\beta_i}{2} \ge \frac{c}{2}$, the function
    \[
        F(t,c,\beta) ~\defeq~ 4t + 8\beta - 2\int_0^\infty e^{g(t, c, \beta, \theta)} \dif{\theta} - h(t, c, \beta)
    \]
    is nonnegative.
\end{lemma}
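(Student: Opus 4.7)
The plan is to verify $F(t,c,\beta) \ge 0$ by direct analysis, splitting into cases based on how the thresholds $t$, $t+c$, $2t$, and $\beta$ are ordered (recall we are given $\beta \ge c/2$). These thresholds determine the piecewise structure of both $g(t,c,\beta,\cdot)$ (through the two branches in its definition, the $\max\{\tau,\beta\}$ in the denominator, and the $\min\{\tau-t,c\}$ in the numerator) and $h(t,c,\beta)$ (through $\min\{t,\beta\}$ and $\min\{\tau-t,c\}$). The main top-level split is $t \ge \beta$ versus $t < \beta$, with a secondary split on whether $t+c$ lies below, inside, or above the interval determined by $2t$ and $\beta$.

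First, I would compute $h$ in closed form. Since its integrand is piecewise linear, this is a short calculation: $h$ vanishes when $t \ge \beta$, and otherwise equals an explicit quadratic expression in $t, c, \beta$ obtained by integrating $\min\{\tau - t, c\}/c$ from $t$ to $\beta$. Next, I would decompose $\int_0^\infty e^{g(t,c,\beta,\theta)} \dif{\theta}$ into the head $\theta \in [0, \max\{2t,\beta\}]$, which contributes exactly $\max\{2t,\beta\}$ because $g = 0$ there, and a tail. On the tail, I would derive explicit formulas for $g$ on each subinterval between consecutive thresholds; these expressions are combinations of $\tfrac{2\beta}{\theta}$, logarithmic terms of the form $-2\ln(\theta/\cdot)$, and a few boundary constants coming from the integral of $\tfrac{2(\tau - t)}{c\max\{\tau,\beta\}}$ over $[t, t+c]$.

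The crucial technical step is to bound $e^g$ tightly enough. I expect to use two ingredients: the elementary inequality $(1 + c/t)^{2t/c} \le e^2$ (equivalently $1+x \le e^x$) to absorb the boundary constant arising in the $\theta > t+c$ regime, and a change of variables such as $u = 2\beta/\theta$ to evaluate the resulting tail integrals of the form $\int_A^\infty \tfrac{e^{2\beta/\theta}}{\theta^2}\, p(\theta)\dif{\theta}$ for a low-degree rational $p$. Combined with the polynomial bound $e^{2\beta/\theta} \le 1 + \tfrac{2\beta}{\theta} + 2(\tfrac{\beta}{\theta})^2 \cdot (\cdot)$ on large $\theta$, this should reduce the tail contribution to something comparable to $4\beta$, matching the $8\beta$ coefficient of $F$ (together with the factor of $2$ in front of the integral) and leaving the $4t$ to be covered by $2 \max\{2t,\beta\}$ from the head.

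The main obstacle will be the algebraic bookkeeping across all subcases, where $\beta$, $2t$, and $t+c$ may be interleaved in several ways, and where the final bound $2 \int_0^\infty e^g \dif{\theta} + h \le 4t + 8\beta$ must be tight enough to recover the constant $4$ (which we know from the MSSC lower bound is unimprovable). I expect the most delicate case to be when $\beta$ lies strictly between $t$ and $t+c$, because then the nontrivial piece of $h$ coexists with both the $\max\{\tau,\beta\}$ switch and the $\min\{\tau - t, c\}$ switch inside $g$. I would organize the case analysis by scaling to $c = 1$ so that only two free parameters $(t,\beta)$ remain, and then verify the tightness of each bound (in particular checking that equality occurs in the MSSC-like limit $v_i = 0$, $c_i = 1$, which anchors the constant $4$).
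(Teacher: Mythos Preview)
Your structural plan matches the paper's: split on the ordering of $t$, $\beta$, $t+c$; compute $h$ in closed form; observe that the head $[0,\max\{t,\beta\}]$ contributes its length since $g=0$ there; and for the far tail $\theta>t+c$ note that $g$ reduces to $\tfrac{2\beta}{\theta}-2\ln\tfrac{\theta}{t+c}+\text{const}$, so that $\int_{t+c}^\infty e^g\dif\theta=\tfrac{(t+c)^2}{2\beta}(e^{2\beta/(t+c)}-1)\cdot e^{\text{const}}$ exactly (no Taylor bound needed). The inequality $(1+c/t)^{2t/c}\le e^2$ is indeed one of the tools used. You also correctly identify $t<\beta\le t+c$ as the hardest case.

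The gap is in the ``middle'' integral over $[\max\{t,\beta\},\,t+c]$, where the exponent of $g$ contains the mixed terms $\tfrac{2\beta(\theta-t)}{c\theta}-\tfrac{2(\theta-t)}{c}+\tfrac{2t}{c}\ln\tfrac{\theta}{t}$ (or the analogous expression with $\beta$ in place of $t$). Your proposed devices (the substitution $u=2\beta/\theta$ and a polynomial upper bound on $e^{2\beta/\theta}$) do not touch this piece: the obstruction is not $e^{2\beta/\theta}$ but the interaction of the $\ln$ and linear terms, and any slack here costs the constant~$4$. The paper handles this by (i) proving that the relevant one-variable slice is concave (respectively convex) in $\beta$ or in $x=\beta/t$, reducing to endpoint checks, (ii) in the sub-case $\beta=t$ applying Chebyshev's sum inequality to decouple the two factors $e^{-2(y-1)^2/(xy)}$ and $y^{2/x}$ in the integrand, and (iii) verifying a chain of tightly-tuned numerical inequalities (e.g., the integrand bounded by $1.28$, the middle integral bounded by $\tfrac32+\tfrac{2y}{3}$, a final degree-$8$ polynomial inequality dispatched via shifted Chebyshev polynomials). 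None of these ingredients appear in your plan, and without the concavity/endpoint reduction in particular there is no obvious route to a bound sharp enough to land on~$4$ rather than some larger constant.

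A minor correction: the tight case is $c,\beta\to 0$ with $t$ fixed (equivalently $F(1,0,0)=0$), not the unit-cost MSSC limit $c=1$, so normalizing $t=1$ (as the paper does, using homogeneity) is more convenient than your proposed $c=1$.
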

We prove \Cref{lem:coefficient} analytically, but unfortunately, the proof involves a tedious case analysis, depending on the relation of $t$, $c$, $\beta$, and $\theta$, due to the maximum and minimum operations in the definitions of $h$ and $g$.
We defer it to \Cref{app:balanced-coefficient}.

\Cref{fig:plot of F} below presents a numerical verification of \Cref{lem:coefficient}.
Here we normalize time $t$ to be $1$ since the function is homogeneous, i.e., $F(\gamma t, \gamma c, \gamma \beta) = \gamma \cdot F(t, c, \beta)$. 
\Cref{fig:small-scale-verify} presents case when both $c$ and $\beta$ are upper bounded by $1$, showing that the function takes minimum value $F(1, 0, 0) = 0$ at $c = \beta = 0$.
\Cref{fig:large-scale-verify} demonstrates that the function is further from zero as $c$ and $\beta$ increase. Intuitively, this shows that the tight approximation ratio of $4$ is achieved when $c=v=0$, which corresponds to the special case of the MSSC problem.

\begin{figure}[ht]
    \centering
    \begin{subfigure}{.4\textwidth}
    \includegraphics[width=\textwidth]{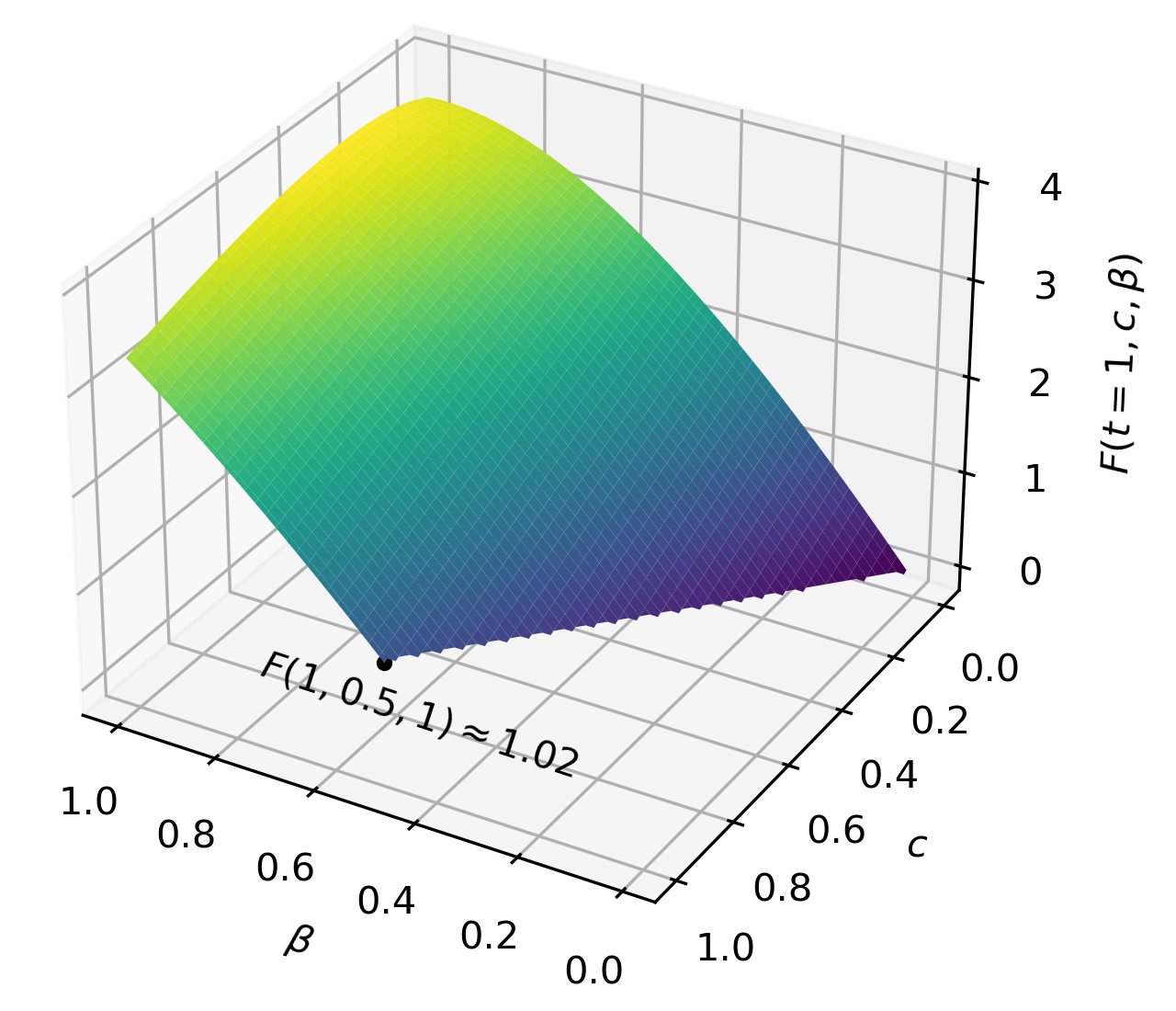}
    \caption{$0 \le c \le 1$ and $\frac{c}{2} \le \beta \le 1$}
    \label{fig:small-scale-verify}
    \end{subfigure}
	\begin{subfigure}{.4\textwidth}
    \includegraphics[width=\textwidth]{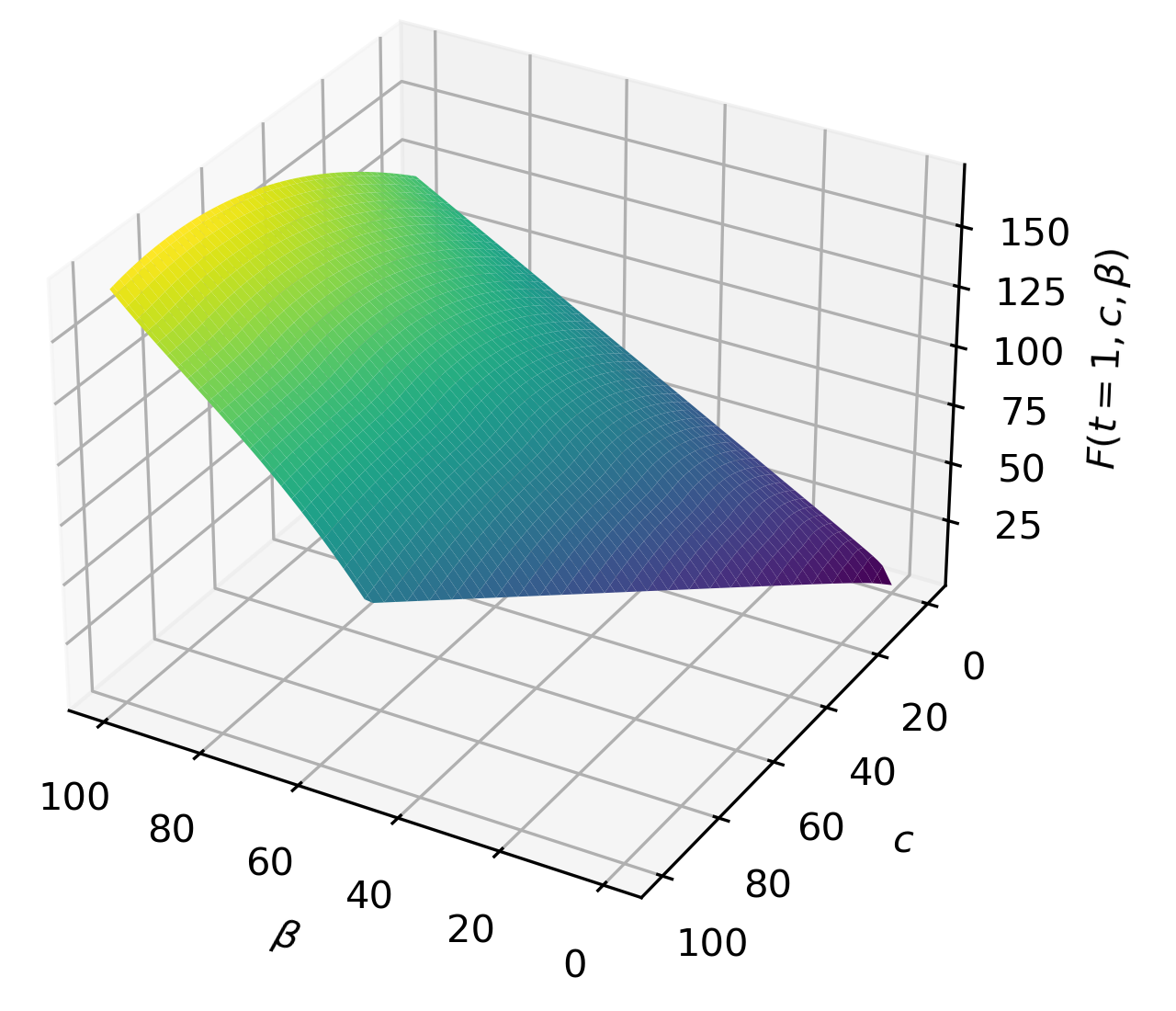}
    \caption{$0 \le c \le 100$ and $\frac{c}{2} \le \beta \le 100$}
    \label{fig:large-scale-verify}
    \end{subfigure}    
    \caption{Numerical plot of $F(1, \beta, c)$}
    \label{fig:plot of F}
\end{figure}
\section{Further Related Work}
\label{sec:related}

After \cite{Weitzman/1979/original} introduced Pandora's Problem, many variants have been studied in the literature. 
For the Correlated Pandora's Problem, we have already covered the negative results on approximating the optimal fully adaptive algorithm \citep{ChawlaGTTZ:FOCS:2020,DBLP:conf/approx/0001GMT23} and the existing results on approximating the optimal partially adaptive algorithm \citep{ChawlaGTTZ:FOCS:2020,GergatsouliT:NeurIPS:2023}.
Besides those, \cite{ChawlaGTTZ:FOCS:2020} considered the optimal non-adaptive algorithm as the benchmark, providing a $1.58$-approximation with a partially adaptive algorithm and a lower bound of $1.278$ that holds even for fully adaptive algorithms. 
\cite{GergatsouliT:NeurIPS:2023} formally pointed out that approximating the optimal partially adaptive algorithm with a factor strictly smaller than $4$ is NP-hard, even with the power of fully adaptive algorithms.

Other than this correlated model, most models of Pandora's Problem consider the maximization version, where inside each box is a value, and we aim to maximize the taken value minus the total box-opening cost.
We briefly discuss some of these models below, and refer readers to the recent survey by \citet{BeyhaghiC:SIGecom:2024}.

The optimality of Weitzman's Rule has been generalized in several directions.
\citet{DBLP:conf/soda/Singla18} proposed the Price of Information model as a generalization of Pandora's Problem in both cost-minimization and utility-maximization versions, where we may take a set of boxes subject to some feasibility constraints.
\citet{DBLP:conf/soda/Singla18} showed that Weitzman's Rule is optimal for matroid constraint and a $0.5$-approximation for knapsack constraint.
\cite{DBLP:conf/sigecom/KleinbergWW16} considered multiple stages of inspections.
In other words, each box needs to be inspected more than once before it is fully opened;
each inspection reveals further information about the final value and comes at a cost.
They proposed a generalized version of Weitzman's Rule and proved its optimality.

A widely studied variant is Pandora's Problem with Nonobligatory Inspection, where we can take a box without opening it and receive its expected value.
The model has been independently proposed several times in different literatures \citep*[e.g.,][]{GuhaMS:arXiv:2008,BeyhaghiK:EC:2019,DBLP:journals/jet/Doval18}.
The problem of finding the optimal strategy is NP-hard~\citep*{DBLP:conf/stoc/FuLL23} but has a PTAS \citep*{DBLP:conf/stoc/FuLL23,DBLP:conf/stoc/BeyhaghiC23}.
\cite{ZivL:corr:2024} extended this nonobligatory model from single-item selection into combinatorial selection problems, including matroid basis, matching, and facility location, by losing a factor of $4/3$.

Another variant is to restrict the way of opening the boxes. 
\citet{DBLP:conf/soda/Singla18} illustrated the power of Weitzman's Rule when the set of opened boxes must satisfy some feasibility constraints.
\cite{BoodaghiansFLL:Math.Oper:2023} proposed the order-constrained settings, where inspections must follow predefined sequences.
They proved the existence of optimal algorithms for tree-like order constraints and provided negative results for broader classes of order constraints.
A degenerate model studied by \cite{BergerEFF:EC:2023} introduces non-additive costs for box-opening, where submodular cost functions admit a non-adaptive optimal algorithm while the more general class of sub-additive (or even XOS) cost functions requires adaptive algorithms.

There are also models that consider other forms of exploration-exploitation tradeoffs.
In the Committed Pandora's Problem, the algorithm must decide whether to take a box or not right after it is opened. 
\citet*{FuLX:ICALP:2018} gave a PTAS for this model, and \citet{DBLP:conf/sigecom/Segev021} later designed an EPTAS.
In the Online Pandora's Problem, the boxes are presented online, and the algorithm chooses whether to open each box given its cost,
and then chooses irrevocably whether to keep the revealed value or pass on it.
\cite{EsfandiariHLM:AAAI:2019} provided a $0.5$-approximation to this problem and a $0.2$-approximation when we may take multiple boxes subject to a knapsack constraint.

\bibliographystyle{plainnat}
\bibliography{main}

\clearpage
\appendix

\section{Delayed Activation and Factor Revealing Linear Program}
\label{app:delayed-activation}

This section presents the Delayed Activation algorithm and its competitive analysis using a factor revealing LP. 
Since the resulting competitive ratio is subsumed by our main result, we will only consider the unit-cost special case in \Cref{sec:unit-cost} for ease of exposition.

Recall that Clairvoyant Stopping with (Discrete-Time) Poisson Rounding is a $4$-approximation (\Cref{thm:unit-cost-clairvoyant}), and this approximation ratio holds even if we need to pay $k$ times the actual volume for any $k \le 4$ (\Cref{remark:unit-cost-clairvoyant}).
Next, we design the $k$-\emph{Delayed Activation} algorithm to choose the same box as Clairvoyant Stopping, at the expense of increasing the objective by an additive factor proportional to $v_i$ for the taken box $i$.

\begin{algorithm}{$k$-Delayed Activation (with Poisson Rounding)}
\begin{itemize}[leftmargin=10pt]
    \item Sample arrival time $\alpha_i$ of boxes $i \in [n]$ using  (Discrete-Time) Poisson Rounding.
    \item Open boxes in ascending order of $\alpha_i$.
    \item Stop after time step $\min_i \alpha_i + \lfloor kv_i \rfloor$, and take box $i^* = \argmin_{i \in [n]} \big\{ \alpha_i + kv_i\big\}$.%
	    \footnote{This is implementable because box $i^*$ minimizes both $\alpha_{i^*} + \lfloor kv_{i^*} \rfloor$ and $\alpha_{i^*} + kv_{i^*}$, and has been opened at step $\alpha_{i^*}$.}
\end{itemize}
\end{algorithm}

The main result of this section is that $k$-Delayed Activation is $4.075$-competitive for the best (instance-dependent) choice of $k \le 4$.
To avoid the complication of finding the best $k$, we state the result for a randomized algorithm that is a distribution over $k$-Delayed Activation algorithms. 

\begin{theorem}\label{thm:delayed activation}
    Sampling $k \in [0, 4]$ with probability densities $\frac{e^k}{e^4 - 1}$, and then running $k$-Delayed Activation with Poisson Rounding is $\frac{4 e^4}{e^4 - 1} \approx 4.075$-competitive.
\end{theorem}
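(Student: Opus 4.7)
The plan is to bound the algorithm's expected cost separately for each scenario $v$ and then invoke Lemma~\ref{lem:unit-cost-lp-relaxation}. I would fix a scenario $v$ and condition on the Poisson arrival times $\alpha_i$, then define $f(k) \defeq \min_{i \in [n]} (\alpha_i + kv_i)$ and $i^*(k) \in \argmin_i (\alpha_i + kv_i)$. Because $f$ is the lower envelope of affine functions in $k$, it is piecewise linear and concave on $[0,4]$, with right-derivative $f'(k) = v_{i^*(k)}$ on the interior of each linear piece; the finitely many breakpoints have measure zero. By construction, the total cost of $k$-Delayed Activation in this scenario is the stopping time plus the taken volume, namely $\alpha_{i^*(k)} + \lfloor k v_{i^*(k)} \rfloor + v_{i^*(k)}$, which I would upper bound by $f(k) + f'(k)$ using $\lfloor kv_{i^*(k)}\rfloor \le k v_{i^*(k)}$.

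The heart of the argument is to average over $k$ with density $\tfrac{e^k}{e^4-1}$ and apply integration by parts. Because $f$ is absolutely continuous, $\int_0^4 e^k f'(k)\,dk = e^4 f(4) - f(0) - \int_0^4 e^k f(k)\,dk$, so the two copies of $\int_0^4 e^k f(k)\,dk$ cancel and leave
\[
    \E_k[\text{cost of } k\text{-DA}] ~\le~ \frac{1}{e^4-1}\int_0^4 e^k \bigl(f(k)+f'(k)\bigr)\,dk ~=~ \frac{e^4 f(4) - f(0)}{e^4-1} ~\le~ \frac{e^4}{e^4-1}\, f(4),
\]
where the last inequality uses $f(0) = \min_i \alpha_i \ge 0$. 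Then I would take expectation over the Poisson Rounding: since $f(4) = \min_i(\alpha_i + 4v_i)$ is precisely the quantity bounded in Remark~\ref{remark:unit-cost-clairvoyant} with $k=4$, that remark gives $\E_\alpha[f(4)] \le 4 \cdot \CP_{\textsc{Unit}}(v)$ for every scenario $v$. Averaging over $v \sim \Distribution$ and invoking Lemma~\ref{lem:unit-cost-lp-relaxation} completes the $\tfrac{4e^4}{e^4-1}$-approximation.

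I do not expect a significant obstacle beyond justifying $f'(k) = v_{i^*(k)}$ almost everywhere and the absolute continuity required for integration by parts, both of which are standard for piecewise-linear concave $f$. The real design choice is the exponential density $\tfrac{e^k}{e^4-1}$: it is essentially the unique density (up to normalization) satisfying $\rho'(k) = \rho(k)$, which is what makes integration by parts collapse the average of $\rho(f+f')$ into the boundary term $\rho(4)f(4)$. This choice automates the trade-off between invoking Remark~\ref{remark:unit-cost-clairvoyant} at larger $k$ (which improves the analysis of $f(k)$) and paying the additive $v_{i^*}$ penalty that grows as $k$ shrinks, producing exactly the stated ratio $\tfrac{4e^4}{e^4-1} \approx 4.075$.
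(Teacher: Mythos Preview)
Your proof is correct and takes a genuinely different route from the paper's. The paper first averages over scenarios to define $C(k)=\E_v[\alpha_{i^*_k(v)}]$ and $V(k)=\E_v[v_{i^*_k(v)}]$, records the constraints $C(4)+4V(4)\le 4\cdot\OPT$ and $C(k)+kV(k)\le C(k')+kV(k')$ for all $k,k'$, and then sets up a factor-revealing LP maximizing $\int_0^4 \tfrac{e^k}{e^4-1}\bigl(C(k)+(k{+}1)V(k)\bigr)\,dk$ over all $(C,V)$ satisfying these constraints; the bound $\tfrac{4e^4}{e^4-1}$ is obtained by discretizing and exhibiting a feasible dual. You instead work pathwise (fixed $v$ and fixed Poisson arrivals), observe that $f(k)=\min_i(\alpha_i+kv_i)$ is piecewise linear concave with $f'(k)=v_{i^*(k)}$ a.e., and collapse $\int_0^4 e^k(f(k)+f'(k))\,dk$ to the boundary term $e^4 f(4)-f(0)$ via the identity $(e^k f)'=e^k(f+f')$. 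This is more elementary, avoids the LP machinery entirely, and transparently explains why the density $\rho(k)\propto e^k$ is the canonical choice. The paper's LP framework, on the other hand, is what one would use to \emph{discover} the right density or to argue that no distribution over $k$ can beat $\tfrac{4e^4}{e^4-1}$ using only those constraints; your argument is the clean verification once the density is in hand.
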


We start by analyzing the performance for a fixed $k \le 4$ and a fixed scenario.

\begin{lemma}
\label{lem:objective-DA}
	For any $k \le 4$, any scenario $v$, the objective of $k$-Delayed Activation is at most:
	\begin{equation}
		\label{eqn:objective-DA}
		\alpha_{i^*} + (k+1)v_{i^*} 
		~,
	\end{equation}
	where $i^* = \arg \min_{i \in [n]} \{ \alpha_i + kv_i\}$.
\end{lemma}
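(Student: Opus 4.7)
The plan is to bound the objective deterministically on each realization of Poisson arrivals, splitting it into two pieces: the total box-opening cost up to the stopping time, plus the volume $v_{i^*}$ of the taken box. Two ingredients already available will do the work. First, \Cref{lem:unit-cost-two-horizons} tells us that the first $\tau$ Poisson steps consume at most $\tau$ real time steps, and in the unit-cost case this is exactly the total opening cost incurred by that point. Second, \Cref{lem:unit-cost-completion-time} guarantees that box $i^*$ has been opened by real time step $\alpha_{i^*}$, which is no later than the stopping time, so the algorithm can indeed take $i^*$ at the moment it stops.

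Concretely, I would proceed as follows. By the definition of $i^* = \argmin_i \{\alpha_i + kv_i\}$ and the equivalence noted in the footnote to the algorithm ($i^*$ also minimizes $\alpha_i + \lfloor kv_i\rfloor$), the algorithm stops at Poisson step $\alpha_{i^*} + \lfloor kv_{i^*}\rfloor$. Applying \Cref{lem:unit-cost-two-horizons} to this stopping time bounds the total opening cost by $\alpha_{i^*} + \lfloor kv_{i^*}\rfloor \le \alpha_{i^*} + kv_{i^*}$, where the last inequality simply drops the floor. Adding the volume $v_{i^*}$ of the taken box yields the claimed bound $\alpha_{i^*} + (k+1)v_{i^*}$.

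A couple of remarks on the scope of the argument. The bound is pointwise in the Poisson randomness, so no expectation machinery is invoked at this stage. Also, the hypothesis $k \le 4$ is not used inside this lemma itself; it will matter later when plugging $\alpha_{i^*} + (k+1)v_{i^*}$ into \Cref{remark:unit-cost-clairvoyant} (whose slack is available only for $k \le 4$) in order to relate the algorithm's objective to the LP benchmark and ultimately prove \Cref{thm:delayed activation}. There is no real obstacle here beyond careful bookkeeping; the only place one has to be a little careful is the rounding between $\lfloor kv_{i^*}\rfloor$ and $kv_{i^*}$, which is precisely what allows the algorithm to be implementable on an integer time horizon while still being compared against the continuous quantity $\alpha_i + kv_i$ that governs the choice of $i^*$.
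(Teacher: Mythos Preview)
Your proposal is correct and follows essentially the same approach as the paper: bound the box-opening cost by the stopping time $\alpha_{i^*} + \lfloor kv_{i^*}\rfloor$, add the taken volume $v_{i^*}$, and drop the floor. The paper's proof is even terser, simply noting that the minimum volume among opened boxes is at most $v_{i^*}$; your explicit invocations of \Cref{lem:unit-cost-two-horizons} and \Cref{lem:unit-cost-completion-time} are harmless elaborations (and in fact trivial in the unit-cost case where the two time horizons coincide), and your observation that the hypothesis $k\le 4$ plays no role in this lemma is correct.
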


\begin{proof}
	By definition, the algorithm stops after time $\alpha_{i^*} + \lfloor kv_{i^*} \rfloor$ and takes the minimum volume from an opened box.
	The latter is at most $v_{i^*}$.
	Putting together these two parts, the objective is at most $\alpha_{i^*} + \lfloor kv_{i^*} \rfloor + v_{i^*} \le \alpha_{i^*} + (k+1)v_{i^*}$.
\end{proof}

To prepare for the final proof that considers different values of $k \le 4$ and different scenarios $v$, we introduce a few notations.
First, we define:
\[
	i_k^*(v) ~\defeq~ \argmin_{i \in [n]} \alpha_i + k \cdot v_i
	~.
\]

Further, let:
\[
	C(k) ~\defeq~ \E_{v \sim \Distribution} \big[\, \alpha_{i_k^*(v)} \,\big] ~, \quad V(k) ~\defeq~ \E_{v \sim \Distribution} \big[\, v_{i_k^*(v)} \,\big]
\]
denote the total box-opening cost and volume for taking box $i^*_k(v)$ right after it is opened at time $\alpha_{i^*_k(v)}$ in the real time horizon.
Finally, let $\ALG(k)$ be the expected objective of $k$-Delayed Activation. 
By \Cref{lem:objective-DA} and taking expectation over $v$, we get that:

\begin{lemma}
\label{lem:k-delayed-activation}
For any $k \le 4$, we have:
\[
	\ALG(k) ~\le~ C(k) + (k+1) \cdot V(k)
	~.
\]
\end{lemma}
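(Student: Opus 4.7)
\textbf{Proof Plan for Lemma \ref{lem:k-delayed-activation}.}
The plan is to obtain the claim as an immediate consequence of Lemma \ref{lem:objective-DA} by taking expectations, so the work has essentially been done already. Fix $k \le 4$. Lemma \ref{lem:objective-DA} provides a pointwise (i.e.\ for every fixed scenario $v$ and every realization of the Poisson arrivals $\alpha_1,\dots,\alpha_n$) upper bound
\[
    \ALG_k(v,\alpha) ~\le~ \alpha_{i^*_k(v)} + (k+1) \cdot v_{i^*_k(v)}
\]
on the objective of the $k$-Delayed Activation algorithm, where $i^*_k(v) = \argmin_{i \in [n]}\{\alpha_i + k v_i\}$.

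Next I would take expectation of both sides over the joint randomness: the scenario $v \sim \Distribution$ and the Poisson arrival times produced by (Discrete-Time) Poisson Rounding. By linearity of expectation, the right-hand side becomes $\E[\alpha_{i^*_k(v)}] + (k+1)\,\E[v_{i^*_k(v)}]$. Recognizing these two expectations as exactly the quantities $C(k)$ and $V(k)$ defined in the paragraph preceding the lemma gives
\[
    \ALG(k) ~=~ \E\bigl[\ALG_k(v,\alpha)\bigr] ~\le~ C(k) + (k+1)\cdot V(k)
    ~,
\]
which is precisely the claim.

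The only conceptual thing worth flagging is that the two ``$\E_{v \sim \Distribution}$'' in the definitions of $C(k)$ and $V(k)$ should be read as expectations over the full joint law of $(v,\alpha)$ — the subscript merely emphasizes the scenario randomness, but $\alpha_{i^*_k(v)}$ and $v_{i^*_k(v)}$ are both random variables depending on the Poisson realization through the argmin, and on $v$ directly. Since no step requires a delicate inequality or any nontrivial manipulation beyond linearity of expectation and unwinding definitions, I do not anticipate any real obstacle; the lemma is a bookkeeping restatement of Lemma \ref{lem:objective-DA} in the notation needed for the subsequent factor-revealing LP analysis.
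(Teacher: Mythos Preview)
Your proposal is correct and is essentially identical to the paper's own argument: the paper simply states that the lemma follows ``by \Cref{lem:objective-DA} and taking expectation over $v$,'' which is exactly what you spell out. Your observation that the expectations defining $C(k)$ and $V(k)$ must implicitly also average over the Poisson randomness (since $i^*_k(v)$ depends on the $\alpha_i$'s) is a useful clarification that the paper leaves implicit.
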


By contrast, the conclusion of \Cref{remark:unit-cost-clairvoyant} can be written as:
\[
	C(k) + k \cdot V(k) ~\le~ 4 \cdot \OPT
	~.
\]

We cannot directly combine these two inequalities to get a $4$-approximation, because of the different coefficients of the $V(k)$ term.
Nonetheless, we will show that the expectation of $\ALG(k)$ for the stated distribution of $k$ is a $4$-approximation. 
The expected objective of the randomized algorithm can be written as follows, as a corollary of \Cref{lem:k-delayed-activation}.

\begin{corollary}
\label{cor:delayed-activation}	
We have:
\[
	\ALG ~=~ \int_0^4 \frac{e^k}{e^4-1} \cdot \Big( C(k) + (k+1) \cdot V(k) \Big) \dif{k}
	~.
\]
\end{corollary}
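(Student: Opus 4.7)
The plan is to unfold the definition of the randomized algorithm and then apply Lemma \ref{lem:k-delayed-activation} pointwise in $k$. Concretely, the randomized algorithm first draws $k$ from the density $\frac{e^k}{e^4-1}$ on $[0,4]$ (this is a valid density because $\int_0^4 \frac{e^k}{e^4-1}\dif{k}=1$), and then runs the $k$-Delayed Activation procedure with Poisson Rounding. By the tower property, its expected objective $\ALG$ equals the expectation over $k$ of the conditional expected objective $\ALG(k)$ of the $k$-Delayed Activation algorithm, which in integral form reads
\[
    \ALG ~=~ \int_0^4 \frac{e^k}{e^4-1}\cdot \ALG(k)\dif{k}.
\]
Here the inner randomness is over the Poisson arrivals $\alpha_i$ and the scenario $v \sim \Distribution$, and the outer randomness is over $k$; the two are independent, so Fubini justifies writing the expectation as an iterated integral in this order.

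Next, I would plug in the bound from Lemma \ref{lem:k-delayed-activation}, which says $\ALG(k)\le C(k)+(k+1)\cdot V(k)$ for every $k\le 4$. Since the weight $\frac{e^k}{e^4-1}$ is nonnegative on $[0,4]$, monotonicity of the integral gives
\[
    \ALG ~\le~ \int_0^4 \frac{e^k}{e^4-1}\cdot \Big( C(k)+(k+1)\cdot V(k)\Big)\dif{k},
\]
which is the claimed corollary (the statement writes ``$=$'' but, inherited from Lemma \ref{lem:k-delayed-activation}, it should be read as an upper bound on $\ALG$).

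There is essentially no obstacle here: the only things to check are that the density integrates to $1$, that $C(k)$ and $V(k)$ are measurable in $k$ (which is immediate from their definition as expectations of the Poisson- and scenario-driven quantities $\alpha_{i_k^*(v)}$ and $v_{i_k^*(v)}$, where $i_k^*(v)=\argmin_i \alpha_i+k v_i$ is piecewise constant in $k$ with finitely many breakpoints), and that Fubini applies, which is justified by the nonnegativity of all integrands. This makes the corollary a direct bookkeeping consequence of Lemma \ref{lem:k-delayed-activation}; the genuine content lies downstream, where one compares this integral to the bound $C(k)+k\cdot V(k)\le 4\cdot \OPT$ from \Cref{remark:unit-cost-clairvoyant} to derive the $\frac{4e^4}{e^4-1}$ ratio in Theorem \ref{thm:delayed activation}.
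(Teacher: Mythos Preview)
Your proposal is correct and matches the paper's approach: the paper simply states the corollary ``as a corollary of \Cref{lem:k-delayed-activation}'' without further argument, and your expansion via the tower property and pointwise application of the lemma is exactly the intended one-line justification. Your observation that the ``$=$'' should really be ``$\le$'' (inherited from the inequality in \Cref{lem:k-delayed-activation}) is a valid catch of a typo in the paper.
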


Finally, we characterize the relation of $C(k), V(k)$ for different values of $k \le 4$.

\begin{lemma}
	\label{lem:factor-revealing-different-k}
	For any $k, k' \le 4$, we have:
	\[
		C(k) + k \cdot V(k) ~\le~ C(k') + k \cdot V(k')
		~.
	\]	
\end{lemma}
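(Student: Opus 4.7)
The plan is to observe that the inequality is essentially immediate from the definition of $i_k^*(v)$, by an exchange argument performed pointwise in each scenario $v$.

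First I would rewrite $C(k) + k \cdot V(k)$ using linearity of expectation and the definition of $i_k^*(v)$:
\[
    C(k) + k \cdot V(k) ~=~ \E_{v \sim \Distribution} \big[\, \alpha_{i_k^*(v)} + k \cdot v_{i_k^*(v)} \,\big] ~=~ \E_{v \sim \Distribution} \Big[\, \min_{i \in [n]} \big( \alpha_i + k \cdot v_i \big) \,\Big] .
\]
The second equality is the key observation: by its definition, $i_k^*(v) = \argmin_i (\alpha_i + k v_i)$, so plugging this index in yields precisely the minimum. Next, for any fixed scenario $v$, since the minimum of $\alpha_i + k \cdot v_i$ over $i$ is no more than the value obtained at any particular index, in particular at $i = i_{k'}^*(v)$, we get the pointwise bound
\[
    \min_{i \in [n]} \big( \alpha_i + k \cdot v_i \big) ~\le~ \alpha_{i_{k'}^*(v)} + k \cdot v_{i_{k'}^*(v)} .
\]
Taking expectations of both sides over $v \sim \Distribution$ and applying linearity of expectation to the right-hand side gives
\[
    C(k) + k \cdot V(k) ~\le~ \E_{v \sim \Distribution} \big[\, \alpha_{i_{k'}^*(v)} \,\big] + k \cdot \E_{v \sim \Distribution} \big[\, v_{i_{k'}^*(v)} \,\big] ~=~ C(k') + k \cdot V(k') ,
\]
which is the desired inequality.

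There is no real obstacle here: the lemma is an instance of the standard fact that the $\argmin$ of a sum with coefficients $(1, k)$ minimizes that particular linear combination, but may not minimize a different linear combination with coefficients $(1, k')$. The only subtlety worth emphasizing is that the randomness underlying $\alpha_i$ (from Poisson Rounding) is independent of the choice of coefficient $k$, so the pointwise comparison is valid for each joint realization of $v$ and $(\alpha_i)_i$ before taking expectations.
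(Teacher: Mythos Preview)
Your proof is correct and follows essentially the same approach as the paper: a pointwise exchange argument using the definition of $i_k^*(v)$ as the minimizer of $\alpha_i + k v_i$, followed by taking expectations. The paper's proof is slightly terser but identical in substance.
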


\begin{proof}
	Consider any scenario $v$.
    Recall that $i_k^*(v)$ is the box that minimizes $\alpha_i + k \cdot v_i$.
    Hence:
    \[
        \alpha_{i_k^*(v)} + k \cdot v_{i_k^*(v)} ~\leq~ \alpha_{i_{k'}^*(v)} + k \cdot v_{i_{k'}^*(v)}
        ~.
    \]
    
    Taking expectation over $v \sim \Distribution$, and by the definition of $C(k)$ and $V(k)$, the lemma follows.
\end{proof}

\begin{proof}[Proof of \Cref{thm:delayed activation}]
	Consider any instance of the Correlated Pandora's Problem.
    Without loss of generality, we may assume that $\OPT = 1$ by scaling the box-opening costs and volumes by an appropriate multiplicative factor.
	Then, the expression for \Cref{remark:unit-cost-clairvoyant} with $k = 4$ simplifies as:
    \begin{equation}
    \label{eqn:lp constraint 1}
        C(4) + 4 \cdot V(4) \leq 4
        ~.
    \end{equation}
    
    By \Cref{cor:delayed-activation}, it suffices to show that:
    \[
    	\int_0^4 \frac{e^k}{e^4-1} \cdot \Big( C(k) + (k+1) \cdot V(k) \Big) \dif{k} ~\le~ \frac{4e^4}{e^4-1}
    	~.
    \]   
    
    Further, since $C(k)$ and $V(k)$ must satisfy the inequalities given in \Cref{eqn:lp constraint 1} and \Cref{lem:factor-revealing-different-k}, we only need to verify that the optimal objective value of the following \emph{factor revealing LP} is at most the stated competitive ratio.
	\begin{align*}
		\maximize \quad & \int_0^4 \frac{e^k}{e^4 - 1} \cdot \Big( C(k) + (k + 1) \cdot V(k) \Big) \dif{k} \\
		\subjectto \quad & C(4) + 4 \cdot V(4) \leq 4  \\[1ex]
			& C(k) + k \cdot V(k) \leq C(k') + k \cdot V(k') && \forall k, k' \in [0,4] \\[1ex]
		    & C(k) \geq 0, V(k) \geq 0  && \forall k \in [0,4] 
	\end{align*}

	The standard approach to bound the optimal objective value of a maximization LP is to construct a feasible assignment to its dual LP.
	However, this is an infinite-dimensional LP, and the optimal dual assignment involves Dirac delta functions. 
	To avoid such complications, we consider the discretized LP below, which converges to the factor revealing LP when $N$ tends to infinity.
    \begin{align*}
	\maximize \quad & \sum_{i=1}^N \frac{4}{N} \cdot \frac{e^\frac{4i}{N}}{e^4 - 1} \cdot \bigg( C_i + \Big(\frac{4i}{N} + 1\Big) \cdot V_i \bigg) \\
	\subjectto \quad & C_{N} + 4 \cdot V_N \leq 4  \\[.5ex]
		& C_i + \frac{4i}{N} \cdot V_i \leq C_{i+1} + \frac{4i}{N} \cdot V_{i+1} && \forall 1 \leq i \leq N-1 \\[1ex]
        & C_i \geq 0, V_i \geq 0  && \forall 1 \leq i \leq N 
    \end{align*} 
    
    Note that we have relaxed the second set of constraints by only keeping the inequalities for $k = \frac{4i}{N}$ and $k' = \frac{4(i+1)}{N}$. 
    This relaxation can only increase the optimal objective value, and thus, it is sufficient to bound the optimal objective value of the relaxed LP.
    The corresponding dual LP is:
    \begin{align*}
	\minimize \quad & \sum_{i=1}^N 4 P \\
	\subjectto \quad & Q_1 \geq \frac{4}{N} \cdot \frac{e^{\frac{4}{N}}}{e^4 - 1} \\
	& Q_i - Q_{i-1} \geq \frac{4}{N} \cdot \frac{e^{\frac{4i}{N}}}{e^4 - 1} && \forall 2 \leq i \leq N-1 \\
        & P - Q_{N-1} \geq \frac{4}{N} \cdot \frac{e^{4}}{e^4 - 1} \\
        & \frac{4}{N} \cdot Q_1 \geq \frac{4}{N} \cdot \frac{e^{\frac{4}{N}}}{e^4 - 1} \cdot \Big(\frac{4}{N} + 1\Big) \\
        & \frac{4i}{N} \cdot Q_i - \frac{4(i-1)}{N} \cdot Q_{i-1} \geq \frac{4}{N} \cdot \frac{e^{\frac{4i}{N}}}{e^4 - 1} \cdot \Big( \frac{4i}{N} + 1 \Big) && \forall 2 \leq i \leq N-1 \\
        & 4 P - \frac{4(N-1)}{N} \cdot Q_{N-1} \geq \frac{20}{N} \cdot \frac{e^{4}}{e^4 - 1} \\[1ex]
        & P \geq 0, Q_i \geq 0  && \forall 1 \leq i \leq N
    \end{align*}

	Consider a feasible dual assignment as follows:
    \[
    	P = \frac{1}{N(e^4-1)} \sum_{j=1}^N e^{\frac{4j}{N}} \Big( \frac{4j}{N} + 1 \Big)
    	~,\quad
    	Q_i = \frac{1}{i(e^4-1)} \sum_{j=1}^i e^{\frac{4j}{N}} \Big( \frac{4j}{N} + 1 \Big) 
        ~.
    \]
    
    The corresponding dual objective value is:
    \[
        \frac{4}{N(e^4-1)} \sum_{i=1}^N e^{\frac{4i}{N}} \Big(\frac{4i}{N} + 1 \Big)
        ~.
    \]
    
    By weak duality, this is an upper bound for the relaxed primal LP's optimal objective value.
    The limit when $N$ tends to infinity proves the stated competitive ratio.
\end{proof}

\section{Missing Proofs in \Cref{sec:general-cost}}
\label{app:general-cost}

\subsection{Proof of Lemma \ref{lem:general-cost-lp-solvability}}
\label{app:general-cost-lp-solvability}

	We present how to solve the program up to a $1 + O(\varepsilon)$ multiplicative factor with the understanding that scaling $\varepsilon$ by a constant factor only changes the polynomial bound by a (different) constant factor.
	Let $c_{\min} = \min_i c_i$ and $c_{\max} = \max_i c_i$.
	
	First, we round the opening costs and volumes of the boxes up to the closest multiple of $\varepsilon \cdot c_{\min}$, which we normalize to $1$ for expositional simplicity.
	The rounding transforms the costs and volumes into integers, while increasing the objective of any solution by at most a $1+2\varepsilon$ factor.
	Moreover, the time horizon is from $0$ to at most $\sum_i (c_i + \varepsilon \cdot c_{\min}) \le n \cdot \big( c_{\max} + \varepsilon \cdot c_{\min} \big)$, which is polynomial in $n$, $\frac{1}{\varepsilon}$, and $\frac{c_{\max}}{c_{\min}}$.
	
	As a result, we may consider $X_i(t)$ that are step functions with a fixed value within any time interval $[t-1, t)$ for positive integers $t$.
	In other words, we reduce the number of variables from uncountably many to polynomial.
	
	We can now solve the rounded program up to a $1 + \varepsilon$ factor using stochastic gradient descent with time and sample complexities polynomial in the number of variables and constraints, which are polynomial in $n$, $\frac{1}{\varepsilon}$, and $\frac{c_{\max}}{c_{\min}}$. 
	See \citet{bubeck:book} for a textbook reference for the convergence guarantees of stochastic gradient descent.
	\qed

\subsection{Proof of Lemma \ref{lem:coefficient}}
\label{app:balanced-coefficient}
We start by recalling the definitions of functions $g$, $h$, and $F$, and the lemma statement.
Function $g(t, c, \beta, \theta)$ is defined as:
\[
    g(t, c, \beta, \theta) =
    \begin{cases}
        0 & 0 \leq \theta < \mbox{if $\max\{t,\beta\}$;} \\[2ex]
        \displaystyle
        \frac{2 \beta}{\theta} \cdot \frac{\min\{\theta - t,c\}}{c} - \int_{t}^{\theta} \frac{2 \cdot \min\{t'-t,c\}}{c \cdot \max\{t',\beta\}} \dif{t'} & \mbox{otherwise.}
    \end{cases}
\]

For clarity, let us remove the minimum and maximum operations, and rewrite the function explicitly in six cases depending on the ordinal relation between $\beta$, $t$, $t+c$, and $\theta$.
We omit the basic calculus for deriving these explicit forms. 
\[
    g(t, c, \beta,\theta) 
    ~=~ 
    \begin{cases}
        0 &  0 \leq \theta < \max \{t, \beta\} \\[2ex]
        \displaystyle
        \frac{2 \beta (\theta - t)}{c\theta} -\frac{2(\theta - t)}{c}  + \frac{2t}{c} \ln \frac{\theta}{t} & \beta \leq t \le \theta \le t+c \\[2ex]
        \displaystyle
        \frac{2 \beta}{\theta} + \frac{2t}{c} \ln \frac{t+c}{t} - 2\ln \frac{\theta}{t+c} - 2 & \beta \leq t \le t+c < \theta \\[2ex]
        \displaystyle
        \frac{2 \beta (\theta - t)}{c\theta} - \frac{(\beta-t)^2}{c \beta} - \frac{2(\theta - \beta)}{c} + \frac{2t}{c} \ln \frac{\theta}{\beta} & t < \beta \leq \theta \le t+c \\[2ex]        
        \displaystyle
        \frac{2 \beta}{\theta} + \frac{\beta^2 - t^2}{c \beta} + \frac{2t}{c} \ln \frac{t+c}{\beta} - 2 \ln \frac{\theta}{t+c} - 2 & t < \beta \le t+c < \theta \\[2ex]
        \displaystyle
        \frac{2 \beta}{\theta} + \frac{2t+c}{\beta} - 2\ln \frac{\theta}{\beta} - 2 & t \le t+c < \beta \le \theta
    \end{cases}
\]

We will consider $\int_0^\infty e^{g(t, c, \beta, \theta)} \dif{\theta}$.
According to the above explicit expressions, this integral will be divided into two or three segments depending on the ordinal relation of $\beta$, $t$, and $t+c$: $\beta \le t \le t+c$ (\textbf{case 1}), $t < \beta \le t+c$ (\textbf{case 2}), and $t \le t+c < \beta$ (\textbf{case 3}).
The first expression is common in all three cases.
Together with the second and third expressions, they cover case 1. 
Together with the fourth and fifth expressions, they cover case 2.
Finally, together with the last expression, they cover case 3.

Next, recall that function $h(t,c,\beta)$ is:
\[
    h(t, c, \beta) ~=~ 
    4 \int_{\min\{t, \beta\}}^{\beta} \frac{\min\{t'-t, c\}}{c} \dif{t'} 
    ~.
\]

Once again, let us remove the minimum and maximum operations, and rewrite the function explicitly in the above three cases.
\[
    h(t,c,\beta)=
    \begin{cases}
        0 & \beta \leq t \le t + c \\[1ex]
        \displaystyle
        \frac{2(\beta - t)^2}{c} & t < \beta \le t+c \\[1.5ex]
        4\beta - 4t - 2c & t \le t+c < \beta
    \end{cases}
\]

Finally, recall that for any $t, c > 0$, and $\beta \geq \frac{c}{2}$, we want to show that:
\[
    F(t,c,\beta) ~=~ 4t + 8\beta - 2\int_0^\infty e^{g(t, c, \beta, \theta)} \dif{\theta} - h(t, c, \beta) ~\geq~ 0
    ~.
\]

We will now prove the inequality in the three cases separately.
Since the analysis is already quite long, we will omit the proofs of simple facts about univariate functions.

\subsubsection{Case 1: $\beta \leq t \le t+c$}
We have:
\begin{align*}
    F(t,c,\beta) 
    &
    ~=~ 2t + 8\beta - 2 \int_t^{t+c} \exp \Big(\frac{2 \beta (\theta - t)}{c\theta} -\frac{2(\theta - t)}{c}  + \frac{2t}{c} \ln \frac{\theta}{t}\Big) \dif{\theta} \\[1.5ex]
    &
    \qquad\qquad\qquad
    - 2 \int_{t+c}^\infty \exp \Big(\frac{2\beta}{\theta} + \frac{2t}{c} \ln \frac{t+c}{t} - 2 \ln \frac{\theta}{t+c} - 2 \Big) \dif{\theta} 
    ~,
\end{align*}
where we subtracted $2 \int_0^t e^{g(t,c,\beta,\theta)} \dif{\theta} = 2\int_0^t 1 \dif{\theta}$ from the $4t$ term in the expression of $F(t, c, \beta)$.

The last integral has a closed form because the part related to $\theta$ can be written as:
\[
    \int_{t+c}^\infty \exp \Big(\frac{2\beta}{\theta} - 2 \ln \frac{\theta}{t+c} \Big) \dif{\theta} = \frac{(t+c)^2}{2 \beta} \big( e^{\frac{2 \beta}{t + c}} - 1 \big)
    ~.
\]

Combining the with other parts, function $F$ further simplifies to:
\begin{equation}
	\label{eqn:case1-expression}
    2t + 8\beta - 2 \int_t^{t+c} \exp \Big(\frac{2 \beta (\theta - t)}{c\theta} -\frac{2(\theta - t)}{c}  + \frac{2t}{c} \ln \frac{\theta}{t} \Big) \dif{\theta} - \frac{(t+c)^2}{e^2 \beta} \big(e^{\frac{2\beta}{t+c}} -1 \big) \Big(1 + \frac{c}{t}\Big)^{\frac{2t}{c}}
    ~.
\end{equation}

The above expression is concave in $\beta$ because the last two (negative) terms are convex.
The integral is convex because the integrand is convex for any $\theta \ge t$ by the convexity of function $e^x$.
The last part is convex by the convexity of $\frac{e^x-1}{x}$.
Hence, it suffices to verify the non-negativity of the expression at the two endpoints $\beta = \frac{c}{2}$ and $\beta=t$.

\paragraph{Sub-case 1.1: $\beta = \frac{c}{2}$.}
\Cref{eqn:case1-expression} becomes:
\[
    2t + 4c - 2 \int_t^{t+c} \exp \Big(\frac{\theta - t}{\theta} -\frac{2(\theta - t)}{c}  + \frac{2t}{c} \ln \frac{\theta}{t} \Big) \dif{\theta} - \frac{2(t+c)^2}{e^2 c} \big(e^{\frac{c}{t+c}} -1 \big) \Big(1 + \frac{c}{t}\Big)^{\frac{2t}{c}}
    ~.
\]

Next, we change variables with $x = \frac{c}{t} \in [0,2]$, noting that $c \le 2\beta \le 2t$, and $y= \frac{\theta}{t} \in [1, 1+x]$ in the integral.
Further, we divide both sides by $2$. 
It remains to show that:
\begin{equation}
\label{eqn:case1.1}
    1 + 2x - \int_1^{1+x} e^{\frac{y-1}{y} - \frac{2(y-1)}{x}} y^{\frac{2}{x}} \dif{y} - \frac{(1+x)^2}{e^2x}(e^{\frac{x}{1+x}} - 1) (1+x)^{\frac{2}{x}} 
    ~\ge~ 0
    ~.
\end{equation}

It follows by combining the two inequalities below.

\begin{lemma}
	\label{lem:case1.1-a}
	For any $0 \le x \le 2$:
	\[
		\int_1^{1+x} e^{\frac{y-1}{y} - \frac{2(y-1)}{x}} y^{\frac{2}{x}} \dif{y} ~\le~ 1.28 x
		~.
	\]
\end{lemma}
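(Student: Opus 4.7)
The plan is to prove the stronger pointwise bound $\phi(y)\le 1.28$ for every $y\in[1,1+x]$ and every $x\in(0,2]$, where $\phi(y):=\exp\bigl(\tfrac{y-1}{y}-\tfrac{2(y-1)}{x}\bigr)\,y^{2/x}$ denotes the integrand. Integrating this pointwise inequality over the length-$x$ interval $[1,1+x]$ then yields $\int_1^{1+x}\phi(y)\,dy\le 1.28\,x$, as required; the boundary case $x=0$ is trivial since both sides vanish.

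To locate the maximum of $\phi$, fix $x\in(0,2]$ and consider $L(y):=\ln\phi(y)=\tfrac{y-1}{y}-\tfrac{2(y-1)}{x}+\tfrac{2}{x}\ln y$. A direct computation gives $L''(y)=-\tfrac{2}{y^3}-\tfrac{2}{xy^2}<0$, so $L$ is strictly concave on $[1,1+x]$, and the first-order condition $L'(y)=\tfrac{1}{y^2}+\tfrac{2}{xy}-\tfrac{2}{x}=0$ rearranges to the quadratic $2y^2-2y-x=0$. Its positive root $y^\star(x)=\tfrac{1+\sqrt{1+2x}}{2}$ lies in $(1,1+x)$ for every $x\in(0,2]$, so by concavity it is the unique global maximizer of $\phi$ on $[1,1+x]$.

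Introducing $r=\sqrt{1+2x}\in[1,\sqrt 5]$, so that $y^\star=(1+r)/2$ and $x=(r^2-1)/2$, a short algebraic reduction yields the closed form
\[
L^\star(x) \;:=\; \ln\phi\bigl(y^\star(x)\bigr) \;=\; \frac{r-3}{r+1}+\frac{4\ln\bigl((r+1)/2\bigr)}{r^2-1}.
\]
The main obstacle is to certify that $L^\star(x)\le\ln(1.28)$ throughout $[0,2]$. I would verify by a short calculus argument that $L^\star$ is monotonically increasing in $r$ on $[1,\sqrt 5]$, so its supremum is attained at the right endpoint $r=\sqrt 5$ (i.e., $x=2$), where
\[
L^\star(2)\;=\;2-\sqrt 5+\ln\varphi,\qquad \varphi=\tfrac{1+\sqrt 5}{2}.
\]
A numerical evaluation gives $L^\star(2)\approx 0.2451<\ln(1.28)\approx 0.2469$, hence $\phi(y)\le e^{L^\star(2)}<1.28$ throughout $[1,1+x]$, which completes the proof.
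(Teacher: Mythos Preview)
Your approach is correct and follows the same high-level strategy as the paper: bound the integrand pointwise by $1.28$ and integrate over an interval of length $x$. Both arguments land on the identical extremal value $e^{2-\sqrt{5}}\varphi\approx 1.278$ at $y=\varphi$. The execution differs, however. You fix $x$, maximize over $y$ via the concavity of $L$, and then assert that the envelope $L^\star(x)$ is increasing on $[0,2]$; this monotonicity is only promised (``I would verify by a short calculus argument''), not carried out. The paper instead factors the integrand as $\bigl(ye^{-(y-1)}\bigr)^{2/x}\cdot e^{(y-1)/y}$ and notes that since $ye^{-(y-1)}\le 1$ and $2/x\ge 1$, the first factor is at most $ye^{-(y-1)}$; this eliminates $x$ in one line and reduces immediately to the single-variable maximization of $e^{\ln y+2-y-1/y}$. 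Besides being shorter, the paper's factorization also supplies your missing monotonicity step for free: since $\phi_x(y)$ is pointwise nondecreasing in $x$ for every fixed $y\ge 1$, so is $\max_y\phi_x(y)=e^{L^\star(x)}$.
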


\begin{lemma}
	\label{lem:case1.1-b}	
	For any $0 \le x \le 2$:	
	\[
		\frac{(1+x)^2}{e^2x}(e^{\frac{x}{1+x}} - 1) (1+x)^{\frac{2}{x}} ~\le~ 1 + 0.72x
		~.
	\]
\end{lemma}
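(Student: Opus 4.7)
The plan is to reduce the inequality to a univariate calculus claim via a change of variables, then verify the reduced claim by analyzing its derivative.

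After the substitution $t = x/(1+x)$ (so $t \in [0, 2/3]$ as $x$ ranges over $[0, 2]$), one uses $1 + x = 1/(1-t)$ and $2/x = 2(1-t)/t$ to obtain the identity
\[
(1+x)^2 \, (1+x)^{2/x} \,=\, (1-t)^{-2/t},
\]
since the two exponents $2$ and $2(1-t)/t$ combine to $2/t$ in the base $1/(1-t)$. The lemma is then equivalent to $\frac{(1-t)(e^t - 1)(1-t)^{-2/t}}{e^2 t} \le \frac{1 - 0.28 t}{1 - t}$, and taking logarithms reduces this to showing $\psi(t) \ge 0$ on $[0, 2/3]$, where
\[
\psi(t) \,:=\, \log(1 - 0.28 t) - 2 \log(1-t) + \log t - \log(e^t - 1) + 2 + \tfrac{2}{t} \log(1-t).
\]
All apparent singularities at $t = 0$ are removable: $(2/t)\log(1-t) \to -2$ and $\log t - \log(e^t - 1) \to 0$, so $\psi(0) = 0$ by continuity.

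It therefore suffices to prove $\psi'(t) \ge 0$ on $(0, 2/3]$. A direct calculation, using the cancellation $\frac{2}{1-t} - \frac{2}{t(1-t)} = -\frac{2}{t}$, yields
\[
\psi'(t) \,=\, -\frac{0.28}{1 - 0.28 t} - \frac{1}{t} - \frac{e^t}{e^t - 1} - \frac{2 \log(1-t)}{t^2}.
\]
I would then combine the power series of the three pieces singular at zero, using $\frac{e^t}{e^t-1} = \frac{1}{t} + \frac{1}{2} + \frac{t}{12} + O(t^3)$ and $-\frac{2\log(1-t)}{t^2} = \frac{2}{t} + 1 + \frac{2t}{3} + \frac{t^2}{2} + \cdots$. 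All $1/t$ singularities cancel, the constant term of $\psi'(t)$ is $-0.28 - \frac{1}{2} + 1 = 0.22 > 0$, and the linear coefficient works out to $\frac{7}{12}$. Continuing the expansion, I would verify that every remaining coefficient of the resulting convergent series is nonnegative, cutting off the tail with a geometric-type bound valid for $t \le 2/3 < 1$.

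The main obstacle is the bookkeeping for the cancellations of singular terms and signs of the combined series coefficients, because three of the four pieces of $\psi'$ are individually singular at $t = 0$. A robust backup, which I expect is what the paper alludes to when it ``omits proofs of simple facts about univariate functions,'' is to partition $[0, 2/3]$ into a handful of subintervals (say of length $0.1$) and on each subinterval use monotonicity of the four pieces of $\psi'$ together with endpoint values to extract a rigorous numerical lower bound. Since the leading behavior $\psi'(t) \approx 0.22 + \tfrac{7}{12} t$ is comfortably positive and the interval is short, this routine subdivision check closes the argument on the full interval.
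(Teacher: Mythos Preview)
Your approach is correct in outline and would close the lemma, but it is genuinely different from the paper's proof. A couple of small remarks first: your stated linear coefficient $7/12$ for $\psi'$ omits the contribution $-0.28^2$ from the geometric piece $-\tfrac{0.28}{1-0.28t}$; the true linear coefficient is $7/12 - 0.0784\approx 0.505$, still comfortably positive, so this does not affect the strategy. Your claim that all higher coefficients are nonnegative is true (the $\tfrac{2}{n+2}$ term from the logarithmic piece dominates both the geometric $0.28^{n+1}$ and the Bernoulli contribution, which is $O((2\pi)^{-n})$), but actually writing that down cleanly is the real work in your route.

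The paper proceeds quite differently: it factors the left side as $\big[\tfrac{1+x}{x}(e^{x/(1+x)}-1)\big]\cdot\big[\tfrac{1}{e^2}(1+x)^{2/x+1}\big]$ and bounds each factor separately. For the first, it uses the cubic truncation $e^z-1\le z+\tfrac12 z^2+\tfrac15 z^3$ on $z=\tfrac{x}{1+x}\le\tfrac23$ to get $\tfrac{1+x}{x}(e^{x/(1+x)}-1)\le 1+\tfrac{x}{2}$. For the second, it invokes the standalone fact $(1+x)^{2/x+0.82}\le e^2$ on $[0,2]$, which kills the $e^{-2}$ and leaves only $(1+x)^{0.18}$. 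The whole inequality then collapses to the polynomial-type bound $(1+x)^{0.18}(2+x)\le 2+1.44x$, which is checked by convexity at the two endpoints $x=0$ and $x=2$. This decomposition is shorter and reuses Fact~1 elsewhere in the appendix, whereas your substitution $t=x/(1+x)$ and monotonicity-of-$\psi$ argument is more systematic but ends in either a coefficientwise series verification or a numerical subdivision, both of which are heavier to write out in full.
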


\begin{proof}[Proof of \Cref{lem:case1.1-a}]
It suffices to verify that for any $1 \le y \le 1+x$: 
\[
    e^{\frac{y-1}{y} - \frac{2(y-1)}{x}} y^{\frac{2}{x}} \leq 1.28
    ~.
\]

The left-hand-side can be written as:
\[
	\big( y e^{-(y-1)} \big)^{\frac{2}{x}} \cdot e^{\frac{y-1}{y}} 
	~\le~ 
	\big( y e^{-(y-1)} \big) \cdot e^{\frac{y-1}{y}} 
	~=~ 
	e^{\ln y + 2-y-\frac{1}{y}}
	~.
\]
where the inequality follows by $x \le 2$ and  $y = 1 + (y-1) \le e^{y-1}$.
By the first-order condition of the exponent, the right-hand-side achieves its maximum value about $1.278 < 1.28$ at $y = \frac{\sqrt{5}+1}{2}$.
\end{proof}

\begin{proof}[Proof of \Cref{lem:case1.1-b}]
We start by relaxing $\frac{1+x}{x}(e^{\frac{x}{1+x}} - 1)$.
For $z = \frac{x}{1+x} \leq \frac{2}{3}$:
\[
    e^z - 1
    ~\leq~ 
    z + \frac{1}{2} z^2 + \frac{1}{5} z^3
    ~.
\]

We get that:
\[
    \frac{1+x}{x} \big( e^{\frac{x}{1+x}}-1 \big)
    ~\leq~ 
    1 + \frac{1}{2} \cdot \frac{x}{1+x} + \frac{1}{5} \cdot (\frac{x}{1+x})^2  
    ~=~ 
    1 + \frac{x}{2} \cdot \Big(1 - \frac{x(5x+3)}{5(1+x)^2} \Big) 
    ~\leq~ 
    1 + \frac{x}{2}
    ~.
\]

Next, we relax the $(1+x)^{\frac{2}{x}+2}$ term.
We state this simple fact formally below as it will be reused in the proof of another case.

\begin{fact}
	\label{fact:1}
	For any $0 \le x \le 2$:
	\[
	 	(1+x)^{\frac{2}{x} + 0.82} \leq e^2
	 	~.
	 \]
\end{fact}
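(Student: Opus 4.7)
The plan is to take logarithms and reduce the claim to a one-variable calculus exercise on $(0, 2]$. Taking $\ln$ of both sides, the inequality $(1+x)^{2/x + 0.82} \le e^2$ is equivalent to
\[
    \ln(1+x) ~\le~ \frac{2x}{2 + 0.82 \, x} \qquad \forall x \in (0, 2].
\]
I would set $g(x) \defeq \frac{2x}{2 + 0.82 x} - \ln(1+x)$ and show $g \ge 0$ on $[0, 2]$; the original claim at $x = 0$ is then recovered by continuity, since $g(0) = 0$.

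The key observation is that $g'(x) = \frac{4}{(2 + 0.82 x)^2} - \frac{1}{1+x}$ has the same sign as $4(1+x) - (2 + 0.82 x)^2 = x \, (0.72 - 0.6724 \, x)$. Hence $g$ is increasing on $\bigl[0, \tfrac{0.72}{0.6724}\bigr]$ and decreasing on $\bigl[\tfrac{0.72}{0.6724}, 2\bigr]$. Combined with $g(0) = 0$, this unimodal shape reduces the whole claim to a single numerical check at the right endpoint: $g(2) \ge 0$.

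That check is $g(2) = \frac{4}{3.64} - \ln 3 = \frac{100}{91} - \ln 3$. Summing a few terms of the classical series $\ln 3 = 2 \sum_{k \ge 0} \frac{1}{(2k+1) \, 2^{2k+1}}$, whose tail past index $K$ is bounded above by $\frac{8}{3(2K+1) \, 4^K}$ (all terms are positive and the factor $\frac{1}{2k+1}$ is monotone in $k$), yields the rigorous bound $\ln 3 < 1.0987$, while $\frac{100}{91} > 1.0989$, so $g(2) > 0$.

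The only real obstacle lies in this last step: the slack at the endpoint is only about $3 \times 10^{-4}$, so one must carry enough precision in bounding $\ln 3$ from above to make the estimate rigorous. Apart from that quantitative subtlety, the entire proof is a routine first-derivative sign analysis.
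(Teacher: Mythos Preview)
Your proof is correct and follows essentially the same strategy as the paper: establish unimodality on $[0,2]$ and reduce to the endpoint check at $x=2$, where $3^{1.82}\approx 7.385<e^2\approx 7.389$. The paper merely asserts that $(1+x)^{2/x+0.82}$ first decreases then increases, whereas you supply the missing derivative computation (after taking logarithms) that actually justifies the unimodality; your stated tail-bound constant $\tfrac{8}{3(2K+1)4^K}$ is a bit loose, but it is still a valid upper bound and suffices once enough terms are summed.
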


This function first decreases and then increases in $0 \le x \le 2$.
We choose the constant $0.82$ such that the value at $x = 2$ is $3^{1.82} \approx 7.385$, only slightly smaller than $e^2 \approx 7.389$.

Putting these two inequalities to the left-hand-side of the lemma, it remains to show that:
\[
    (1+x)^{0.18} \big(2 + x \big) \leq 2 + 1.44x
    ~.
\]

The left-hand-side is convex in $x \ge 0$.
This becomes apparent after changing variables with $y = 1 + x$ and letting $a = 0.18$, since $y^a + y^{1+a}$ is convex in $y \ge 1$ for any $a \ge 0$.
Hence, it remains to verify the inequality at $x = 0$ (with equality) and $x = 2$ (left $\approx 4.564 < 4.88 =$ right).
\end{proof}

\paragraph{Sub-case 1.2: $\beta= t$.}
\Cref{eqn:case1-expression} becomes:
\[
    10t - 2 \int_t^{t+c} \exp \Big(- \frac{2(\theta-t)^2}{c \theta} + \frac{2t}{c} \ln \frac{\theta}{t} \Big) \dif{\theta} - \frac{(t+c)^2}{e^2 t} \big(e^{\frac{2t}{t+c}} -1 \big) \Big( 1 + \frac{c}{t} \Big)^{\frac{2t}{c}}
    ~.
\]

Changing variables with $x = \frac{c}{t} \in [0,2]$ and $y= \frac{\theta}{t} \in [1, 1+x]$, it remains to show that:
\begin{equation}
\label{eqn:case1.2}
    10 - 2\int_1^{1+x} e^{-\frac{2(y-1)^2}{xy}} y^{\frac{2}{x}} \dif{y} - \frac{1}{e^2} (e^{\frac{2}{1+x}} - 1)(1+x)^{\frac{2}{x}+2}
    ~\ge~ 0
    ~.
\end{equation}

First, observe that $(1+x)^{\frac{2}{x}+1}$ is increasing in $x$, and thus, is at most $9$.
Hence, the last term of \Cref{eqn:case1.2} is at most:
\[
    \frac{9}{e^2} (e^{\frac{2}{1+x}} - 1)(1+x)
    ~.
\]

Next, we consider the integral in \Cref{eqn:case1.2}.
Since $y^{\frac{2}{x}}$ is increasing in $y$ and $e^{-\frac{2(y-1)^2}{xy}}$ is decreasing in $y$, by Chebyshev's sum inequality, we have:
\[
    \int_1^{1+x} e^{-\frac{2(y-1)^2}{xy}} y^{\frac{2}{x}} \dif{y} ~\le~ \frac{1}{x} \int_1^{1+x} e^{-\frac{2(y-1)^2}{xy}} \dif{y} \int_1^{1+x} y^{\frac{2}{x}} \dif{y} 
    ~.
\]

Applying these two relaxations to \Cref{eqn:case1.2}, rearranging terms, and dividing them by $2$, the inequality reduces to:
\[
	\frac{1}{x} \int_1^{1+x} e^{-\frac{2(y-1)^2}{xy}} \dif{y} \int_1^{1+x} y^{\frac{2}{x}} \dif{y} 
	~+~
	\frac{9}{2e^2} (e^{\frac{2}{1+x}} - 1)(1+x) 
	~\le~ 
	5
	~.
\]

The second integral on the left-hand-side has a closed form:
\[
    \int_1^{1+x} y^{\frac{2}{x}} \dif{y} 
    ~=~ 
    \frac{x}{2+x}\Big((1+x)^{\frac{2}{x}+1} - 1\Big)
    ~\le~
    \frac{8x}{2+x}
    ~,
\]
where the inequality follows by $(1+x)^{\frac{2}{x}+1} \le 9$ for $x \le 2$ as argued above.

For $0 \le x \le 1$, we relax the first integral using $e^{-\frac{2(y-1)^2}{xy}} \leq 1$, the inequality becomes:
\[
	\frac{8x}{2+x} + \frac{9}{2e^2} (e^{\frac{2}{1+x}} - 1)(1+x) \le 5
	~.
\]

Multiplying both sides with $2+x$ and rearranging terms:
\[
	3x + \frac{9}{2e^2} (e^{\frac{2}{1+x}} - 1)(1+x)(2+x) \le 10
	~.
\]

The left-hand-side is convex in $x$, noting that $z(z+1)(e^{\frac{1}{z}} - 1)$ is convex in $z > 0$.
Further, the inequality holds at $x = 0$ (i.e., $\frac{9}{e^2} (e^2-1) \approx 7.78< 10$) and $x = 1$ (i.e., $3 + \frac{27}{e^2}(e-1) \approx 9.28< 10$).

For $1 \le x \le 2$, we bound the first integral by:
\[
    \int_1^{1+x} e^{-\frac{2(y-1)^2}{xy}} \dif{y} 
    ~\leq~ 
    \int_1^2 1 \dif{y} + \int_2^{1+x} e^{-\frac{y-1}{x}} \dif{y} 
    ~=~ 
    \frac{1}{x} + \frac{1}{e} - e^{-\frac{1}{x}}
    ~.
\]

The inequality becomes:
\[
	\frac{8x}{2+x}\big(\frac{1}{x} + \frac{1}{e} - e^{-\frac{1}{x}} \big) + \frac{9}{2e^2}(1+x)(e^{\frac{2}{1+x}} - 1) \le 5
    ~.
\]

The inequality holds at $x = 1$, as already verified in the previous case.
Further, the left-hand-side is decreasing in $x$.
The first term is decreasing because both $\frac{1}{2+x}$ and $x \big( \frac{1}{x} + \frac{1}{e} - e^{-\frac{1}{x}} \big)$ are decreasing, where the derivative of the latter is:
\[
    \frac{1}{e} - \Big(1-\frac{1}{x}\Big)e^{-\frac{1}{x}} > \frac{1}{e} - e^{-\frac{1}{x}} \geq 0
    ~.
\]

The second term is decreasing because $\frac{e^z - 1}{z}$ is increasing.

\subsubsection{Case 2: $t < \beta \le t+c$}
We have:
\begin{align*}
    F(t,c,\beta)
    &
    = 4t + 6\beta - 2 \int_{\beta}^{t+c} \exp \Big(\frac{2 \beta (\theta - t)}{c\theta} - \frac{(\beta-t)^2}{c \beta} - \frac{2(\theta - \beta)}{c} + \frac{2t}{c} \ln \frac{\theta}{\beta} \Big) \dif{\theta} \\[1.5ex]
    &
    \qquad -  2 \cdot \exp \Big( \frac{\beta^2-t^2}{c\beta} + \frac{2t}{c} \ln \frac{t+c}{\beta} - 2 \Big) \int_{t+c}^{\infty} \exp \left(\frac{2 \beta}{\theta}-2 \ln \frac{\theta}{t+c}\right) \dif{\theta} - \frac{2(\beta - t)^2}{c}
    ~,
\end{align*}
where we subtracted $2 \int_0^\beta e^{g(t,c,\beta,\theta)} \dif{\theta} = 2 \int_0^\beta 1 \dif{\theta}$ from the $8\beta$ term in the expression of $F(t, c, \beta)$.

The second integral has a closed form:
\[
    \int_{t+c}^{\infty} \exp \left(\frac{2 \beta}{\theta}-2 \ln \frac{\theta}{t+c}\right) \dif{\theta} = \frac{(t+c)^2}{2\beta}\big(e^{\frac{2\beta}{t+c}} - 1 \big)
    ~.
\]

Hence, function $F$ simplifies into:
\begin{align*}
    &
    4t + 6\beta - 2 \int_{\beta}^{t+c} \exp \Big(\frac{2 \beta (\theta - t)}{c\theta} - \frac{(\beta-t)^2}{c \beta} - \frac{2(\theta - \beta)}{c} + \frac{2t}{c} \ln \frac{\theta}{\beta} \Big) \dif{\theta} \\[1.5ex]
    &
    \qquad - \frac{(t+c)^2}{\beta} \big(e^{\frac{2\beta}{t+c}} - 1 \big) \exp \Big( \frac{\beta^2-t^2}{c\beta} + \frac{2t}{c} \ln \frac{t+c}{\beta} - 2 \Big) - \frac{2(\beta - t)^2}{c} 
    ~.
\end{align*}

Let $x = \frac{\beta}{t}$, $y = \frac{c}{t}$ and $z = \frac{\theta}{t}$, where:
\begin{equation}
	\label{eqn:case2-xy-range}
	1 \le x \le 1 + y ~,\quad y \le 2x 
	~.
\end{equation}

The equation becomes:
\begin{equation}
    \label{eqn:case2}
    \begin{aligned}
    &
    4 + 6x - 2 \int_x^{1+y} \exp \bigg(\frac{2x(z-1)}{yz} - \frac{(x-1)^2}{xy} - \frac{2(z-x)}{y} + \frac{2}{y} \ln \frac{z}{x} \bigg) \dif{z} \\[1.5ex]
    &
    \qquad - \frac{(1+y)^2}{x} \big(e^{\frac{2x}{1+y}} - 1 \big) \exp \Big( \frac{x^2-1}{xy} + \frac{2}{y} \ln \frac{1+y}{x} - 2 \Big) - \frac{2(x - 1)^2}{y} 
    ~.
    \end{aligned}
\end{equation}

Next, we show that the integral above is upper bounded by a much simpler form.

\begin{lemma}
\label{lem:case2-relaxation}
For any $x, y$ satisfying \Cref{eqn:case2-xy-range}, we have:
\begin{equation*}
    \int_x^{1+y} \exp \bigg(\frac{2x(z-1)}{yz} - \frac{(x-1)^2}{xy} - \frac{2(z-x)}{y} + \frac{2}{y} \ln \frac{z}{x} \bigg) \dif{z} 
    ~\leq~ 
    \frac{3}{2} + \frac{2y}{3}
    ~.
\end{equation*}
\end{lemma}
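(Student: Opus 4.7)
The plan is to first factor the exponent through its value at $z = x$. A direct computation verifies the identity
\begin{equation*}
\phi(z) \defeq \frac{2x(z-1)}{yz} - \frac{(x-1)^2}{xy} - \frac{2(z-x)}{y} + \frac{2}{y}\ln\frac{z}{x} \;=\; \frac{x^2-1}{xy} - \frac{2(z-x)(z-1)}{yz} + \frac{2}{y}\ln\frac{z}{x},
\end{equation*}
so the integrand factors as $e^{(x^2-1)/(xy)} \cdot e^{-2(z-x)(z-1)/(yz)} \cdot (z/x)^{2/y}$. Specializing to $x = 1$ makes the leading factor equal $1$ and reduces the lemma to exactly the integral controlled in Sub-case 1.2 of Case 1. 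For general $x$, since $z \ge x \ge 1$, the middle factor lies in $(0,1]$ and is strictly decreasing in $z$ (as $(z-x)(z-1)/z$ has derivative $1 - x/z^2 \ge 0$ on the integration range), while the last factor is strictly increasing.

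I would split the analysis into two regimes in $y$. For moderate $y$, I drop the middle factor using $e^{-2(z-x)(z-1)/(yz)} \le 1$ and compute the resulting elementary integral explicitly:
\begin{equation*}
\int_x^{1+y} e^{\phi(z)}\dif{z} \;\le\; \frac{xy}{2+y}\,e^{(x^2-1)/(xy)}\left[\Big(\frac{1+y}{x}\Big)^{(2+y)/y} - 1\right].
\end{equation*}
Showing this is at most $\frac{3}{2}+\frac{2y}{3}$ on the polytope $\{1 \le x \le 1+y,\, y \le 2x\}$ is an exercise in elementary calculus, reducing by monotonicity and convexity in $x$ to the boundary values $x \in \{1, y/2, 1+y\}$ (the last trivial since the integration range collapses), in the spirit of Sub-case 1.1. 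For large $y$ this crude drop is too lossy, because when $x=1$ the true integrand averages to about $2\ln(1+y)/y$ rather than $1$. For that regime I would instead apply Chebyshev's sum inequality to the monotone pair above, mimicking Sub-case 1.2: split $[x, 1+y]$ at $z = x+1$, bound $\int e^{-2(z-x)(z-1)/(yz)}\dif{z}$ using $e^{-2(z-x-1)/y}$ as a tail estimate for $z > x+1$, and combine with the closed form of $\int (z/x)^{2/y}\dif z$ to obtain an overall linear-in-$y$ bound with slope at most $2/3$.

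The main obstacle will be verifying the final inequality uniformly in $(x,y)$, and in particular stitching the moderate-$y$ and large-$y$ estimates together at their interface. A calculation yields
\begin{equation*}
\frac{\partial \phi}{\partial x}(z) \;=\; \frac{1}{y}\left[\Big(\tfrac{1}{x}-1\Big)^2 + 2\cdot\frac{z-1}{z}\right] \;\ge\; 0,
\end{equation*}
so $e^{\phi(z)}$ is increasing in $x$ pointwise; yet the lower limit of integration simultaneously shrinks the domain, so the map $x \mapsto \int_x^{1+y} e^{\phi(z)}\dif z$ is not monotone in general, and a delicate case split (or convexity check of the auxiliary upper bound) will be needed to reduce to a one-parameter family. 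I do not expect any new conceptual ingredients beyond those already deployed in Case 1.
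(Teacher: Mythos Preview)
Your factorization of the exponent is correct and is exactly the paper's first step. The gap is in the next move: dropping the middle factor via $e^{-2(z-x)(z-1)/(yz)}\le 1$ is too lossy not just for ``large $y$'' but already at $y=1$, $x=1$. There your bound gives $\int_1^2 z^2\dif z=\tfrac{7}{3}$, while the target is $\tfrac{3}{2}+\tfrac{2}{3}=\tfrac{13}{6}<\tfrac{7}{3}$; in fact the crude drop fails for all $y\gtrsim 0.85$ at $x=1$. So your ``moderate~$y$'' regime is very narrow, and your Chebyshev patch would have to cover nearly the whole range~--- including values of $y$ well below the threshold where asymptotic-in-$y$ reasoning helps~--- while also controlling the prefactor $e^{(x^2-1)/(xy)}$ that was absent in Sub-case~1.2.

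The paper sidesteps the regime split entirely. It observes that $k\defeq 2(z-1)(z-x)/(yz)$ ranges over $[0,\tfrac{4}{3}]$ on the domain (using $y\le 2x$), and on that interval one has the sharper linear bound $e^{-k}\le 1-\tfrac{k}{2}$. Substituting this keeps the integral elementary and yields a single closed form,
\[
\Big(\tfrac{e^{x-1/x}}{x^{2}}\Big)^{1/y}\,(1+x+y)\,\Big[(1+y)^{2/y+1}-x^{2/y+1}\Big]\,\frac{y}{2(1+y)(2+y)},
\]
which the paper then shows is \emph{non-increasing} in $x\ge 1$ (not convex~--- your hedging between monotonicity and convexity is a symptom of not having the right closed form). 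This reduces the check to $x=\max\{1,y/2\}$, after which the verification is a one-variable inequality in~$y$. The linear bound $e^{-k}\le 1-k/2$ is the missing idea; with it, no regime split or Chebyshev step is needed.
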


\begin{proof}
Rearranging the first three terms in the exponent of the integrand:
\begin{align*}
	\frac{2x(z-1)}{yz} - \frac{(x-1)^2}{xy} - \frac{2(z-x)}{y} 
	&
	~=~ 
	\bigg( \frac{2x(z-1)}{yz} - \frac{2(z-1)}{y} \bigg) - \bigg( \frac{(x-1)^2}{xy} + \frac{2(1-x)}{y} \bigg) \\
	&
	~=~ - \frac{2(z-1)(z-x)}{yz} + \frac{x^2-1}{xy} 
	~.
\end{align*}

We rewrite the integral as:
\[
    e^{\frac{x^2-1}{xy}} \int_x^{1+y} e^{-\frac{2(z-1)(z-x)}{zy}} \cdot \Big( \frac{z}{x} \Big)^{\frac{2}{y}} \dif{z}
    ~.
\]

Observe that ${\frac{2(z-1)(z - x)}{z y}}$ is increasing in $z$, with minimum value $0$ at $z = x$ and maximum value $2(1-\frac{x}{y+1}) \leq 2(1-\frac{x}{2x+1}) \leq \frac{4}{3}$ at $z = y+1$.
By $e^{-k} \leq 1 - \frac{k}{2}$ for $k \in [0,\frac{4}{3}]$, the above is at most:
\begin{align*}
    &
    e^{\frac{x^2-1}{xy}} \int_{x}^{1+y} \bigg(1 -\frac{(z-1)(z - x)}{z y}\bigg) \Big( \frac{z}{x} \Big)^{\frac{2}{y}} \dif{z}  \\[1.5ex]
    & \qquad 
    = \frac{1}{y} \bigg(\frac{e^{x-\frac{1}{x}}}{x^{2}} \bigg)^{\frac{1}{y}} \int_{x}^{1+y} \Big( (1+x+y)z^{\frac{2}{y}} - z^{\frac{2}{y}+1} - x z^{\frac{2}{y}-1} \Big) \dif{z} \\[1.5ex]
    & \qquad
    = \bigg(\frac{e^{x-\frac{1}{x}}}{x^{2}} \bigg)^{\frac{1}{y}} (1+x+y) \bigg[(1+y)^{\frac{2}{y}+1} - x^{\frac{2}{y}+1} \bigg] \frac{y}{2(1+y)(2+y)} %
    ~.
\end{align*}

Hence, it suffices to show that:
\begin{equation}
	\label{eqn:case2-b}
    \bigg(\frac{e^{x-\frac{1}{x}}}{x^{2}} \bigg)^{\frac{1}{y}} (1+x+y) \bigg[(1+y)^{\frac{2}{y}+1} - x^{\frac{2}{y}+1} \bigg] \frac{y}{2(1+y)(2+y)}   ~\le~ \frac{3}{2} + \frac{2y}{3}
    ~.
\end{equation}

We first prove the following lemma, which allows us to eliminate $x$ by replacing it with $\max \{1, \frac{y}{2} \}$ according to \Cref{eqn:case2-xy-range}.

\begin{lemma}
	\label{lem:case2-b-monotone}
	The left-hand-side of \Cref{eqn:case2-b} is non-increasing in $x \ge 1$.
\end{lemma}

\begin{proof}
The part involving $x$ is:
\[
    \bigg(\frac{e^{x-\frac{1}{x}}}{x^{2}} \bigg)^{\frac{1}{y}} (1+x+y) \bigg[(1+y)^{\frac{2}{y}+1} - x^{\frac{2}{y}+1} \bigg]
    ~.
\]

Consider its derivative:
\begin{align*}
    e^{\frac{1}{y}(x-\frac{1}{x})} \bigg[ 
    \underbrace{\vphantom{\Bigg|} \bigg( 1 + \frac{1}{y} \Big(1 - \frac{1}{x}\Big)^2 \bigg) (1+y)^{\frac{2}{y}+1} x^{-\frac{2}{y}}
    - 
    \bigg( \Big(1+\frac{1}{y}\Big) x - \frac{2}{y} + \frac{1}{y} \frac{1}{x} \bigg)
    - 
    (1+x+y) \Big( 1+\frac{2}{y} \Big)}_{(\star)}
    \bigg] 
    ~.
\end{align*}

The first term of $(\star)$ is decreasing in $x \ge 1$, while the (negative) second and third terms are increasing in $x \ge 1$. 
Hence, the three terms combined are decreasing in $x \ge 1$. 
It remains to verify that their combination is non-positive at $x = 1$ (if $y \le 2$) and $x = \frac{y}{2}$ (if $y > 2$).

In the former case, we have:
\[
    (\star) ~=~ (1+y)^{\frac{2}{y}+1} - 1 - \frac{(y+2)^2}{y} ~\le~ 3^2 - 1 - \frac{4^2}{2} ~=~ 0
    ~,
\]
where the inequality follows because $(1+y)^{\frac{2}{y}+1}$ is increasing and $\frac{(y+2)^2}{y}$ is decreasing for $y \leq 2$.

In the latter case, we have:
\[
	\frac{1}{x} \cdot (\star) ~=~ \bigg(1 + \frac{1}{2x} \Big(1-\frac{1}{x}\Big)^2 \bigg)  \bigg(\Big(2 + \frac{1}{x}\Big)^{\frac{1}{x}+1} - 1 \bigg) - \Big(3+\frac{1}{x}\Big) \Big(1+\frac{1}{x}\Big)
	~.
\]

Changing variable with $z = \frac{1}{x} \le 1$, the above equals:
\[
	\Big( 1 + \frac{1}{2} z (1-z)^2 \Big) \Big( (2+z)^{z+1} - 1 \Big) - (3+z)(1+z)
	~.
\]

Note that $(2+z)^{z+1} \le 2 + 7z$ with equality at $z = 0$ and $z = 1$. 
Further, relax the $z$ in front of $(1-z)^2$ to $1$ to simplify calculation.
The non-positivity of $(\star)$ reduces to:
\[
	\Big( 1 + \frac{1}{2} (1-z)^2 \Big) ( 1 + 7z ) - (3+z)(1+z) = \frac{1}{2} (1-z)(-3+8z-7z^2) \le 0
	~,
\]
where the last inequality follows by $3 + 7z^2 \ge 2\sqrt{21} z \ge 8z$.
\end{proof}

By \Cref{lem:case2-b-monotone}, it suffices to verify \Cref{eqn:case2-b} at $x = 1$ (if $y \le 2$) and $x = \frac{y}{2}$ (if $y > 2$).
In the former case, \Cref{eqn:case2-b} at $x = 1$ is:
\[
    \Big((1+y)^{\frac{2}{y} + 1} - 1\Big) \frac{y}{2(1+y)}  ~\le~ \frac{3}{2} + \frac{2y}{3}
    ~.
\]

Rearranging the terms, this is equivalent to:
\[
    (1+y)^{\frac{2}{y}+1} \le \frac{4y}{3} + \frac{3}{y} + \frac{16}{3} 
    ~.
\]

This holds because the left-hand-side is increasing in $y$ with maximum value $9$ at $y = 2$, while the right-hand-side achieves minimum value $4 + \frac{16}{3} > 9$ at $y = \frac{2}{3}$ by AM-GM.
In the latter case, \Cref{eqn:case2-b} at $x = \frac{y}{2}$ (written in $x$ for nicer coefficients):
\[
    e^{\frac{1}{2}-\frac{1}{2x^2}} \bigg(\Big(2 + \frac{1}{x}\Big)^{\frac{1}{x}+1} - 1 \bigg) \frac{x^2(1+3x)}{2(1+2x)(1+x)} 
    ~\le~
    \frac{3}{2} + \frac{4}{3} x 
    ~.
\]

Dividing both sides by $x$ and changing variables with $z = \frac{1}{x} \le 1$, it becomes:
\[
	e^{\frac{1}{2}(1-z^2)} \Big( (2+z)^{z+1} - 1 \Big) \frac{3+z}{2(2+z)(1+z)} ~\le~ \frac{3}{2} z + \frac{4}{3}
\]

By $e^{-\frac{1}{2}z^2} \le 1 - \frac{1}{2} z^2 + \frac{1}{8} z^4$, it suffices to show consider
\[
	(2+z)^{z+1} - 1 ~\le~ \frac{(9z+8)(z+1)(z+2)}{3\sqrt{e}(z+3)(1-\frac{1}{2}z^2+\frac{1}{8} z^4)}
\]

We relax $3\sqrt{e} \approx 4.95 < 5$ for nicer constants and move the $1$ to the right-hand-side: 
\[
	(2+z)^{z+1} 
	~\le~ 
	\frac{5z^5 + 15 z^4 + 52 z^3 + 220 z^2 + 376 z + 248}{5(z+3)(z^4 - 4z^2 + 8)}
	~.
\]

We relax the numerator by subtracting $\frac{4}{5} (2z-1)^2$ from it, so that the result is divisible by $z+2)$.
The inequality becomes:
\[
	(2+z)^z 
	~\le~ 
	\frac{25 z^4 + 25 z^3 + 210 z^2 + 648 z + 616}{25(z+3)(z^4 - 4z^2 + 8)}
	~.
\]

We relax the left-hand-side by:
\begin{align*}
	(2+z)^z 
	&
	~\le~ 1 + (\ln2) z  + \frac{1+(\ln2)^2}{2} z^2 + \frac{3-2\ln2-(\ln2)^2}{2} z^3 
	~.
\end{align*}
with equality at $z = 0$ and $z = 1$.
It becomes polynomial inequality:
\begin{align*}
	& 
	\frac{25}{2} \Big( (\ln2)^2 + 2\ln2 - 3 \Big) \cdot x^8 
	~+~ 
	\frac{25}{2} \Big( 2 (\ln2)^2 + 6 \ln 2 -10 \Big) \cdot x^7  \\
	& \qquad \vphantom{\bigg|}
	~-~
	\frac{25}{2} \Big( 7 (\ln2)^2 + 10 \ln 2 - 9 \Big) \cdot x^6 
	~-~
	25 \Big( 4 (\ln2)^2 + 15 \ln 2 - 19 \Big) \cdot x^5 \\
	& \qquad \vphantom{\bigg|}
	~+~
	50 \Big( 5 (\ln2)^2 + 6 \ln2 - 4 \Big) \cdot x^4
	~+~
	25 \Big( 8 (\ln2)^2 + 36 \ln2 - 35 \Big) \cdot x^3 \\
	& \qquad \vphantom{\bigg|}
	~-~
	10 \Big( 30 (\ln2)^2 + 20 \ln 2 - 21\Big) \cdot x^2
	~-~
	8 \Big( 75 \ln 2 - 56 \Big) \cdot x 
	~+~
	16
	~\ge~ 0
	~.
\end{align*}

Rounding every coefficient down to the closest multiple of $0.1$ and multiplying the resulting expression by $10$ for a polynomial with integer coefficients, it suffices to show that:
\begin{equation}
	\label{eqn:long-proof-start}
	- 142 x^8 - 611 x^7 - 162 x^6 + 1670 x^5 + 1280 x^4 - 1551 x^3 - 728 x^2 + 321 x + 160 ~\ge~ 0
	~.
\end{equation}

Next, we will repeatedly remove the highest-degree term with negative coefficients by adding a shifted version of Chebyshev's polynomial, and round the resuling polynomial's coefficients down to the closest integer.
\begin{align*}
& 
\mathrm{Equation}~\eqref{eqn:long-proof-start} \\[2ex]
& \qquad
\Uparrow 
\tag{add $\frac{142}{32768} \cdot \big( T_8 \big( 2z-1 \big) - 1 \big)$, and round} \\[2ex]
& 
-1179 z^7 + 761 z^6 + 889 z^5 + 1646 z^4 - 1645 z^3 - 717 z^2 + 320 z + 160 \ge 0 \\[2ex]
& \qquad
\Uparrow 
\tag{add $\frac{1179}{8096} \big( T_7(2z-1) - 1\big)$, and round} \\[2ex]
& 
-3366 z^6 + 6562 z^5 - 2223 z^4 - 291 z^3 - 943 z^2 + 334 z + 159 ~\ge~ 0 \\[2ex]
& \qquad
\Uparrow 
\tag{add $\frac{1683}{1024} \big( T_6(2z-1) - 1 \big)$, and round} \\[2ex]
& 
-3536 z^5 + 9137 z^4 - 6182 z^3 + 437 z^2 + 215 z + 159 ~\ge~ 0 \\[2ex]
& \qquad
\Uparrow 
\tag{add $\frac{221}{32} \big( T_5(2z-1) - 1 \big)$, and round} \\[2ex]
& \quad~
297 z^4 + 1553 z^3 - 2326 z^2 + 560 z + 145 ~\ge~ 0
~.
\end{align*}

At this point, we can consider the derivative of the left-hand-side, which is a degree-$3$ polynomial with a unique root $z \approx 0.702$ between $0$ and $1$.
At this root, the left-hand-side achieves its minimum value $\approx 1.24 > 0$.
\end{proof}

Applying \Cref{lem:case2-relaxation} to the left-hand-side of \Cref{eqn:case2}, it suffices to prove that:
\begin{equation}
\label{eqn:case2-reduced}
    1 + 6x - \frac{4y}{3} - \frac{(1+y)^2}{x} \big(e^{\frac{2x}{1+y}} - 1 \big) \exp \Big( \underbrace{\vphantom{\bigg|} \frac{x^2-1}{xy} + \frac{2}{y} \ln \frac{1+y}{x} - 2}_{p(x)} \Big) - \frac{2(x - 1)^2}{y} 
    ~\ge~
    0
    ~.
\end{equation}

This is concave in $x$.
It is easy to see that $x$ and $-(x-1)^2$ are concave. 
Further, $\frac{1}{x} \big(e^{\frac{2x}{1+y}} - 1 \big)$ is convex and increasing because $\frac{1}{z}(e^z - 1)$ is. 
Finally, we verify that the exponent $p(x)$ is increasing and convex, so that its composition with the exponential function is also increasing and  convex.
The first- and second-order derivatives of $p(x)$ are:
\begin{align*}
    p'(x) 
    = \frac{(x-1)^2}{y x^2} \geq 0
    ~, \quad
    p''(x)
    = \frac{2(x-1)}{y x^3} \geq 0
    ~.
\end{align*}

In sum, \Cref{eqn:case2-reduced} is concave in $x$ and we only need to check the inequality at the endpoints $x = \max\{1, \frac{y}{2}\}$ and $x = 1 + y$ of its feasible range given in \Cref{eqn:case2-xy-range}.

\paragraph{Proof of \Cref{eqn:case2-reduced} at $x = 1+y$.}
\Cref{eqn:case2-reduced} reduces to:
\[
    7 + \frac{8y}{3} - e^{-\frac{y}{y+1}} (y+1) (e^2-1)
    ~\ge~
    0
    ~.
\]

Relaxing $e^2-1$ into $7$ and re-arranging the terms, it suffices to prove that:
\[
    e^{-\frac{y}{y+1}} \le 1 - \frac{13}{21} \frac{y}{y+1}
    ~.
\]

This follows by $e^{-z} \le 1 - (1-\frac{1}{e}) z$ for $0 \le z \le 1$, noting that $1-\frac{1}{e} \approx 0.632 > 0.619 \approx \frac{13}{21}$.

\paragraph{Proof of \Cref{eqn:case2-reduced} at $x = 1$ ($y \le 2$).}
\Cref{eqn:case2-reduced} reduces to:
\[
    7 - \frac{4y}{3} - \frac{(y+1)^{2+\frac{2}{y}}}{e^2} \cdot (e^{\frac{2}{y+1}}-1) 
    ~\ge~
    0
    ~.
\]

We have $(1+y)^{0.82+\frac{2}{y}} \leq e^2$ by Fact~\ref{fact:1}.
Applying this to the above inequality, rearranging terms, and changing variables with $z = \frac{1}{1+y} \in [\frac{1}{3}, 1]$, the inequality reduces to:
\[
	e^{2z} 
	~\le~ 
	1 - \frac{4}{3} z^{0.18} + \frac{25}{3} z^{1.18}
    ~.
\]

By the convexity of $e^{2z}$:
\[
	e^{2z} < - 0.7 + 8.2z
	~,
\]
which can be verified at $x = \frac{1}{3}$ and $x = 1$.
Putting it into the inequality and rearranging terms again, the problem further reduces to proving:
\[
	z^{0.18} \ge \frac{246z -  51}{250z - 40}
	~.
\]

Further linearize the left-hand-side by the concavity of $z^{0.18}$, we have:
\[
	z^{0.18} \ge \frac{7 + 3z}{10}
	~,
\]
which can be verified at $z = 1$ and $z = \frac{1}{3}$.
The inequality then becomes a quadratic one:
\[
	75 z^2 - 83 z + 23 \ge 0
	~,
\]
which holds by AM-GM since $2 \sqrt{75 \times 23} > 83$.

\paragraph{Proof of \Cref{eqn:case2-reduced} at $x = \frac{y}{2}$ ($y > 2$).}
\Cref{eqn:case2-reduced} reduces to (written in $x > 1$):
\[
    \frac{7x}{3} + 3 - \frac{1}{x} - \frac{(1+2x)^2}{x} \cdot \big( e^{\frac{2x}{2x+1}}-1 \big) \cdot e^{-\frac{3}{2}-\frac{1}{2x^2} + \frac{1}{x} \ln(2 + \frac{1}{x})} 
    ~\ge~ 
    0
    ~.
\]

Dividing both sides by $x$ and changing variables with $z = \frac{1}{x} < 1$:
\[
	\frac{7}{3} + 3z- z^2 - (2+z)^2 \cdot \big( e^{\frac{2}{2+z}} - 1 \big) \cdot e^{-\frac{3}{2} - \frac{1}{2} z^2 + z \ln (2+z)}
	~\ge~
	0
	~.
\]

The exponent of the last term can be relaxed to $-3/2 + z \ln 2$ because:
\[
	z \ln (2 + z) 
	~=~ 
	z \ln 2 + z \ln \Big( 1 + z/2 \Big) 
	~\le~ 
	z \ln 2 + z^2/2
	~.
\]

It remains to show:
\[
	7/3 + 3z- z^2 
	~\ge~ 
	(2+z)^2 \cdot \big( e^{\frac{2}{2+z}} - 1 \big) \cdot 2^z \cdot e^{-3/2}
	~.
\]

The left-hand-side is concave and the right-hand-side is convex.
The convexity of $2^z$ is obvious, while the convexity of $(2+z)^2 \cdot \big( e^{\frac{2}{2+z}} - 1 \big)$ follows by, e.g., taking the Taylor expansion of function $e^x$ at $x = \frac{2}{2+z}$.
Finally, we can verify that the inequality holds at $z = 0$ (i.e., $7/3 > 4 (e-1) e^{3/2}$) and $z = 1$ (i.e., $13/3 > 18 (e^{\frac{2}{3}} - 1) e^{-3/2}$).

\subsubsection{Case 3: $t \leq t+c \leq \beta$}

We have:
\begin{align*}
    F(t,c,\beta)
    &
    ~=~ 4t + 8\beta - 2 \int_0^{\beta} 1 \dif{\theta} + 2 \int_{\beta}^{\infty} \exp \Big(\frac{2 \beta}{\theta} + \frac{2t+c}{\beta} - 2\ln \frac{\theta}{\beta} - 2\Big) \dif{\theta} - (4\beta - 4t - 2c) \\
	&
	~=~ 8t + 2c + 2\beta - \beta \Big( 1 - \frac{1}{e^2} \Big)e^{\frac{2t+c}{\beta}}
    ~.
\end{align*}

Dividing both sides by $\beta$, we need to verify the non-negativeness of:
\[
    \frac{8t + 2c}{\beta} + 2 - \Big( 1 - \frac{1}{e^2} \Big)e^{(2t+c)/\beta}~.
\]

By $\beta \geq t+c$, we have:
\begin{align*}
    \frac{2t+c}{\beta}
    &
    = \frac{4t + c + 2(t+\beta)}{3\beta} 
    \le \frac{1}{3} \Big( \frac{4t+c}{\beta} + 2 \Big)
    ~.
\end{align*}

Changing variables with $x = (4t+c)/s \in (0,4]$, the inequality becomes:
\[
    2x + 2 - \Big(1 - 1/e^2\Big) e^{(x+2)/3} ~\ge~ 0
    ~.
\]

This is concave in $x$, and holds at $x = 0$ (left $\approx 0.316 > 0$) and $x = 4$ (left $\approx 3.611 > 0$).

\end{document}